\algrenewcommand\alglinenumber[1]{\tiny #1:}
\def\eqref#1{equation~\ref{#1}}
\def\1{\bm{1}}
\DeclareMathAlphabet{\mathsfit}{\encodingdefault}{\sfdefault}{m}{sl}
\SetMathAlphabet{\mathsfit}{bold}{\encodingdefault}{\sfdefault}{bx}{n}
\algnewcommand{\LineComment}[1]{\State  \(\triangleright\) #1 \hfill~}
\newcommand*\circled[1]{\tikz[baseline=(char.base)]{\node[shape=circle,draw,inner sep=0.5pt,font=\small] (char) {#1};}}
\title{$\sys$§: Evaluating Text-to-SQL Evaluation with Formal Verification}
\author{%
Rocky Klopfenstein$^{1}$\thanks{Equal contribution.}\,\,, Yang He$^{2\,*}$, Andrew Tremante$^{1\,*}$, Yuepeng Wang$^{2}$,\\
\textbf{Nina Narodytska}$^{3}$, \textbf{Haoze Wu}$^{1,3}$\\
$^{1}$Amherst College \quad $^{2}$Simon Fraser University \quad $^{3}$VMware Research by Broadcom\\
\texttt{\{rklopfenstein27,atremante26,hwu\}@amherst.edu} \\
\texttt{\{yha244,yuepeng\}@sfu.ca}, \texttt{nina.narodytska@broadcom.com}
}
\newcommand{\revise}[1]{{\color{blue}{#1}}}
\renewcommand{\revise}[1]{{#1}}
\begin{document}

\maketitle

\begin{abstract}
Community-driven Text-to-SQL evaluation platforms play a pivotal role in tracking the state of the art of Text-to-SQL performance. The reliability of the evaluation process is critical for driving progress in the field. Current evaluation methods are largely test-based, which involves comparing the execution results of a generated SQL query and a human-labeled ground-truth on a static test database. Such an evaluation is optimistic, as two queries can coincidentally produce the same output on the test database while actually being different. In this work, we propose a new alternative evaluation pipeline, called \sys, where a formal bounded equivalence verification engine actively searches for a database that differentiates the generated and ground-truth SQL queries. We develop techniques to extend existing verifiers to support a richer SQL subset relevant to Text-to-SQL. A performance evaluation of ten Text-to-SQL methods on the high-profile BIRD dataset suggests that test-based methods can often overlook differences between the generated query and the ground-truth. Further analysis of the verification results reveals a more complex picture of the current \txttosql evaluation. 
\end{abstract}

\section{Introduction}

\txttosql is one of the fundamental building blocks for designing natural language (NL) interfaces that enable users to access and analyze structured data sources. Translating human questions into executable database queries bridges the gap between non-technical users and complex data systems. This functionality underpins modern chatbots and smart assistants across a wide range of industrial applications, such as observability platforms for monitoring system health~\citep{bitsai,splunk}, critical business processes~\citep{amazon}, and healthcare~\citep{msd2024}.

Due to its practical relevance for commercial products, \txttosql has recently attracted significant attention, 
leading to the development of a wide range of solutions~\citep{shi2024survey}. New \txttosql frameworks are announced regularly, and thanks to community-driven evaluation platforms such as BIRD~\citep{birdleader} and Spider~\citep{lei2024spider}, their performance can be benchmarked and compared in near real time. 
%
%
Given the pivotal role these platforms play in tracking the state of the art, the reliability of their evaluation processes is crucial for driving progress in the field.

In this paper, we take a close look at the evaluation process for the accuracy of \txttosql methods. Currently, the process usually involves checking whether the SQL queries generated by a method produce results equivalent to those of the \emph{gold SQLs} (i.e., human-written ground-truth SQLs), under a pre-defined notion of equivalence. Most state-of-the-art evaluation frameworks~\citep{birdleader,lei2024spider} perform this equivalence check through \emph{testing}: executing both queries on a static test database and comparing the results. If the results match, the generated SQL is labeled as correct. Although widely used in practice, the testing-based approach has clear limitations. Because the check is performed on a single database, two different SQL queries may appear equivalent by chance, purely due to the specific data contained in that database. This raises an important question: when the test-based approach marks a generated SQL as correct, how often does it truly produce the same results as the gold SQL in general? The next broader question is: to what extent can the current evaluation process accurately measure the performance of \txttosql methods?

We investigate these questions by exploring an alternative correctness evaluation methodology. Instead of relying on test databases to assess equivalence, we propose to actively \emph{search} for databases that can differentiate the generated SQL from the gold SQL. The search-based evaluation naturally provides stronger correctness guarantees and enables a more rigorous measurement of accuracy. Since providing complete equivalence guarantee is in general undecidable, we perform SMT-based bounded verification~\citep{verieql}, which searches for differentiating databases with specified sizes. We develop a new \txttosql evaluation workflow, $\sys$, on top of those verification techniques. We significantly extend these techniques to support a new set of SQL operators over strings and dates which are commonly used for \txttosql benchmarks.

Experiments on ten state-of-the-art \txttosql methods on the popular BIRD dataset~\citep{birdleader} suggest that the reported accuracy of these methods drops by 11.3\%–14.2\% when switching from the official test-based evaluation to \sys. The varying levels of decrease in absolute precision also lead to substantial changes in the order of ranking of the \txttosql methods. 
Moreover, \sys produces minimal differentiating databases, which enables us to pinpoint the sources of inconsistencies between the generated and gold SQLs. Analysis of these databases uncovers several shortcomings of the current \txttosql evaluation process. Most surprisingly, we find that when the predicted SQL disagrees with the ground truth, it is often the gold SQL that is incorrect. 

To summarize, our contributions include:\vspace{-0.5em}
\begin{itemize}[left=4pt, itemsep=0.01cm]
    \item \sys, a new evaluation pipeline for \txttosql powered by formal equivalence verification; 
    \item 
       novel SMT-encoding for a set of SQL operators over strings and dates, and proof of its correctness;
    \item practical strategies for the efficient deployment of \sys;
    \item a large-scale evaluation of ten state-of-the-art \txttosql methods on the BIRD dataset, which reveals several potential shortcomings of current \txttosql evaluation.
\end{itemize}

\section{Preliminaries}

We provide background on \txttosql and formal equivalence checking. Due to space limitation, an overview of related work is present in App.~\ref{sec:relatedwork}. 

\noindent\textbf{Text-to-SQL problem statement.} 
Given a natural language query $\queryshortnl$ and a database $\db$ with schema $\schema$, the goal of \txttosql is to map $(\queryshortnl,\db)$ to an SQL query $\queryshort$, such that executing $\queryshort$ on $\db$, denoted $\queryshort(\db)$, produces an output relation (table) that answers $\queryshortnl$.

\noindent\textbf{\txttosql evaluation.} 
The main evaluation mechanism for a \txttosql framework relies on a gold SQL query produced by a human annotator. Hence, for each natural language query $\queryshortnl$ over a database, there exists a gold SQL query $\goldqueryshort$ that represents the human-labelled ground truth of translating $\queryshortnl$ into SQL. Given a generated SQL $\genqueryshort$ and the corresponding gold query $\goldqueryshort$, current evaluation performs the following check:\vspace{-0.5em}
\begin{align}
\ex(\genqueryshort,\goldqueryshort,\dbtest)=
\begin{cases}\label{eq:ex}
1, & \text{if }  \forall r.~ r \in \genqueryshort(\dbtest) \leftrightarrow r \in \goldqueryshort(\dbtest) \\
0,  & \text{otherwise},
\end{cases} 
\end{align}
where \dbtest is a test database provided by the benchmark set, 
and $r$ denotes a row in the result table. In words, $\ex$ compares whether the two tables, $\genqueryshort(\dbtest)$ and $\goldqueryshort(\dbtest)$, contain the same set of rows. In order to more rigorously analyze the equivalence between $\genqueryshort$ and $\goldqueryshort$, we use formal verification to search for a differentiating database $\db_{\cexv}$ such that $\ex(\genqueryshort, \goldqueryshort, \db_{\cexv}) = 0$. 

\noindent\textbf{Bounded SQL equivalence checking.}
Given two SQL queries $\queryshort_1$ and $\queryshort_2$ over a schema $\schema$ and an upper bound $K$ on the relation size, the problem of bounded equivalence checking is to decide whether $\queryshort_1$ and $\queryshort_2$ are equivalent, denoted $\queryshort_1 \simeq_{\schema, K} \queryshort_2$, for all databases $\db$ conforming to $\schema$ such that each relation in $\db$ has at most $K$ tuples. 
Formally,\vspace{-0.5em}
\[
\queryshort_1 \simeq_{\schema, K} \queryshort_2 \overset{\text{def}}{=} \forall \db \in \textrm{Instances}(\schema).~ \forall R \in \textrm{Relations}(\db).~ |R| \leq K \Rightarrow \queryshort_1(D) = \queryshort_2(D),
\]
where Instances($\schema$) represents all database instances  conforming to $\schema$, and Relations($\db$) represents all relations in $\db$.
In general, the goal is either to prove the bounded equivalence holds, or to find a counterexample database $\db_{\cexv}$ that disproves the equivalence. 
Compared with unbounded equivalence checking, which is generally undecidable~\citep{mohamed2024verifying}, bounded equivalence checking can handle a more expressive SQL subset and is guaranteed to uncover small counterexamples (if they exist). These features make bounded verification suitable for large-scale \txttosql evaluation.

\noindent\textbf{\verieql.} 
\verieql~\citep{verieql} is a recently proposed bounded equivalence checker for SQL queries and, to the best of our knowledge, supports the most expressive subset of SQL among existing tools. It reduces the verification task to a satisfiability problem by encoding the symbolic execution of the two SQL queries and the \emph{non-equivalence} of the execution results as a satisfiability modulo theories (SMT) formula~\citep{barrett2018satisfiability}, which can be solved by an off-the-shelf SMT solver~\citep{de2008z3}. The bounded equivalence property holds if and only if the formula is unsatisfiable, which means it is not possible to find a database that result in different execution results. Otherwise, a satisfying interpretation of the formula can be decoded to a counterexample database. We significantly extend \verieql to support our verification use cases.


\section{Motivating examples} 
\label{sec:overview}

\begin{figure}[t!]
\setlength{\belowcaptionskip}{-15pt}
    \centering
    \scriptsize
    \begin{tabular}{@{}l@{}}
      \hline
         \emph{$\queryshortnl_{1}$: "Which is the youngest patient with an abnormal} \\\emph{anti-ribonuclear protein level?} \\
        \emph{Please list his or her date of birth."}\\
      \hline
\begin{lstlisting}[
          label=lst:q1,
           language=SQL,
           showspaces=false,
            keywordstyle=\scriptsize\color{blue}\ttfamily,
           basicstyle=\scriptsize\ttfamily,
           commentstyle=\color{gray},
           escapeinside={(*@}{@*)},           
        mathescape=true
        ]
/*Gold SQL $\goldqueryshort$*/: 
SELECT T1.birthday 
FROM patient AS T1 
INNER JOIN laboratory AS T2 
ON T1.ID = T2.ID 
WHERE (*@{\fboxsep=1pt\colorbox{pink}{T2.rnp != '-' OR '+-' }}@*)
ORDER BY T1.birthday DESC LIMIT 1
\end{lstlisting} \\

\begin{lstlisting}[
          label=lst:q1,
           language=SQL,
           showspaces=false,
            keywordstyle=\scriptsize\color{blue}\ttfamily,
           basicstyle=\scriptsize\ttfamily,
           commentstyle=\color{gray},
           escapeinside={(*@}{@*)},
        mathescape=true
        ]
/*Generated SQL $\genqueryshort$*/:
SELECT patient.birthday 
FROM patient 
INNER JOIN laboratory 
ON patient.ID = laboratory.ID 
WHERE NOT (*@{\fboxsep=1pt \colorbox{pink}{laboratory.rnp IN ('-', '+-')}}@*)
ORDER BY patient.birthday 
DESC LIMIT 1	
\end{lstlisting}\\
      \hline
   \end{tabular}
      \hfill
       \begin{tabular}{l@{}}
       
      \hline
        \emph{$\queryshortnl_{2}$: "How many male patients who underwent testing between} \\
        \emph{1995 and 1997 and were subsequently diagnosed with }\\      
        \emph{Behcet disease did not stay in the hospital for treatment?"}\\

      \hline
      \begin{lstlisting}[
          label=lst:q1,
           language=SQL,
           showspaces=false,
            keywordstyle=\scriptsize\color{blue}\ttfamily,
           basicstyle=\scriptsize\ttfamily,
           commentstyle=\color{gray},
            escapeinside={(*@}{@*)},
        mathescape=true
        ]
/*Gold SQL $\goldqueryshort$*/: 
SELECT (*@{\fboxsep=1pt\colorbox{pink}{COUNT(T1.id)}}@*) FROM patient AS T1 
INNER JOIN examination AS T2 ON T1.id = T2.id 
WHERE T2.diagnosis = 'Behcet'  AND T1.sex = 'M' 
AND STRFTIME('%Y', T2.examination_date) 
BETWEEN '1995' AND '1997' AND T1.admission = '-';
\end{lstlisting} \\
\begin{lstlisting}[
          label=lst:q1,
           language=SQL,
           showspaces=false,
            keywordstyle=\scriptsize\color{blue}\ttfamily,
           basicstyle=\scriptsize\ttfamily,
           commentstyle=\color{gray},
            escapeinside={(*@}{@*)},
        mathescape=true
        ]
/*Generated SQL $\genqueryshort$*/: 
SELECT (*@{\fboxsep=1pt\colorbox{pink}{COUNT(DISTINCT patient.id)}}@*)  
FROM patient INNER JOIN examination 
ON patient.id = examination.id 
WHERE patient.sex = 'M' AND 
examination.examination_date
BETWEEN '1995-01-01' AND '1997-12-31' 
AND examination.diagnosis = 'Behcet' 
AND patient.admission = '-';	
\end{lstlisting} \\
  \hline
    \end{tabular}
\vspace{-2mm}
\caption{Examples of cases where the generated SQL produces the same output as the gold SQL on the BIRD's official test database, but \sys 
 finds a database that differentiates the the queries. The parts that explain the mismatch are highlighted. For $\queryshortnl_{1}$, the gold SQL is incorrect. And for $\queryshortnl_{2}$, both SQL queries can be right depending on the interpretation of the NL question.}
    \label{tab:motivationexamples}
\end{figure}

Before we describe our new verification-based evaluation pipeline, we first discuss main sources of mismatches between the gold SQL and the generated SQL in \txttosql evaluation.
%
There are three main such sources: (1) NL query $\queryshortnl$ is ambiguous, so both the gold and generated SQL queries are justifiable interpretations; (2) $\queryshortnl$ is unambiguous, but the gold SQL query is incorrect (gold SQLs are created manually and thus prone to human errors); (3) $\queryshortnl$ is unambiguous, the gold SQL query is correct, but the generated SQL query is incorrect. 
Our framework focuses on checking equivalence between the gold SQL and the generated SQL, treating the latter as the best-effort, semantically correct formalization of $\queryshortnl$. We show that \sys can successfully detect incorrect generated SQLs that are overlooked by existing test-based evaluation. Perhaps more surprisingly, \sys also allows us to spot the first and second sources of mismatch. Fig.~\ref{tab:motivationexamples} shows two illustrative examples.

\begin{example}\label{exp:one}
Consider the query \emph{$\queryshortnl_{1}$: "Which is the youngest patient with an abnormal anti-ribonuclear protein level? Please list his or her date of birth."} together with the gold  and generated  SQL queries. 
On the development database that BIRD provides, both queries return ``1989-08-28''. However,  \sys found a database on which these two queries are not equivalent (Appendix~\ref{app:exm:one}). In fact, we observe that all ten frameworks that we tested generated SQLs that are not equivalent to the gold query. 
Upon closer inspection, we find that the gold query is incorrect: its \texttt{WHERE} clause is equivalent to \texttt{T2.rnp != '-' OR FALSE}, as a string literal like \texttt{'+-'} is interpreted as \texttt{FALSE} in a boolean context, which is not the intended behavior.  \qed
\end{example}

\begin{example}\label{exp:two}  
Consider another query 
        \emph{$\queryshortnl_{2}$: "How many male patients who underwent testing between 1995 and 1997 and were subsequently diagnosed with Behcet disease did not stay in the hosiptal for treatment?"}
together with the gold and generated SQL queries.
These two queries both return ``2'' on the BIRD test database. However, the two queries are clearly not equivalent (\texttt{id} is not a primary key of the \texttt{examination} table therefore duplicates are allowed): the generated query counts all examinations per patient, whereas the gold query counts only distinct patients. \sys easily found a database that differentiate the two queries (Appendix~\ref{app:exm:two}). Note that depending on the interpretation of the question, both SQL queries can be correct: the gold SQL can be reasonable if the goal is to understand the hosptial workload, while the generated SQL can be reasonable if the goal is to understand the number of unique patients. Hence, we conclude that $\queryshortnl_{2}$ is ambiguous.
\qed
\end{example}

Note that these examples were overlooked by existing test-based evaluations. On the other hand, using \sys, we found that undetected cases like those are quite common in the BIRD dataset.

\section{Methodology}

In this section, we introduce new SMT-encodings for a number of SQL operators over string and date types that were not supported by existing bounded equivalence verification methods but frequently appear in \txttosql benchmarks. Then we present our verification-based evaluation pipeline \sys and discuss practical implementation strategies.
\subsection{Equivalence Checking for SQL Queries} \label{subsec:methodology:verieql}

To understand our extension, let us first walk through Example~\ref{exp:old-encoding} to understand how equivalence checking can be encoded as an SMT formula in a verifier like \verieql~\citep{verieql}.

\begin{example}\label{exp:old-encoding}
Consider a schema $\schema = \{ R \mapsto \{id\!:\! \mathrm{int}, dob\!:\! \mathrm{date}\} \}$ and the following two queries:
\vspace{-3pt}
\[
\scriptstyle
Q_1 = \sqlselect ~id~ \sqlfrom ~R~ \sqlwhere  {\fboxsep=1pt~\colorbox{pink}{$\scriptstyle id>1$}} 
\hspace{25pt}
Q_2 = \sqlselect ~id~ \sqlfrom ~R~ \sqlwhere  {\fboxsep=1pt~\colorbox{pink}{$\scriptstyle id>2$}} \\
\]
We describe how to encode equivalence checking for a bound ($K$) of 1 as an SMT formula. First, variables are introduced to represent the database and the execution results. This includes a symbolic database $\db = \{R \mapsto [t_1]\}$, where $t_1 = [x_1, x_2]$ is a tuple in $R$, and $x_1$, $x_2$ are integer variables. In addition, tuples $t_2 = [x_3]$ and $t_3 = [x_4]$, are introduced to encode query results: $Q_1(D) = [t_2]$ and $Q_2(D) = [t_3]$, where $x_3$, $x_4$ are both integer variables. Note that the number of tuples in $R$ is equal to the bound $K$. Also note that a date ($x_2$) is represented as an integer, which is sufficient here but not in general. We later introduce precise encoding of date to support richer operations.

We now describe the constraints over the variables. The first set of constraints ensures that $t_2$ and $t_3$ correctly capture the semantics of $Q_1$ and $Q_2$. In this case, $t_2$ tuple is constrained by $\Phi_{Q_1} = (x_1 > 1 \rightarrow (x_3 = x_1\land \neg \del(t_2))) \land (x_1 \leq 1 \rightarrow \del(t_2))$, where $\del$ is an uninterpreted function denoting the non-existence of a symbolic tuple. The formula $\Phi_{Q_1}$ ensures that only interpretations satisfying $x>1$ can populate a concrete tuple; otherwise, $Q_1$'s result is empty. 
Similarly, $t_3$ is constrained by $\Phi_{Q_2} = (x_1 > 2 \rightarrow (x_4 = x_1\land \neg \del(t_3))) \land (x_1 \leq 2 \rightarrow \del(t_3))$. 

The second set of constraints encodes that $Q_1(D)$ and $Q_2(D)$ returns \emph{different} results. In this case, it is simply $t_2 \neq t_3$. The full encoding is a conjunction of all constraints: $\Phi_{Q_1} \land \Phi_{Q_2} \land (t_2 \neq t_3)$, whose satisfiability can be checked by an SMT solver. A satisfying interpretation to this conjunction corresponds to a database instance that differentiates $Q_1$ and $Q_2$. For example, the queries are not equivalent under the interpretation $\interpretation = \{x_1 \mapsto 2\}$. \qed
\end{example}

\begin{figure}[!t]
\scriptsize
\centering
\[
\hspace{-15pt}
\begin{array}{rcl}
\text{Query}~Q_r & ::= & Q ~|~ \text{OrderBy}(Q, \vec{E}, b) \\
\text{Subquery}~Q & ::= & R ~|~ \proj_L(Q) ~|~ \filter_\phi(Q) ~|~ \rename_R(Q) ~|~ Q \allcoll Q ~|~ \text{Distinct}(Q) ~|~ Q \alljoin Q
                 ~|~ \text{GroupBy}(Q, \vec{E}, L, \phi)  ~|~ \text{With}(\vec{Q}, \vec{R}, Q)\\
\text{Attr List}~L & ::= & id(A) ~|~ \rename_a(A) ~|~ L, L \\
\text{Attr}~A & ::= & {\text{Cast}(\phi)} ~|~ E ~|~ \mathcal{G}(E) ~|~ {A \allarith A} \\
\text{Pred}~\phi & ::= & b ~|~ \nullv ~|~ {A \alllogic A} ~|~ \text{IsNull}(E) ~|~ \vec{E} \in \vec{v} ~|~ \vec{E} \in Q
                 ~|~  \phi \land \phi ~|~ \phi \lor \phi ~|~ \neg \phi \\
& ~|~ & \textbf{PrefixOf}(s, E) ~|~ \textbf{SuffixOf}(s, E) ~|~ \textbf{Like}(s, E) ~|~ \revise{\textbf{Contain}(s, E)} \\
\text{Expr}~E & ::= & a ~|~ v ~|~ E \allarith E ~|~ \text{ITE}(\phi, E, E) ~|~ \text{Case}(\vec{\phi}, \vec{E}, E) ~|~ \textbf{SubStr}(E_1, E_2, E_3)  ~|~ \revise{\textbf{Concat}(E_1, E_2)} \\
& ~|~ & \textbf{Strftime}(\kappa, E) ~|~ \textbf{JulianDay}(E)  ~|~ \revise{\textbf{DateShift}(E, i, \delta)} ~|~ \textbf{ToInt}(E) ~|~ \textbf{ToDate}(E) ~|~ \textbf{ToStr}(E) \\
\text{Join Op}~\alljoin & ::= & \times ~|~ \ijoin_\phi ~|~ \ljoin_\phi ~|~ \rjoin_\phi ~|~ \fjoin_\phi \\
\text{Collection Op}~\allcoll & ::= & \cup ~|~ \cap ~|~ \setminus ~|~ \uplus ~|~ \cplus ~|~ - \\
\text{Arith Op}~\allarith & ::= & + ~|~ - ~|~ \times ~|~ / ~|~ \% \\
\text{Logic Op}~\alllogic & ::= & \leq ~|~ < ~|~ = ~|~ \neq ~|~ > ~|~ \geq \\
\end{array}
\]
\[
\begin{array}{c}
R \in \text{Relation Names} \quad a \in \text{Attribute Names} \quad v \in \set{\nullv} \cup \textbf{Integers} \cup \textbf{Dates} \cup \textbf{Strings} \quad b \in \text{Bools} \quad \quad \revise{i \in \textbf{Integers}} \\ 
s \in \textbf{Strings}  \quad \mathcal{G} \in \{\text{Count}, \text{Min}, \text{Max}, \text{Sum}, \text{Avg}\} \quad \kappa \in \{\textbf{``\%Y''}, \textbf{``\%M''}, \textbf{``\%d''}\} \quad \revise{\delta \in \{\textbf{``Year''}, \textbf{``Month''}, \textbf{``Day''}\}}
\end{array}
\]
\vspace{-10pt}
\caption{Extended syntax of SQL Queries. 
New features are in bold.}
\label{fig:syntax-sql}
\vspace{-10pt}
\end{figure}

\noindent\textbf{Extension in SQL encoding.}
Existing bounded SQL equivalence checker still lacks support for several important features, including precise encoding of dates and strings, which are highly relevant in \txttosql applications.
Furthermore, SQL supports computations across many different data types with implicit type casting (e.g., 1 + ``a'' and \texttt{date}(``2000-01-01'') + ``1''), which poses significant challenges to establish precise semantics and encodings. To address these limitations and challenges, we introduce techniques to support dates and strings, along with their manipulations, in the SQL equivalence checker \verieql. We also introduce type conversions across \nullv, integers, dates, and strings for implicit type casting. For example, in the gold SQL for $\queryshortnl_{2}$ (Fig.~\ref{tab:motivationexamples}), the output of the \texttt{STRFTIME} function is implicitly converted from a date to an integer.  

Fig.~\ref{fig:syntax-sql} presents our supported SQL grammar. 
Specifically, the query language introduces type conversions among various data types (e.g., $\text{ToInt}(E)$, $\text{ToDate}(E)$, and $\text{ToString}(E)$), which allows us to precisely establish the semantics of dates and strings and enhances the expressiveness of our SQL subset. We also incorporate additional expressions and predicates for data and string manipulations, such as date formatting $\text{Strftime}(\kappa, E)$, Julian day $\text{JulianDay}(E)$, string pattern matching $\text{PrefixOf}(s, E)$, $\text{SuffixOf}(s, E)$, $\text{Like}(s, E)$, and string truncation $\text{SubStr}(E_1, E_2, E_3)$. The symbolic encoding for these extended expressions and predicates is formally presented in Appendix~\ref{app:encoding}.

As an example, we describe how to precisely encode a date variable, which is very common in \txttosql. For instance, the date of birth and the time of a transaction are naturally modeled with the date type. Previously, date was encoded as a single integer variable (see Example~\ref{exp:old-encoding}). Although this coarse representation still enables the encoding of certain date operations (e.g., comparison), it does not necessarily support all date operations, such as date-formatting, which is used in the gold SQL query for $\queryshortnl_{2}$ in Fig.~\ref{tab:motivationexamples}. As a date can be viewed as a triplet (year, month, day), we introduce three integer variables $y$, $m$, and $d$, and constrain their values with the following formula $\Phi$:
\[
\begin{array}{l}
\Phi  ~=~  \Phi_1 \land \Phi_2 \land \Phi_3, ~\text{where}~ \Phi_1  ~=~ \text{MIN\_YEAR}~ \leq y \leq ~\text{MAX\_YEAR},~  \Phi_2  ~=~  1 \leq m \leq 12,\\
\Phi_3  =  1 \leq d \land (\lor_{c \in \{1,3,5,7,8,10,12\}} m=c \rightarrow d \leq 31) \\
 \land (m=2 \rightarrow d \leq 28 + \site(\sleapy(y), 1, 0)) \land (\lor_{c \in \{4,6,9,11\}} m=c \rightarrow d \leq 30) \\
\end{array}
\] 
The term $\sleapy(y)$ encodes the leap year condition: $y\,\%4\,=0 \land (y\,\%\,100 \neq 0 \lor y\,\%\,400=0)$. Constraints $\Phi_1$, $\Phi_2$, and $\Phi_3$ restrict the possible values of the year, the month, and the day, respectively. For example $\Phi_1$ specifies the valid range of the year, which is specific to the database engine. For example, \sqlite only accepts dates between ``0000-01-01'' and ``9999-12-31''; 
in which case MIN\_YEAR is 0 and MAX\_YEAR is 9999. This refined representation allows us to precisely encode a rich set of date operations and analyze more SQL queries compared to the previous encoding. 



\noindent\textbf{Equivalence under set semantics.}
SQL equivalence checkers typically support equivalence under bag semantics and list semantics. However, some \txttosql evaluation platforms, such as BIRD~\citep{birdleader}, by default adopt equivalence under set semantics (see~\eqref{eq:ex}). This can be expressed as an SMT constraint. Given two query results with symbolic tables $R_1 = [t_1, \ldots, t_n]$ and $R_2 = [r_1, \ldots, r_m]$, the condition that $R_1$ and $R_2$ are equivalent under set semantics is as follows:
\vspace{-0.5em}
\begin{equation} \label{eq:set}
\bigwedge_{i=1}^n \left( \neg \del(t_i) \rightarrow \lor_{j=1}^m (\neg \del(r_j) \land t_i = r_j) \right)
\land 
\bigwedge_{j=1}^m \left( \neg \del(r_j) \rightarrow \lor_{i=1}^n (\neg \del(t_i) \land r_j = t_i) \right)
\end{equation}
On a high level, equivalence is defined by mutual set containment: $R_1 = R_2$ iff $R_1 \subseteq R_2$ and $R_2 \subseteq R_1$. But since some tuples might be deleted due to \texttt{WHERE} clauses, we restrict set containment to non-deleted tuples, i.e., those satisfying $\neg \del(t)$.

\noindent\textbf{Correctness of the encodings.} We now state the correctness of our symbolic encoding for the extended expressions and predicates, as well as the equivalence under set semantics. Proof of these theorems is 
in Appendix~\ref{app:proof}. As we encode the symbolic execution of queries, to prove the correctness of our approach, we need to show that our symbolic execution coincides with the concrete execution. This involves showing that given an expression $E$, the satisfying interpretation of $E$'s symbolic execution result is identical to the concrete execution result of $E$. Thm.~\ref{thm:expression} states that formally. 

\begin{restatable}[Correctness of expression encoding]{theorem}{correctexpression}
\label{thm:expression}
Let $\db$ be a database over schema $\schema$, $xs$ be a tuple list, and $E$ be an expression.
Consider a symbolic database $\context$ over $\schema$, a list of symbolic tuples $\tuples$, and $E$'s symbolic encoding $\denot{E}_{\schema, \context, \tuples}$.
For any satisfying interpretation $\interpretation$ with $\interpretation(\context) = \db \land \interpretation(\tuples) = xs$, evaluating the expression $E$ over the database $\db$ and the tuple list $xs$ yields the interpretation of $E$'s symbolic encoding $\interpretation(\denot{E}_{\schema, \context, \tuples})$, i.e., 
$
\interpretation(\context) = \db \land \interpretation(\tuples) = xs \Rightarrow \denot{E}_{\db, xs} = \interpretation(\denot{E}_{\schema, \context, \tuples})
$.
\end{restatable}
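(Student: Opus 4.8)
The plan is to prove the statement by structural induction on the expression $E$, following the grammar in Fig.~\ref{fig:syntax-sql}. Because expressions and predicates are mutually recursive (an expression may contain a predicate through $\text{ITE}(\phi, E, E)$ and $\text{Case}(\vec\phi, \vec E, E)$, while a predicate may contain expressions through $A \alllogic A$, $\text{IsNull}(E)$, and the new string predicates), the induction cannot be carried out for expressions in isolation. I would therefore state a companion correctness lemma for predicates --- asserting that for any satisfying $\interpretation$ with $\interpretation(\context)=\db$ and $\interpretation(\tuples)=xs$, the interpretation of $\phi$'s symbolic encoding equals the concrete truth value of $\phi$ over $\db$ and $xs$ --- and prove the two statements by a single simultaneous induction on the combined term structure. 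Throughout, the key observation is that $\interpretation$ acts as a homomorphism: it commutes with every background operation of the SMT theories used in the encoding (linear integer arithmetic, the theory of strings, the uninterpreted symbols such as $\del$, and the three-integer date representation with its leap-year constraints).

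The base cases are direct. For a literal $v \in \set{\nullv} \cup \text{Integers} \cup \textbf{Dates} \cup \textbf{Strings}$, the symbolic encoding $\denot{v}_{\schema, \context, \tuples}$ is the corresponding constant term, so $\interpretation(\denot{v}_{\schema, \context, \tuples}) = v = \denot{v}_{\db, xs}$. For an attribute $a$, the encoding projects the symbolic cell for $a$ out of the environment $\tuples$; since $\interpretation(\tuples)=xs$ by hypothesis, projection commutes with $\interpretation$, and the interpreted symbolic cell coincides with the concrete value of $a$ in $xs$. The only subtlety in the base cases is the propagation of $\nullv$, which must be tracked consistently between the null markers in the encoding and the concrete semantics.

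For the inductive step, the compositional constructs are routine once the homomorphism property is in place: for $E_1 \allarith E_2$ I would apply the induction hypothesis to $E_1$ and $E_2$ and then use that $\interpretation$ distributes over the SMT arithmetic operator matching $\allarith$; for $\text{ITE}$ and $\text{Case}$ I would invoke the companion predicate lemma on the guard(s) and the induction hypothesis on the branches, after which the encoded conditional selects the same branch that concrete evaluation does. The genuinely new work --- and the main obstacle --- lies in the added operators $\textbf{SubStr}$, $\textbf{Strftime}(\kappa, E)$, $\textbf{JulianDay}(E)$, $\textbf{ToInt}(E)$, $\textbf{ToDate}(E)$, and $\textbf{ToStr}(E)$. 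For each I would unfold its encoding from Appendix~\ref{app:encoding} and verify, by cases on the runtime type of the operand, that the interpreted encoding reproduces the concrete (\sqlite) semantics, including the implicit type-casting rules (e.g.\ the date-to-integer coercion of $\textbf{Strftime}$ illustrated for $\queryshortnl_{2}$) and the null-propagation conventions. This is where the bulk of the effort concentrates, because the date operators reason over the three-variable $(y,m,d)$ representation under the range and leap-year constraints $\Phi$, and the string operators must match engine-specific behavior on boundary inputs.

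Finally, closing the simultaneous induction requires that the companion predicate lemma also dispatch the membership predicates $\vec E \in \vec v$ and $\vec E \in Q$; the former reduces to the expression induction hypothesis together with equality in the relevant theory, while the latter appeals to the query-level correctness result for the encoding of $Q$. Once every per-operator encoding lemma is discharged, the induction closes uniformly, since $\interpretation$ distributes over each syntactic constructor, yielding $\denot{E}_{\db, xs} = \interpretation(\denot{E}_{\schema, \context, \tuples})$ as required.
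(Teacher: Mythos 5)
Your proposal is correct and its core engine is the same as the paper's: structural induction over $E$, unfolding each operator's symbolic encoding (Figure~\ref{fig:encodings-expressions}) against its concrete semantics, pushing the interpretation $\interpretation$ through each constructor as a homomorphism, and leaning on the fidelity of the \zthree builtins ($\zstrtoint$, $\zinttostr$, $\zlength$, $\zprefixof$, regex matching) for the string and date primitives, with the bulk of the work concentrated exactly where you predict---in $\stoint$, $\stodate$, $\stostr$, $\ssubstr$, $\sstrftime$, and $\sjulianday$. The genuine difference is architectural. The paper does \emph{not} perform a simultaneous induction: it states two separate results (Thm.~\ref{thm:expression} for expressions, Theorem~\ref{thm:predicate} for predicates), proves each by its own structural induction via a restatement lemma, and disposes of all base cases and the pre-existing operators---including precisely the mutually recursive ones, $\text{ITE}(\phi, E, E)$, $\text{Case}(\vec{\phi}, \vec{E}, E)$, and the membership predicates $\vec{E} \in \vec{v}$ and $\vec{E} \in Q$---with a single appeal to the original \verieql proof \citep{verieql}. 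Its predicate cases then invoke expression correctness for subterms like $\stostr(E)$, calling this the ``inductive hypothesis'' even though it formally crosses between the two inductions. Your simultaneous induction on the combined term structure is the more self-contained and, strictly speaking, more rigorous treatment: once the predicate language is extended with $\sprefixof$, $\ssuffixof$, and $\slike$, an $\text{ITE}$ guarded by a new predicate creates a genuine expression--predicate cycle that the paper's sequenced presentation only resolves by deferring those constructor cases to prior work whose proof predates the extension. What the paper's organization buys in exchange is brevity and direct reuse of the existing \verieql correctness argument, so that only the newly added operators need fresh case analyses; what yours buys is a proof that stands on its own without that external dependency, at the cost of re-proving (or at least re-stating within the mutual induction) the old cases, including the query-level correctness needed for $\vec{E} \in Q$.
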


Similarly, given a predicate $\phi$, the satisfying interpretation of $\phi$'s symbolic execution result is also identical to the concrete execution result of $\phi$. This is formally stated in Appendix~\ref{app:proof}. 
Lastly, we state the correctness of our encoding for equivalence under set semantics.
\begin{restatable}[Equivalence under set semantics]{theorem}{equivset}
\label{thm:equal}
Given two relations $R_1 = [t_1, \ldots, t_n]$ and $R_2 = [r_1, \ldots, r_m]$, if  formula~(\ref{eq:set}) is valid, then $R_1$ and $R_2$ are equivalent under set semantics.
\end{restatable}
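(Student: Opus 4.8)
The plan is to unfold both sides of the statement into the language of set membership and then verify that formula~(\ref{eq:set}) is literally the conjunction of the two containment conditions. First I would fix the denotation of a symbolic relation: a relation $R = [u_1, \ldots, u_k]$ carrying deletion markers denotes the set $\mathrm{set}(R) = \{\, u_\ell : \neg\del(u_\ell),~ 1 \le \ell \le k \,\}$, obtained by discarding every tuple flagged by $\del$ and collapsing the remainder into a set. Under this convention, ``$R_1$ and $R_2$ are equivalent under set semantics'' means exactly $\mathrm{set}(R_1) = \mathrm{set}(R_2)$, which, as the paper already observes, is equivalent to the two-sided containment $\mathrm{set}(R_1) \subseteq \mathrm{set}(R_2)$ together with $\mathrm{set}(R_2) \subseteq \mathrm{set}(R_1)$.

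Next I would show that the first conjunct of~(\ref{eq:set}) is the symbolic rendering of $\mathrm{set}(R_1) \subseteq \mathrm{set}(R_2)$. Take any element of $\mathrm{set}(R_1)$; by definition it is $t_i$ for some index $i$ with $\neg\del(t_i)$. The $i$-th clause $\neg\del(t_i) \rightarrow \lor_{j=1}^m(\neg\del(r_j) \land t_i = r_j)$ then forces some $j$ with $\neg\del(r_j)$ and $t_i = r_j$, so $t_i \in \mathrm{set}(R_2)$; hence $\mathrm{set}(R_1) \subseteq \mathrm{set}(R_2)$. The second conjunct is the mirror image and yields $\mathrm{set}(R_2) \subseteq \mathrm{set}(R_1)$ by the identical argument with the roles of $R_1$ and $R_2$ exchanged. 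Combining the two containments gives $\mathrm{set}(R_1) = \mathrm{set}(R_2)$, i.e.\ the desired set equivalence.

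The argument is essentially a direct translation, so the only delicate points are definitional rather than computational. The first is making the treatment of deletion markers precise: one must insist that $\del$-flagged tuples contribute nothing to the denoted set, so that the guards $\neg\del(\cdot)$ in~(\ref{eq:set}) line up exactly with membership, and that each equality $t_i = r_j$ is read as equality of the underlying interpreted tuple values, not of syntactic symbols. The second is to read the hypothesis ``formula~(\ref{eq:set}) is valid'' as the statement that it holds under the interpretation $\interpretation$ fixing the tuple values and the deletion flags, which is precisely the regime in which $\mathrm{set}(R_1)$ and $\mathrm{set}(R_2)$ are concrete sets; once this is pinned down, each implication in~(\ref{eq:set}) discharges the corresponding membership obligation immediately. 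I therefore expect the main obstacle to be stating these conventions cleanly, after which the proof reduces to the elementary fact that mutual containment is set equality.
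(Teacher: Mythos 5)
Your proposal is correct and follows essentially the same route as the paper's proof: split formula~(\ref{eq:set}) into its two conjuncts, read each as set containment restricted to non-deleted tuples, and conclude by mutual containment. Your version is somewhat more explicit about the definitional conventions (the set-denotation of a marked relation and the reading of validity), but the underlying argument is identical.
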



\algnewcommand{\LeftComment}[1]{\State \(\triangleright\) #1}
\noindent 

\begin{figure}[t]
\vspace{-0.5cm}
\begin{minipage}[t]{0.485\linewidth}
\begin{algorithm}[H]
  \caption{Bounded equivalence checking}\label{algo:spotit:eqv}
  \scriptsize
  \begin{algorithmic}[1]
  \Require Database schema $\schema$, gold SQL query $\goldqueryshort$, generated SQL query $\genqueryshort$, time limit $T$, bound $K$
  \Ensure A counterexample $\db_{\cexv}$
  \Function {\eqcheck}
  {$\schema$, $\goldqueryshort$, $\genqueryshort$, $T$, $K$}
    \For{$k \in [1,K]$}
    \State $\res, \db_{\cexv} \gets \checkBound(\schema, \genqueryshort, \goldqueryshort,k, T)$ \label{algo:spotit:eqv:verieql}
    
     \If {$\res = \textsc{Equivalent}$} \textbf{continue} 
     \LeftComment{Bounded equivalence under $k$}
     \ElsIf {$\res = \eqvF$} \label{algo:spotit:eqv:noneq}
          \LeftComment{Find a counterexample} 
        \LeftComment{Validate the counterexample on the backend DBMS}

        \If {$\neg \ex(\genqueryshort,\goldqueryshort,\db_{\cexv}$)} \label{algo:spotit:eqv:valid} 
            \State \Return $\{\db_{\cexv}\}$ 
        \EndIf
    \Else~ \textbf{break} \label{algo:spotit:eqv:unknown} \Comment{Timeout, unsupported, undecidable queries} \EndIf 
        
    \EndFor
  \State \Return $\emptyset$  
    \EndFunction
  \end{algorithmic}
\end{algorithm}
\end{minipage}
\hfill
\begin{minipage}[t]{0.485\linewidth}
\begin{algorithm}[H]
  \caption{\sys\!\textsuperscript{+}}\label{algo:spotit:main}
  \scriptsize
  \begin{algorithmic}[1]
  \Require Database $\schema$, user query $\queryshortnl$, gold SQL query $\goldqueryshort$, Text-to-SQL  frameworks $\frameworks$, time limit $T$ and bound $K$
  \Ensure Counterexamples $\db_{\cexsv}$
  \Function {\syscc}
  {$\schema$, $\queryshortnl$, $\frameworks$, $T$, $K$}

    \State $\db_{\cexsv} \gets \emptyset$ 
  \For{$m \in \frameworks$} \label{algo:spotit:frm:loop:start}
    \State $\genqueryshort \gets m(\schema,\queryshortnl)$ \Comment{Generate SQL query $\genqueryshort$ using $m$}
    \State $\db_{\cexsv}[m] \gets \eqcheck (\schema, \goldqueryshort, \genqueryshort, T, K)$
    \EndFor\label{algo:spotit:frm:loop:end}
 \LeftComment {Performing cross-referencing counterexamples} 
    \State $\db^{*}_{\cexsv} \gets \cup_{m \in \frameworks}\db_{\cexsv}[m]
$  \label{algo:cc:start} 
    \For{$m \in \frameworks$} \label{algo:loopfrms:cc} 
        \For{$\db \in \db_{\cexsv}^{*}\setminus \db_{\cexsv}[m] $} \label{algo:loopfrms} 
          \If {$\neg \ex(\genqueryshort,\goldqueryshort,\db$)} 
            \State $\db_{\cexsv}[m] \gets \db_{\cexsv}[m] \cup  \{\db\}$ \label{algo:cc:end}  
            \EndIf
    \EndFor
    \EndFor\label{algo:cc:end}
    \State \Return $\db_{\cexsv}$
     
    \EndFunction
  \end{algorithmic}
\end{algorithm}
\end{minipage}
\end{figure}

\subsection{\sys: a search-based \txttosql evaluation pipeline} \label{subsec:spotit}
Fig.~\ref{fig:workflow} presents a high-level workflow of our approach that consists of three conceptual phases. 

\vspace{-2mm}
\input{figs/overview}

\noindent\textbf{\circled{1} Input phase.}
Given a NL question $\queryshortnl$ and its corresponding gold SQL query $\goldqueryshort$, a \txttosql framework takes as input $\queryshortnl$ and generates a SQL query $\genqueryshort$. Both $\goldqueryshort$ and $\genqueryshort$ are passed to phase \circled{2}.  

\noindent\textbf{\circled{2} Verification phase.}
The goal is to find a counterexample database instance on which the queries $\goldqueryshort$ and $\genqueryshort$ produce different outputs. 
For a given bound $k \leq K$, we perform bounded equivalence checking between $\goldqueryshort$ and $\genqueryshort$. 
If the queries are proved equivalent, then we increase $k$ by one for the next verification check. Furthermore, 
we cannot find any counterexample under all bounds and conclude that they are verified up to the bound $k$.
On the other hand, if the queries are proved to be non-equivalent under some bound, we proceed to phase \circled{3} for a further validation of 
$\db_{\cexv}$.

\noindent\textbf{\circled{3} Validation phase.} Given the queries $\goldqueryshort$ and $\genqueryshort$ and a counterexample $D_{\cexv}$ returned by verification algorithm, we must verify that this counterexample is non-spurious.
There are two main reasons spurious counterexamples can arise in the verification engine. Either because some operators are over-approximated in the SMT encoding or the SQL query admits non-deterministic behaviors that cannot be modeled. 
Therefore, we execute the queries on the counterexample database (e.g., in \sqlite) and check whether the results actually differ.
 $D_{\cexv}$ is viewed valid if the results remain different; otherwise, we report this spurious case to the developers. 

Alg.\ref{algo:spotit:eqv} implements the second and third phases. For a given bound $k \leq K$, it first checks bounded equivalence between $\goldqueryshort$ and $\genqueryshort$ (line~\ref{algo:spotit:eqv:verieql}). If the queries are proven to be non-equivalent (line~\ref{algo:spotit:eqv:noneq}) under some bound, we validate that the counterexample database is indeed a true counterexample (line~\ref{algo:spotit:eqv:valid}) and return it if this is the case. If the queries are proven to be equivalent in line~\ref{algo:spotit:eqv:verieql}, then we increase $k$ by one for the next verification step. If the verifier cannot find any counterexample under all bounds, Alg.\ref{algo:spotit:eqv} returns an empty set. Finally, if the verifier times out on a bound $k$, or the query is unsupported or undecidable, it also returns an empty set.

\noindent\textbf{Cross-checking counter-examples.}
One observation we make is that as we progress through the frameworks, we collect a set of counterexamples that separate the gold query from the generated queries. Hence, we realized that these counterexamples can be reused as checks across all frameworks, as they might generalize across frameworks. 
%
Alg.\ref{algo:spotit:main} implements this idea. First, it obtains counterexample databases, if they exist, for all frameworks by calling Alg.\ref{algo:spotit:eqv} (lines~\ref{algo:spotit:frm:loop:start}--\ref{algo:spotit:frm:loop:end}). Then, it iterates over all frameworks again and tests equivalence between  $\goldqueryshort$ and $\genqueryshort$ on these counterexample databases (lines~\ref{algo:cc:start}--\ref{algo:cc:end}). Empirically, this improves the effectiveness of our approach.

\section{Experimental evaluation}\label{sec:experiments}

In this section, we investigate the effect of using \sys as the evaluation methodology for \txttosql tasks. We are interested in the following questions:
\vspace{-0.5em}
\begin{itemize}[left=4pt, itemsep=0.01cm]
    \item How much more SQL queries does our extension of \verieql support?
    \item Can \sys provide more rigorous accuracy evaluation than test-based approaches?
    \item Can \sys reveal shortcomings in existing \txttosql evaluations?
\end{itemize}
\vspace{-0.5em}

\noindent\textbf{Experimental Setup.}
We consider all 1,533 question-SQL pairs from the development set of BIRD~\citep{li2023can}, a state-of-the-art dataset for evaluating \txttosql methods. The questions span 11 different databases from different professional domains, such as education, healthcare, and sports. The official BIRD leaderboard~\footnote{\url{https://bird-bench.github.io/}} 
contains over 80 \txttosql methods and are updated frequently. Not all methods are open-source or have predictions publicly available. Therefore, we reached out to the developers of top-performing \txttosql frameworks on the BIRD leaderboard and obtained the generated SQL queries for 10 of them, which constitutes a representative subset of state-of-the-art \txttosql methods. The methods are listed in Tab.~\ref{tab:sqlmethods}.

\begin{wraptable}{r}{0.5\textwidth}
\setlength\tabcolsep{1pt}
\centering
\scriptsize
\vspace{-0.45cm}
\caption{The \txttosql methods we evaluated}
\vspace{-0.2cm}
\label{tab:sqlmethods}
\begin{tabular}{cc}
\toprule
Entry & Acronym\\
\midrule
Alpha-SQL + Qwen2.5-Coder-32B~\citep{li2025alpha} & \alphasql \\
CSC-SQL + Qwen2.5-Coder-7B~\citep{sheng2025csc} &\cscsqlsmall\\
CSC-SQL + XiYanSQL~\citep{sheng2025csc} & \cscsqlbig \\
GenaSQL-1~\citep{donder2025cheaper}  & \genaonesql \\
GenaSQL-2~\citep{donder2025cheaper} & \genatwosql \\
RSL-SQL + GPT-4o~\citep{cao2024rsl} & \rslsql \\
OmniSQL-32B\citep{li2025omnisql} & \omnimajsql \\
GSR (anonymous authors) & \gsrsql \\
$\text{CHESS}_{\text{IR+CG+UT}}$~\citep{talaei2024chess} & \chesssql \\
SLM-SQL + Qwen2.5-Coder-1.5B~\cite{sheng2025slm} & \slmsql \\
\bottomrule
\vspace{-0.5cm}
\end{tabular}
\end{wraptable}
  
We first evaluate the predictions of each method using BIRD's official test-based execution accuracy metric (\ex), which, as described in Eq.~\ref{eq:ex}, compares the results of executing the generated and gold queries on a given test database. For predictions that are deemed correct by \ex, we apply \sys to perform a more rigorous analysis. We implemented \sys on top of \verieql~\citep{verieql}, which we extended using the methods described in Sec.~\ref{subsec:methodology:verieql}.  To generate practically relevant counterexamples, we also extend the verification condition to exclude degenerate counterexamples that result in empty for one SQL and NULL for the other SQL.

We consider three variants of \sys: 
\begin{inparaenum}[(i)]
    \item \sys: Alg.~\ref{algo:spotit:main} instantiated with the extended verification engine but without cross-checking (lines~\ref{algo:cc:start}--\ref{algo:cc:end});
    \item \sysnoextend: Alg.~\ref{algo:spotit:main} instantiated with vanilla \verieql and without cross-checking;
    \item \syscc: Alg.~\ref{algo:spotit:main} with cross-checking.
\end{inparaenum}
We verify each generated-gold SQL pair up to a bound ($K$) of 5. Each verifier call is given one physical core, 8GB memory, and a CPU timeout of 600 seconds. In practice, a counterexample can typically be found within seconds, as reported below. Experiments were performed on a cluster equipped with Dell PowerEdge R6525 CPU servers featuring 2.6-GHz AMD CPU cores.

\noindent\textbf{Performance of Verification Engine.} We first evaluate the effect of our extensions to the original \verieql engine~\citep{verieql} in terms of \emph{coverage}, defined as the fraction of generated-gold-SQL pairs that can be encoded into an SMT query. In addition, we measure the average runtime of \sys on questions where a valid differentiating database is found. The results are shown in Tab.~\ref{tab:coverage}. Our extensions significantly increase the coverage of the verification engine on relevant questions (i.e., ones deemed correct by \ex) for each  method, allowing us to formally analyze a larger number of generated SQL queries. For example, for \cscsqlbig, the coverage increases from 84.83\% to 94.88\%, which corresponds to 110 additional supported questions ($(94.88\% - 84.83\%) * 1094)$.


The average time taken by \sys to find a counterexample is under 4 seconds for all methods, which, combined with the fact that the analysis for each question can be done in parallel, confirms that \sys is already a practical method for formally comparing generated SQLs with gold SQLs. \revise{Moreover, a high percentage (up to 96.15\%) of the counterexamples found by the verifier are successfully validated, which suggests that our SMT encoding is sufficiently precise in practice.}

\begin{table}[t!]
\vspace{-0.2cm}
\begin{minipage}{0.49\linewidth}
\centering
\setlength\tabcolsep{1pt}
\scriptsize
\caption{\% of SQL pairs supported by  \sysnoextend and \sys. For \sys, also the average time in seconds on pairs where CEXs are found by the verifier \revise{and how many of them are non-spurious.}}
 \vspace{-2.5mm}
\label{tab:coverage}
\begin{tabular}{lcccc}
\toprule
Method (\# quest.) & \sysnoextend (\%)  & \sys (\%) & Avg. Time & \revise{Valid. (\%)} \\
\midrule 
\alphasql (1064) & 84.87 & 93.89 & 3.10 & \revise{96.15} \\
\chesssql (976) & 87.40 &  97.13 & 1.40 & \revise{93.34} \\
\cscsqlbig (1094) & 84.83 &  94.88 & 3.24 & \revise{94.46} \\
\cscsqlsmall (1061) & 85.77 &  96.14 & 3.93 & \revise{95.10} \\
\genaonesql (1062) & 84.56 &  94.92 & 1.01 & \revise{94.52} \\
\genatwosql (1082) & 84.47 &  94.55 & 0.93 & \revise{95.42} \\
\gsrsql (1020) & 84.51 &  93.63 & 1.12 & \revise{94.86} \\
\omnimajsql (1026) & 86.65 &  95.61 & 1.36 & \revise{95.83} \\
\rslsql (1038) & 86.03 &  95.18 & 1.64 & \revise{95.62} \\
\slmsql (973) & 85.92 &  94.24 & 1.36 & \revise{95.05} \\ 
\bottomrule
\end{tabular}
\end{minipage}
\hfill
\begin{minipage}{0.48\linewidth}
\centering
\setlength\tabcolsep{3.5pt}
\scriptsize
\caption{Performance of Text-to-SQL methods using \ex, \exVerify, and \exVerifyCC on the 1533 BIRD-dev benchmarks.} 
\label{tab:res}
\vspace{-1mm}
\begin{tabular}{lcccccc}
\toprule
& \multicolumn{2}{c}{\ex} & \multicolumn{2}{c}{\exVerify} &
 \multicolumn{2}{c}{\exVerifyCC} \\
\cmidrule(lr){2-3} \cmidrule(lr){4-5} \cmidrule(lr){6-7}
& Acc. (\%) & Rank & Acc.(\%) & Rank & Acc.(\%) & Rank \\ 
\cscsqlbig & 71.32 & 1 & 58.80 & 3 & 57.82 & 4 \\
\genatwosql & 70.53 & 2 & 59.84 & 1 & 59.13 & 1 \\
\alphasql & 69.36 & 3 & 55.87 & 6 & 55.02 & 6\\
\genaonesql & 69.23 & 4 & 59.45 & 2 & 59.00 & 2\\
\cscsqlsmall & 69.17 & 5 & 58.54 & 4 & 57.95 & 3\\
\rslsql & 67.67 & 6 & 56.58 & 5 & 55.80 & 5 \\
\omnimajsql & 66.88 & 7 & 54.69 & 7 & 54.04 & 7\\
\gsrsql & 66.49 & 8 & 54.56 & 8 & 53.72 & 8\\  
\chesssql & 63.62 & 9 & 52.87 & 9 & 52.35 & 9\\
\slmsql & 63.43 & 10 & 51.37 & 10 & 50.98 & 10\\
\bottomrule
\end{tabular}
\end{minipage}
\vspace{-0.3cm}
\end{table}

\noindent\textbf{Comparing test-based evaluation with \sys.} We now evaluate the accuracy of each \txttosql method based on \ex, \exVerify, and \exVerifyCC. As shown in Tab.~\ref{tab:res}, the accuracy of each method drops significantly when \sys is used to check query equivalence. For example, the accuracy of \cscsqlbig drops from 71.32\% to 58.80\% with \sys, and further to 57.82\% when cross-checking is enabled. This means that there are 207 generated SQLs ($1533 * (71.32\% - 57.82\%)$) that passed the test on the official test databases, but were differentiated from the gold SQL by \sys. Overall, \sys resulted in a decrease in accuracy ranging from 9.8\% to 13.5\%, and cross-checking results in a small further decrease, by up to 1\%. Interestingly, the ranking of the \txttosql methods also changes substantially when evaluated under the more stringent, verification-based metrics, particularly in the top half of the table. For example, \cscsqlbig, which is ranked 1st by the official test-based metric, drops to 4th place when evaluated by \syscc. And the 3rd place method \alphasql drops to the 6th place.
These results indicate that test-based methods can in many cases overlook differences between the generated SQL and the gold SQL, which might lead to misrepresentation of the actual performance (both \emph{absolute} and \emph{relative}) of existing \txttosql methods. 

\begin{wrapfigure}{r}{0.45\textwidth}
    \vspace{-0.6cm}
    \centering
    \includegraphics[width=\linewidth]{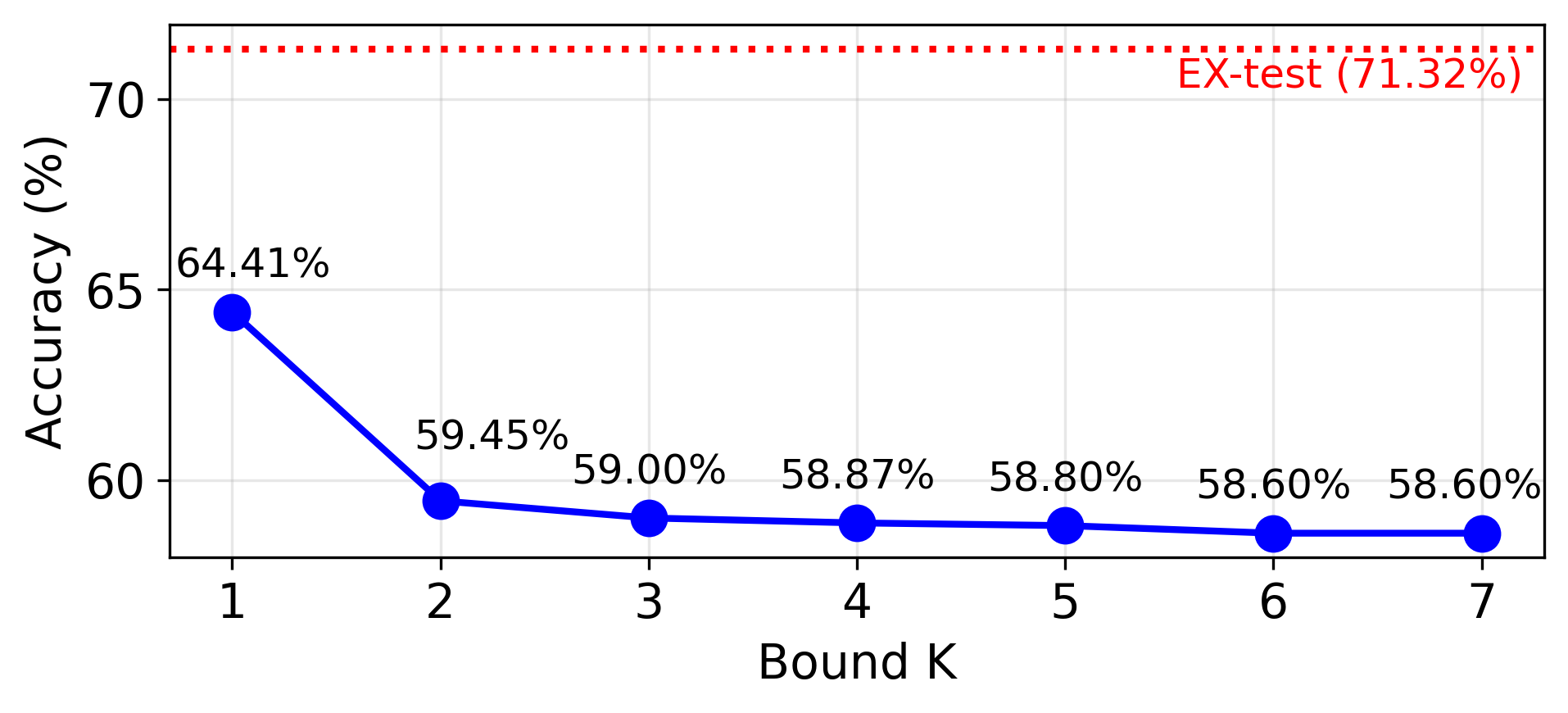}
    \vspace{-0.7cm}
    \caption{\revise{The effect of bound $K$.}}
    \label{fig:effectK}
    \vspace{-0.5cm}
\end{wrapfigure}
\noindent\textbf{\revise{The effect of K.}} \revise{To study the effect of the choice of the verification bound $K$, we vary its value from 1 to 7 and run \sys on the predictions of \cscsqlbig, the best model according to \ex. As shown in Fig.~\ref{fig:effectK}, \sys was able to find significantly more differentiating databases when $K$ increases from 1 to 2 and the gain is marginal pass $K=3$. This justifies our choice of $K = 5$. 
}

\begin{wrapfigure}{r}{0.34\textwidth}
    \vspace{-0.5cm}
    \centering
    \includegraphics[width=\linewidth]{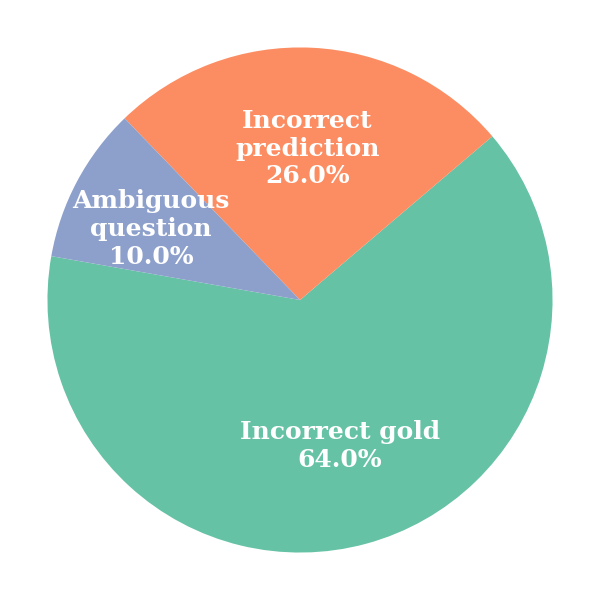}
    \vspace{-0.7cm}
    \caption{A breakdown of the primary reason for the difference between generated and gold SQLs.}
    \label{fig:pie}
    \vspace{-0.3cm}
\end{wrapfigure}
\noindent\textbf{Manual inspection of \sys counterexamples.}
As \sys performs bounded verification, the differentiating databases it finds are guaranteed to be minimal, which makes it easy to analyze them and understand the source of difference between the generated and gold SQLs. We manually examined the counterexamples for a random sample of 50 queries generated by \cscsqlbig and found that the difference between a generated SQL and a gold SQL can be primarily attributed to the three reasons that we described in Section~\ref{sec:overview}: ambiguous question, incorrect gold SQL, and incorrect generated SQL. Fig.~\ref{fig:pie} shows a breakdown of the primary attributed reasons for those sampled questions. Surprisingly, while incorrect predictions do constitute a significant portion (26\%), more often than not, the gold SQL itself is problematic. There are also a small fraction of cases (10\%) where the question itself can be interpreted in multiple ways and therefore admits different answers. We discussed two examples of incorrect gold SQLs and ambiguous questions in Fig.~\ref{tab:motivationexamples}. Additional examples of each type of issues, along with the databases found by \sys, are provided and discussed in App.~\ref{app:additional-examples}.

\begin{wraptable}{r}{0.48\textwidth}
\centering
\setlength\tabcolsep{3pt}
\scriptsize
\caption{\revise{Evaluation of \sys on Spider 2.0.}} 
\label{tab:spider}
\vspace{-1mm}
\begin{tabular}{lccccc}
\toprule
& \multicolumn{1}{c}{\revise{\ex}} & \multicolumn{3}{c}{\revise{\exVerify}}\\
\cmidrule(lr){2-2} \cmidrule(lr){3-5}
& \revise{Acc.(\%)} & \revise{Acc.(\%)} & \revise{Supported (\%)} & \revise{Avg. Time} \\ 
\revise{\textsc{OmniSQL}} & \revise{34.1} & \revise{22.2} & \revise{60.9\%} & \revise{3.4}\\
\revise{\textsc{GPT-5}} & \revise{42.2} & \revise{36.3} & \revise{50.9\%} & \revise{1.1} \\
\bottomrule
\end{tabular}
\end{wraptable}
\noindent\textbf{\revise{Additional analysis on Spider 2.0 benchmarks.}} 
\revise{To assess the generalizability of \sys to more complex \txttosql tasks, we evaluate it on the recently introduced Spider~2.0 benchmark~\citep{lei2024spider}. We consider \textsc{OmniSQL}~\citep{li2025omnisql}, a state-of-the-art \txttosql method and \textsc{GPT-5}
\footnote{We use the same prompt as used in \textsc{OmniSQL}.}
on the 135 SQLite questions. These methods pass \ex on 46 and 57 queries, respectively, which is competitive with the top entries on the Spider~2.0 leaderboard.\footnote{https://spider2-sql.github.io/. We were not able to obtain predictions of \txttosql methods on the leaderboard because they are predominantly closed source. } As shown in Tab.~\ref{tab:spider}, \sys finds differentiating databases for 16 ($(34.1\%-22.2\%) * 135$) and 8 query pairs deemed correct by test-based evaluation respectively for \textsc{OmniSQL} and \textsc{GPT-5}.  SpotIt’s runtime for finding counterexamples remains low. We believe this is due to the fact that the schemas in Spider 2.0 are only moderately larger than those in BIRD (97.6 vs. 78.6 columns) and counter-examples can usually be detected with small values of $K$. 
The main verification challenge when it comes to Spider 2.0 lies in  the number of SQL operators required for the queries but currently unsupported by our verification engine, which resulted in a smaller percentage of supported query pairs. Upon closer examination, 52.6\% of the unsupported Spider 2.0 queries involve the window function (i.e., \texttt{OVER} clauses). Overall, these results indicate that \sys is also useful for uncovering query discrepancies overlooked by test-based methods on the challenging Spider 2.0 benchmarks.}

\noindent\textbf{Summary of findings and implications.} We now summarize the findings of our evaluation of a state-of-the-art \txttosql evaluation dataset BIRD using \sys and discuss their implications.

\noindent\emph{Finding 1: Existing test-based correctness metrics that involve executing the generated SQL and the gold SQL on static test databases can overlook significant variations in output data returned by the generated and gold SQLs.}
A search-based evaluation metric, such as \sys, can serve as a practical alternative that provides additional perspectives on the performance of \txttosql methods.

\emph{Finding 2: there is a significant number of problematic gold SQLs in existing \txttosql benchmark sets.} As shown by examples in Tab.~\ref{tab:motivationexamples} and App.~\ref{app:additional-examples}, in many cases, the issue can be hard to detect, yet can cause significantly different behaviors from the intended one. The presence of incorrect gold SQLs makes it hard to determine the true optimal performance on a benchmark set, as even a perfect \txttosql method cannot achieve 100\% accuracy.

\begin{wrapfigure}{r}{0.32\textwidth}
\vspace{-0.5cm}
    \centering
    \includegraphics[width=\linewidth]{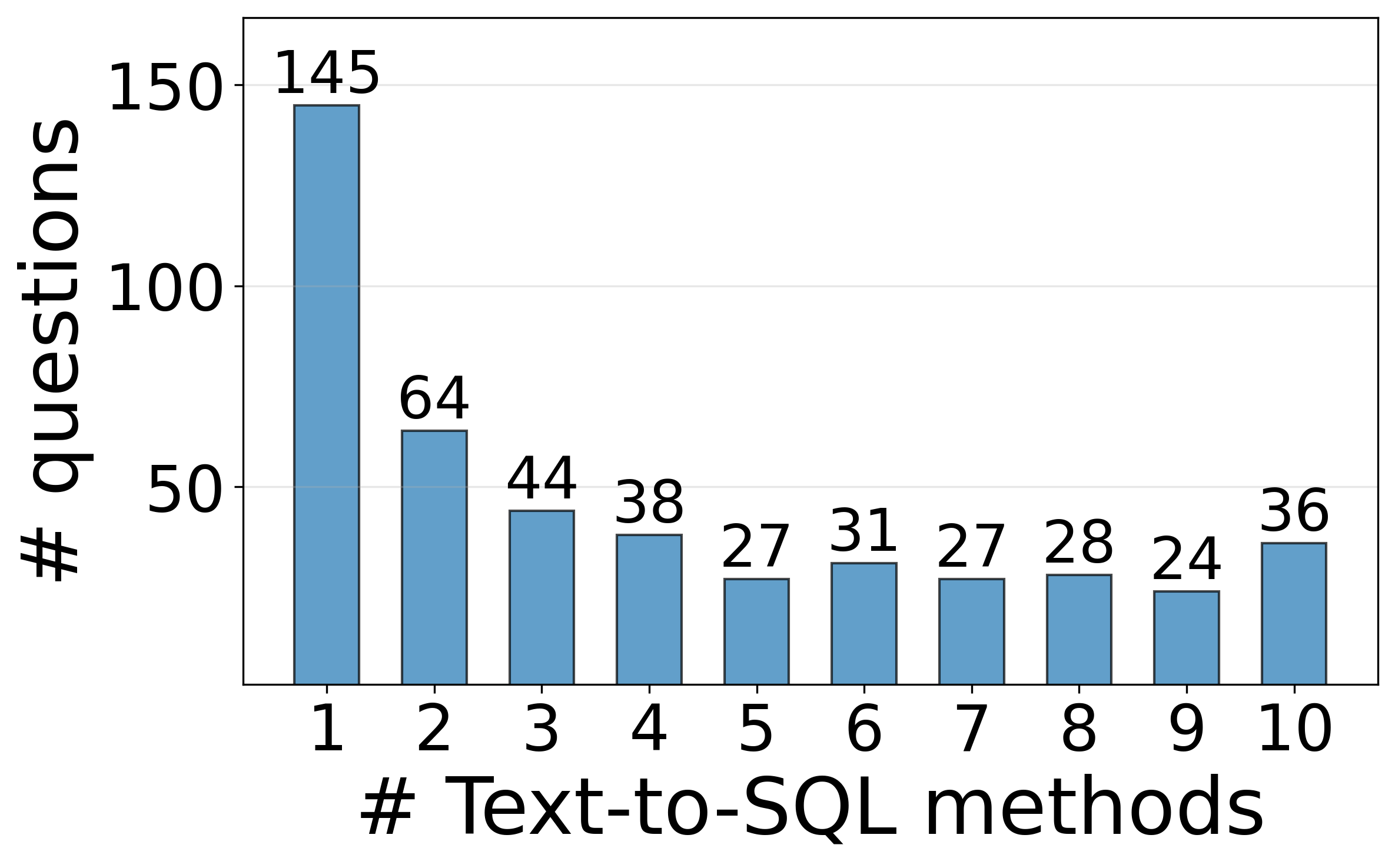}
    \vspace{-0.7cm}
    \caption{A breakdown of questions that passed \ex but failed \syscc.}
    \vspace{-0.5cm}
    \label{fig:syscc_histogram}
\end{wrapfigure}
Based on our result analysis for \cscsqlbig, we speculate that when most \txttosql methods disagree with the gold SQL, the gold SQL is likely problematic. To validate this, we count the number of times that a prediction for a question is deemed correct by \ex but incorrect by \syscc across \emph{all} 10 \txttosql methods. As shown in Fig.~\ref{fig:syscc_histogram}, there are 36 questions on which all methods generated queries that differ from the gold SQL. Manual inspection suggests that 31 of those 36 cases have problematic gold SQLs, 3 have ambiguous questions, and only 2 represent genuine errors in the generated SQLs. 

\begin{wrapfigure}{r}{0.32\textwidth}
\vspace{-0.4cm}
    \centering
    \includegraphics[width=\linewidth]{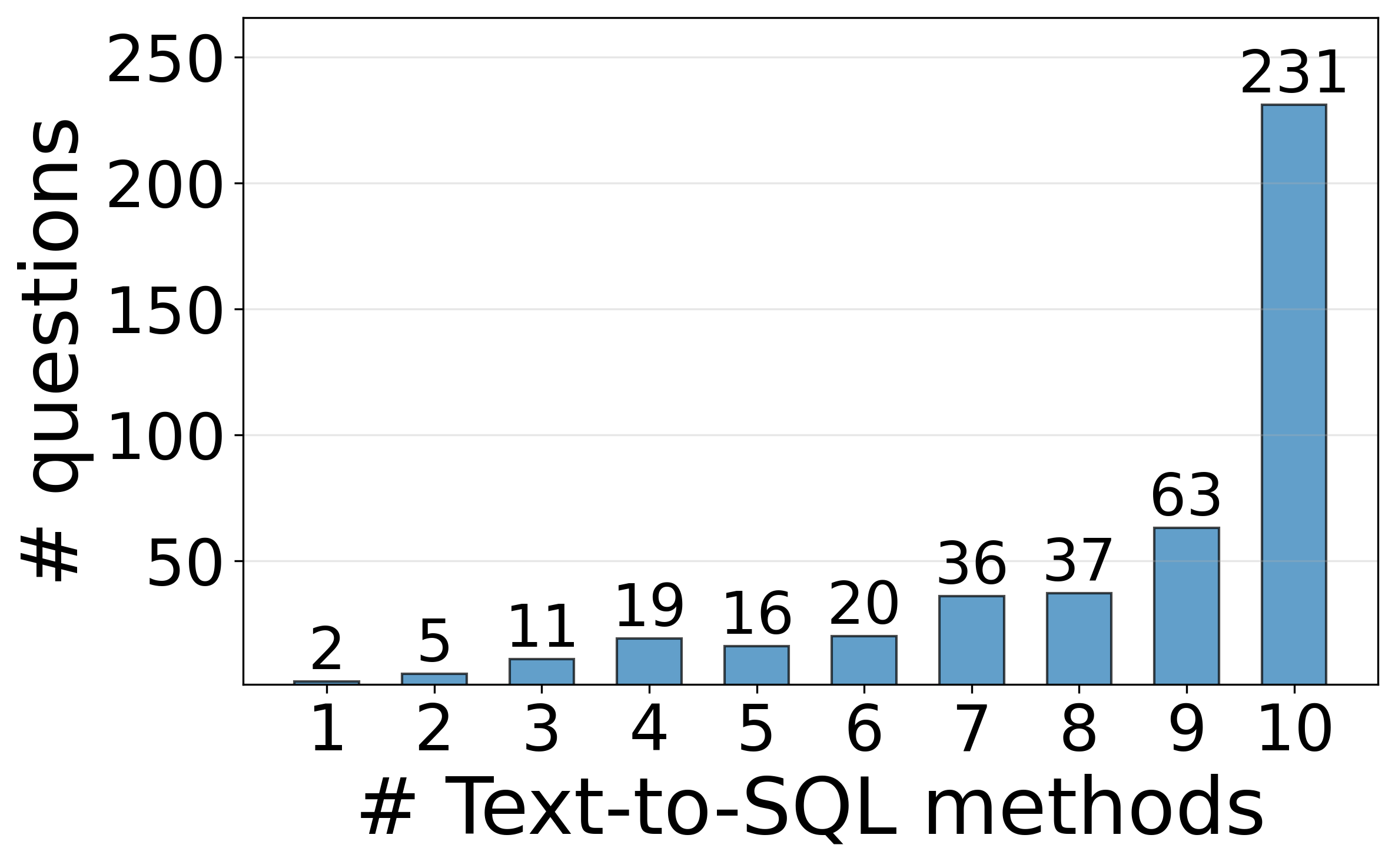}
    \vspace{-0.7cm}
    \caption{A breakdown of questions for which \cscsqlbig's predictions failed \ex.}
    \vspace{-0.6cm}
    \label{fig:ex_histogram_csc32b}
\end{wrapfigure}
While so far we have focused on incorrect gold SQLs overlooked by \ex, our investigation begs the question: \emph{when the generated query differs from the gold SQL, how often in general is the gold SQL problematic?} Fig.~\ref{fig:ex_histogram_csc32b} shows the number of times the prediction for a question is deemed incorrect by \ex across the 10 \txttosql methods, for questions where \cscsqlbig's predictions failed \ex. There are 294 questions where at least 8 of the other 9 methods also failed \ex. If shared disagreement with gold SQL is also a good indicator for problematic gold SQL in this case, then even a perfect \txttosql method might not be able to achieve an \ex score much higher than 80\% on BIRD-dev. As the time of completing this manuscript, the best \ex score for BIRD-dev achieved by any method on the official leaderboard is 76.14\%.

Large-scale benchmark sets inevitably contain problematic gold SQLs. 
Indeed, multiple sources have found examples of problematic gold SQLs in the BIRD dataset~\citep{birdissues,wretblad2024understanding}, and some of them have already been addressed by the maintainers. \sys is the first approach that can provide minimal, easily analyzable databases to differentiate generated and gold SQLs, and can help to systematically uncover problematic gold SQLs.

\noindent\emph{Finding 3:} \emph{A substantial number of questions in the \txttosql dataset can be interpreted in different ways, thus admitting different SQL queries.} While ambiguity is inherent in natural language, judging the correctness of a generated SQL query based on a single gold SQL query when the natural language question admits multiple interpretations might result in unfair penalization of \txttosql methods. 

\noindent\emph{Finding 4 (for the verification community):} \emph{SMT-based equivalence verification techniques can already support a large fraction of practical SQL queries.} Our results demonstrate that verification can often be completed within seconds. Due to the practical relevance of \txttosql, we believe there is motivation for the verification community to invest more resources to precisely cover a larger fragment of SQL. \revise{In App.~\ref{app:next-steps-verify}, we discuss further extensions that would be especially useful according to our evaluation.} Another significant next step to incorporate user preferences (potentially from natural languages) to search for particular types of counterexample databases. One way to achieve this is to encode the preferences as additional constraints in the SMT formulation of the verification problem.



\section{Related Work}
\label{sec:relatedworkshort}

\revise{Popular evaluation platforms such as BIRD-SQL~\citep{birdleader} and Spider 2.0~\citep{lei2024spider} evaluate query correctness by testing on predefined database instances. Several additional evaluations have been proposed to take into account partially correct generated queries~\citep{rves}, efficiency of query executions~\citep{zhang2024benchmarkingtexttosqlcapabilitylarge}, and ambiguity in the questions~\citep{li2024can,birdinteract2025}. However, the final correctness check is still via testing on a static database. 
Formal SQL equivalence checking broadly falls into two categories, full-fledged verification~\citep{chu2017hottsql, chu2018axiomatic, zhou2022spes, zhou2024solving, wang2024qed} and bounded verification~\citep{veanes2010qex, chu2017demonstration, chu2017cosette, verieql}. To the best of our knowledge, \verieql~\citep{verieql} supports the most expressive SQL fragments, while also offering extensibility for new features. We significantly extend the \verieql framework to support date and string types as well as a number of common operators for the Text-to-SQL evaluation task.
Test data generation methods can also be useful for detecting query non-equivalence~\citep{chandra2015data,somwase2024data,zhong2020semantic}.However, when the counterexamples require very specific structures, random fuzzing/testing can become unreliable in refuting equivalence. In contrast, \sys systematically searches over the space of possible differentiating databases, finds minimal counter-examples, and provides a formal guarantee: if \sys deems two SQL queries equivalent, then no counterexample of size less than a fixed number exists.
A more detailed review of related work can be found in App.~\ref{sec:relatedwork}.}

\section{Conclusion}

We presented \sys, the first verification-based evaluation pipeline for \txttosql. We introduced techniques to support a richer SQL grammar, which enabled us to efficiently analyze a large fragment of SQL queries commonly seen in \txttosql tasks. Our initial motivation for developing \sys was to examine the extent to which the accuracy of a \txttosql method is overestimated by test-based evaluation, which is widely adopted as the default metric on high-profile \txttosql evaluation platforms. However, a closer inspection of the verification results revealed a far more complex picture. While \sys can indeed detect incorrect generated SQL queries that were overlooked by test-based methods, a significant portion of the inconsistency between the gold and generated SQLs can be explained by the benchmarks themselves--either due to problematic gold SQLs or due to ambiguous natural language questions. We discussed the implications of and the next steps from our findings, and hope that our work will motivate further work on evaluating and improving \txttosql evaluation frameworks.

\bibliography{lit}

@article{rves,
 author = "Giovanni Pinna and Yuriy Perezhohin and Luca Manzoni and Mauro Castelli and {De Lorenzo}, Andrea",
title = "Redefining text-to-SQL metrics by incorporating semantic and structural similarity",
year = "2025",
month = jul,
doi = "10.1038/s41598-025-04890-9",
language = "English",
volume = "15",
journal = "Scientific Reports",
issn = "2045-2322",
publisher = "Nature Publishing Group",
}

@misc{zhang2024benchmarkingtexttosqlcapabilitylarge,
      title={Benchmarking the Text-to-SQL Capability of Large Language Models: A Comprehensive Evaluation}, 
      author={Bin Zhang and Yuxiao Ye and Guoqing Du and Xiaoru Hu and Zhishuai Li and Sun Yang and Chi Harold Liu and Rui Zhao and Ziyue Li and Hangyu Mao},
      year={2024},
      eprint={2403.02951},
      archivePrefix={arXiv},
      primaryClass={cs.CL},
      url={https://arxiv.org/abs/2403.02951}, 
}

@article{verieql,
author = {He, Yang and Zhao, Pinhan and Wang, Xinyu and Wang, Yuepeng},
title = {VeriEQL: Bounded Equivalence Verification for Complex SQL Queries with Integrity Constraints},
year = {2024},
issue_date = {April 2024},
publisher = {Association for Computing Machinery},
address = {New York, NY, USA},
volume = {8},
number = {OOPSLA1},
url = {https://doi.org/10.1145/3649849},
doi = {10.1145/3649849},
abstract = {The task of SQL query equivalence checking is important in various real-world applications (including query rewriting and automated grading) that involve complex queries with integrity constraints; yet, state-of-the-art techniques are very limited in their capability of reasoning about complex features (e.g., those that involve sorting, case statement, rich integrity constraints, etc.) in real-life queries. To the best of our knowledge, we propose the first SMT-based approach and its implementation, VeriEQL, capable of proving and disproving bounded equivalence of complex SQL queries. VeriEQL is based on a new logical encoding that models query semantics over symbolic tuples using the theory of integers with uninterpreted functions. It is simple yet highly practical — our comprehensive evaluation on over 20,000 benchmarks shows that VeriEQL outperforms all state-of-the-art techniques by more than one order of magnitude in terms of the number of benchmarks that can be proved or disproved. VeriEQL can also generate counterexamples that facilitate many downstream tasks (such as finding serious bugs in systems like MySQL and Apache Calcite).},
journal = {Proc. ACM Program. Lang.},
month = apr,
articleno = {132},
numpages = {29},
keywords = {Program Verification, Equivalence Checking, Relational Databases}
}

@misc{birdinteract2025,
  author       = {Team BIRD},
  title        = {BIRD-Interact: Re-imagine Text-to-SQL Evaluation via Lens of Dynamic Interactions},
  year         = {2025},
  howpublished = {https://github.com/bird-bench/BIRD-Interact},
  note         = {Accessed: 2025-06-01}
}

@article{lei2024spider,
  title={Spider 2.0: Evaluating language models on real-world enterprise text-to-sql workflows},
  author={Lei, Fangyu and Chen, Jixuan and Ye, Yuxiao and Cao, Ruisheng and Shin, Dongchan and Su, Hongjin and Suo, Zhaoqing and Gao, Hongcheng and Hu, Wenjing and Yin, Pengcheng and others},
  journal={arXiv preprint arXiv:2411.07763},
  year={2024}
}

@misc{zhai2025excotoptimizingreasoningtexttosql,
      title={ExCoT: Optimizing Reasoning for Text-to-SQL with Execution Feedback}, 
      author={Bohan Zhai and Canwen Xu and Yuxiong He and Zhewei Yao},
      year={2025},
      eprint={2503.19988},
      archivePrefix={arXiv},
      primaryClass={cs.LG},
      url={https://arxiv.org/abs/2503.19988}, 
}

@misc{shkapenyuk2025automaticmetadataextractiontexttosql,
      title={Automatic Metadata Extraction for Text-to-SQL}, 
      author={Vladislav Shkapenyuk and Divesh Srivastava and Theodore Johnson and Parisa Ghane},
      year={2025},
      eprint={2505.19988},
      archivePrefix={arXiv},
      primaryClass={cs.DB},
      url={https://arxiv.org/abs/2505.19988}, 
}

@misc{liu2025xiyansqlnovelmultigeneratorframework,
      title={XiYan-SQL: A Novel Multi-Generator Framework For Text-to-SQL}, 
      author={Yifu Liu and Yin Zhu and Yingqi Gao and Zhiling Luo and Xiaoxia Li and Xiaorong Shi and Yuntao Hong and Jinyang Gao and Yu Li and Bolin Ding and Jingren Zhou},
      year={2025},
      eprint={2507.04701},
      archivePrefix={arXiv},
      primaryClass={cs.CL},
      url={https://arxiv.org/abs/2507.04701}, 
}

@misc{sheng2025cscsqlcorrectiveselfconsistencytexttosql,
      title={CSC-SQL: Corrective Self-Consistency in Text-to-SQL via Reinforcement Learning}, 
      author={Lei Sheng and Shuai-Shuai Xu},
      year={2025},
      eprint={2505.13271},
      archivePrefix={arXiv},
      primaryClass={cs.CL},
      url={https://arxiv.org/abs/2505.13271}, 
}

@misc{gao2025previewxiyansqlmultigeneratorensemble,
      title={A Preview of XiYan-SQL: A Multi-Generator Ensemble Framework for Text-to-SQL}, 
      author={Yingqi Gao and Yifu Liu and Xiaoxia Li and Xiaorong Shi and Yin Zhu and Yiming Wang and Shiqi Li and Wei Li and Yuntao Hong and Zhiling Luo and Jinyang Gao and Liyu Mou and Yu Li},
      year={2025},
      eprint={2411.08599},
      archivePrefix={arXiv},
      primaryClass={cs.AI},
      url={https://arxiv.org/abs/2411.08599}, 
}

@misc{talaei2024chesscontextualharnessingefficient,
      title={CHESS: Contextual Harnessing for Efficient SQL Synthesis}, 
      author={Shayan Talaei and Mohammadreza Pourreza and Yu-Chen Chang and Azalia Mirhoseini and Amin Saberi},
      year={2024},
      eprint={2405.16755},
      archivePrefix={arXiv},
      primaryClass={cs.LG},
      url={https://arxiv.org/abs/2405.16755}, 
}

@misc{bitsai,
  author    = {Team BitsAI},
  title     = {Bits AI:Scale your teams with autonomous AI agents across monitoring, development, and security},
  month     = {September},
  year      = {2025},
  url       = {https://www.datadoghq.com/product/platform/bits-ai/}
}

@misc{splunk,
  author    = {{Splunk}},
  title     = {AI Assistant in Observability Cloud },
  month     = {September},
  year      = {2025},
  url       = {https://www.splunk.com/en_us/products/splunk-ai-assistant-in-observability-cloud.html}
}

@misc{amazon,
  author    = {{Amazon}},
  title     = {Build a Text-to-SQL solution for data consistency in generative AI using Amazon Nova},
  month     = {September},
  year      = {2025},
  url       = {https://aws.amazon.com/blogs/machine-learning/build-a-text-to-sql-solution-for-data-consistency-in-generative-ai-using-amazon-nova/}
}

@misc{wretblad2024understanding,
      title={Understanding the Effects of Noise in Text-to-SQL: An Examination of the BIRD-Bench Benchmark}, 
      author={Niklas Wretblad and Fredrik Gordh Riseby and Rahul Biswas and Amin Ahmadi and Oskar Holmström},
      year={2024},
      eprint={2402.12243},
      archivePrefix={arXiv},
      primaryClass={cs.CL}
}

@article{chang2023prompt,
  title={How to Prompt LLMs for Text-to-SQL: A Study in Zero-shot, Single-domain, and Cross-domain Settings},
  author={Chang, Shuaichen and Fosler-Lussier, Eric},
  journal={arXiv preprint arXiv:2305.11853},
  year={2023}
}

@misc{dong2023c3,
      title={C3: Zero-shot Text-to-SQL with ChatGPT}, 
      author={Xuemei Dong and Chao Zhang and Yuhang Ge and Yuren Mao and Yunjun Gao and lu Chen and Jinshu Lin and Dongfang Lou},
      year={2023},
      eprint={2307.07306},
      archivePrefix={arXiv},
      primaryClass={cs.CL}
}

@misc{liu2023comprehensive,
      title={A comprehensive evaluation of ChatGPT's zero-shot Text-to-SQL capability}, 
      author={Aiwei Liu and Xuming Hu and Lijie Wen and Philip S. Yu},
      year={2023},
      eprint={2303.13547},
      archivePrefix={arXiv},
      primaryClass={cs.CL}
}

@misc{chat2querybenchAPI,
  author       = {TiDBCloud},
  title        = {{Get Started with Chat2Query API}},
  year         = {2020},
  howpublished = {\url{https://docs.pingcap.com/tidbcloud/use-chat2query-api}},
  note         = {Accessed: May 2024},
}

@misc{birdissues,
  author       = {Binyuan Hui},
  title        = {{Panel of BIRD Annotation Issues.}},
  year         = {2024},
  howpublished = {\url{https://github.com/AlibabaResearch/DAMO-ConvAI/issues/39
}},
  note         = {Accessed: May 2024},
}

@misc{sequeda2023benchmark,
      title={A Benchmark to Understand the Role of Knowledge Graphs on Large Language Model's Accuracy for Question Answering on Enterprise SQL Databases}, 
      author={Juan Sequeda and Dean Allemang and Bryon Jacob},
      year={2023},
      eprint={2311.07509},
      archivePrefix={arXiv},
      primaryClass={cs.AI}
}

@article{li2024can,
  author       = {Jinyang Li and
                  Binyuan Hui and
                  Ge Qu and
                  Binhua Li and
                  Jiaxi Yang and
                  Bowen Li and
                  Bailin Wang and
                  Bowen Qin and
                  Rongyu Cao and
                  Ruiying Geng and
                  Nan Huo and
                  Xuanhe Zhou and
                  Chenhao Ma and
                  Guoliang Li and
                  Kevin Chen{-}Chuan Chang and
                  Fei Huang and
                  Reynold Cheng and
                  Yongbin Li},
  title        = {Can {LLM} Already Serve as {A} Database Interface? {A} BIg Bench for
                  Large-Scale Database Grounded Text-to-SQLs},
  journal      = {CoRR},
  volume       = {abs/2305.03111},
  year         = {2023},
  url          = {https://doi.org/10.48550/arXiv.2305.03111},
  doi          = {10.48550/ARXIV.2305.03111},
  eprinttype    = {arXiv},
  eprint       = {2305.03111},
  timestamp    = {Thu, 27 Jul 2023 15:05:40 +0200},
  biburl       = {https://dblp.org/rec/journals/corr/abs-2305-03111.bib},
  bibsource    = {dblp computer science bibliography, https://dblp.org}
}

@inproceedings{
zhou2024solving,
title={Solving Challenging Math Word Problems Using {GPT}-4 Code Interpreter with Code-based Self-Verification},
author={Aojun Zhou and Ke Wang and Zimu Lu and Weikang Shi and Sichun Luo and Zipeng Qin and Shaoqing Lu and Anya Jia and Linqi Song and Mingjie Zhan and Hongsheng Li},
booktitle={The Twelfth International Conference on Learning Representations},
year={2024},
url={https://openreview.net/forum?id=c8McWs4Av0}
}

@misc{birdleader,
  author       = {Jinyang Li and
                  Binyuan Hui and
                  Ge Qu and
                  Binhua Li and
                  Jiaxi Yang and
                  Bowen Li and
                  Bailin Wang and
                  Bowen Qin and
                  Rongyu Cao and
                  Ruiying Geng and
                  Nan Huo and
                  Xuanhe Zhou and
                  Chenhao Ma and
                  Guoliang Li and
                  Kevin Chen{-}Chuan Chang and
                  Fei Huang and
                  Reynold Cheng and
                  Yongbin Li},
  title        = {{BIRD\_SQ}},
  year         = {2024},
  howpublished = {\url{https://bird-bench.github.io/}},
  note         = {Accessed: Sep 2025},
}

@article{chu2018axiomatic,
  title={Axiomatic foundations and algorithms for deciding semantic equivalences of SQL queries},
  author={Chu, Shumo and Murphy, Brendan and Roesch, Jared and Cheung, Alvin and Suciu, Dan},
  journal={arXiv preprint arXiv:1802.02229},
  year={2018}
}

@article{chu2017hottsql,
  title={HoTTSQL: Proving query rewrites with univalent SQL semantics},
  author={Chu, Shumo and Weitz, Konstantin and Cheung, Alvin and Suciu, Dan},
  journal={Acm sigplan notices},
  volume={52},
  number={6},
  pages={510--524},
  year={2017},
  publisher={ACM New York, NY, USA}
}

@inproceedings{zhou2022spes,
  title={SPES: A symbolic approach to proving query equivalence under bag semantics},
  author={Zhou, Qi and Arulraj, Joy and Navathe, Shamkant B and Harris, William and Wu, Jinpeng},
  booktitle={2022 IEEE 38th International Conference on Data Engineering (ICDE)},
  pages={2735--2748},
  year={2022},
  organization={IEEE}
}

@article{wang2024qed,
  title={QED: A powerful query equivalence decider for SQL},
  author={Wang, Shuxian and Pan, Sicheng and Cheung, Alvin},
  journal={Proceedings of the VLDB Endowment},
  volume={17},
  number={11},
  pages={3602--3614},
  year={2024},
  publisher={VLDB Endowment}
}

@inproceedings{veanes2010qex,
  title={Qex: Symbolic SQL query explorer},
  author={Veanes, Margus and Tillmann, Nikolai and De Halleux, Jonathan},
  booktitle={International Conference on Logic for Programming Artificial Intelligence and Reasoning},
  pages={425--446},
  year={2010},
  organization={Springer}
}

@inproceedings{chu2017demonstration,
  title={Demonstration of the cosette automated sql prover},
  author={Chu, Shumo and Li, Daniel and Wang, Chenglong and Cheung, Alvin and Suciu, Dan},
  booktitle={Proceedings of the 2017 ACM International Conference on Management of Data},
  pages={1591--1594},
  year={2017}
}

@inproceedings{chu2017cosette,
  title={Cosette: An Automated Prover for SQL.},
  author={Chu, Shumo and Wang, Chenglong and Weitz, Konstantin and Cheung, Alvin},
  booktitle={CIDR},
  pages={1--7},
  year={2017}
}

@article{li2023can,
  title={Can llm already serve as a database interface? a big bench for large-scale database grounded text-to-sqls},
  author={Li, Jinyang and Hui, Binyuan and Qu, Ge and Yang, Jiaxi and Li, Binhua and Li, Bowen and Wang, Bailin and Qin, Bowen and Geng, Ruiying and Huo, Nan and others},
  journal={Advances in Neural Information Processing Systems},
  volume={36},
  pages={42330--42357},
  year={2023}
}

@article{talaei2024chess,
  title={Chess: Contextual harnessing for efficient sql synthesis},
  author={Talaei, Shayan and Pourreza, Mohammadreza and Chang, Yu-Chen and Mirhoseini, Azalia and Saberi, Amin},
  journal={arXiv preprint arXiv:2405.16755},
  year={2024}
}

@article{li2025alpha,
  title={Alpha-sql: Zero-shot text-to-sql using monte carlo tree search},
  author={Li, Boyan and Zhang, Jiayi and Fan, Ju and Xu, Yanwei and Chen, Chong and Tang, Nan and Luo, Yuyu},
  journal={arXiv preprint arXiv:2502.17248},
  year={2025}
}

@article{sheng2025csc,
  title={CSC-SQL: Corrective Self-Consistency in Text-to-SQL via Reinforcement Learning},
  author={Sheng, Lei and Xu, Shuai-Shuai},
  journal={arXiv preprint arXiv:2505.13271},
  year={2025}
}

@article{donder2025cheaper,
  title={Cheaper, Better, Faster, Stronger: Robust Text-to-SQL without Chain-of-Thought or Fine-Tuning},
  author={D{\"o}nder, Yusuf Denizay and Hommel, Derek and Wen-Yi, Andrea W and Mimno, David and Jo, Unso Eun Seo},
  journal={arXiv preprint arXiv:2505.14174},
  year={2025}
}

@article{cao2024rsl,
  title={Rsl-sql: Robust schema linking in text-to-sql generation},
  author={Cao, Zhenbiao and Zheng, Yuanlei and Fan, Zhihao and Zhang, Xiaojin and Chen, Wei and Bai, Xiang},
  journal={arXiv preprint arXiv:2411.00073},
  year={2024}
}

@article{sheng2025slm,
  title={SLM-SQL: An Exploration of Small Language Models for Text-to-SQL},
  author={Sheng, Lei and Xu, Shuai-Shuai},
  journal={arXiv preprint arXiv:2507.22478},
  year={2025}
}

@article{li2025omnisql,
  title={Omnisql: Synthesizing high-quality text-to-sql data at scale},
  author={Li, Haoyang and Wu, Shang and Zhang, Xiaokang and Huang, Xinmei and Zhang, Jing and Jiang, Fuxin and Wang, Shuai and Zhang, Tieying and Chen, Jianjun and Shi, Rui and others},
  journal={arXiv preprint arXiv:2503.02240},
  year={2025}
}

@misc{msd2024,
  title        = {How MSD uses Amazon Bedrock to translate natural language into SQL for complex healthcare databases},
  author       = {{Amazon Web Services}},
  year         = {2024},
  howpublished = {\url{https://aws.amazon.com/blogs/machine-learning/how-merck-uses-amazon-bedrock-to-translate-natural-language-into-sql-for-complex-healthcare-databases/}},
  note         = {Accessed: 2025-09-21}
}

@article{shi2024survey,
  title={A survey on employing large language models for text-to-sql tasks},
  author={Shi, Liang and Tang, Zhengju and Zhang, Nan and Zhang, Xiaotong and Yang, Zhi},
  journal={ACM Computing Surveys},
  year={2024},
  publisher={ACM New York, NY}
}

@inproceedings{de2008z3,
  title={Z3: An efficient SMT solver},
  author={De Moura, Leonardo and Bj{\o}rner, Nikolaj},
  booktitle={International conference on Tools and Algorithms for the Construction and Analysis of Systems},
  pages={337--340},
  year={2008},
  organization={Springer}
}

@incollection{barrett2018satisfiability,
  title={Satisfiability modulo theories},
  author={Barrett, Clark and Tinelli, Cesare},
  booktitle={Handbook of model checking},
  pages={305--343},
  year={2018},
  publisher={Springer}
}

@inproceedings{mohamed2024verifying,
  title={Verifying SQL Queries using Theories of Tables and Relations},
  author={Mohamed, Mudathir and Reynolds, Andrew and Tinelli, Cesare and Barrett, Clark},
  booktitle={Proceedings of 25th Conference on Logic for Pro},
  volume={100},
  pages={445--463},
  year={2024}
}

@article{chandra2015data,
  title={Data generation for testing and grading SQL queries},
  author={Chandra, Bikash and Chawda, Bhupesh and Kar, Biplab and Reddy, KV Maheshwara and Shah, Shetal and Sudarshan, S},
  journal={The VLDB Journal},
  volume={24},
  number={6},
  pages={731--755},
  year={2015},
  publisher={Springer}
}

@inproceedings{zhong2020semantic,
  title={Semantic evaluation for text-to-SQL with distilled test suites},
  author={Zhong, Ruiqi and Yu, Tao and Klein, Dan},
  booktitle={Proceedings of the 2020 Conference on Empirical Methods in Natural Language Processing (EMNLP)},
  pages={396--411},
  year={2020}
}

@article{somwase2024data,
  title={Data Generation for Testing Complex Queries},
  author={Somwase, Sunanda and Das, Parismita and Sudarshan, S},
  journal={arXiv preprint arXiv:2409.18821},
  year={2024}
}
\bibliographystyle{iclr2026_conference}

\appendix
\newpage
\section{Related Work}
\label{sec:relatedwork}
A large number of Text-to-SQL frameworks have been proposed over the last few years by research groups in academia and industry~\citep{liu2023comprehensive,dong2023c3,chang2023prompt,chat2querybenchAPI,talaei2024chesscontextualharnessingefficient,gao2025previewxiyansqlmultigeneratorensemble,sequeda2023benchmark,sheng2025cscsqlcorrectiveselfconsistencytexttosql,liu2025xiyansqlnovelmultigeneratorframework,shkapenyuk2025automaticmetadataextractiontexttosql,zhai2025excotoptimizingreasoningtexttosql}. However, evaluation frameworks have received much less attention. There are two main publicly available platforms: BIRD-SQL~\citep{birdleader} and Spider~\citep{lei2024spider} that are commonly used to evaluate the performance of Text-to-SQL methods. Their evaluation procedure is performed on predefined database instances, whereas \sys searches for a separation database instance. A number of evaluation metrics were proposed to take into account partially correct generated queries~\citep{rves} or the efficiency of query executions~\citep{zhang2024benchmarkingtexttosqlcapabilitylarge}.
Recently,~\citep{li2024can,birdinteract2025} proposed an iterative evaluation framework in which the system can interact with the user by asking additional questions (e.g., to resolve ambiguity). However, the final evaluation of the correctness of the generated SQL query is still performed on a static database.

There are two lines of work in formal equivalence checking for SQL queries: full-fledged and bounded verification. The full-fledged methods~\citep{chu2017hottsql, chu2018axiomatic, zhou2022spes, zhou2024solving, wang2024qed} encode queries into specific representations (e.g., algebraic expressions~\citep{chu2018axiomatic, wang2024qed}) and determine equivalence by proving the equivalence of these representations, thereby guaranteeing equivalence of queries for any possible database. However, such methods typically support only a limited subset of SQL and cannot generate counterexamples for non-equivalent queries. In contrast, the bounded verification approaches~\citep{veanes2010qex, chu2017demonstration, chu2017cosette, verieql} check equivalence within a finite search space, making them capable of handling larger subsets of SQL and identifying counterexamples. To the best of our knowledge, \verieql supports the most expressive SQL fragments and rich integrity constraints, while also offering extensibility for new features~\citep{verieql}. In this work, we significantly extend the \verieql framework to support date and string types as well as a number of common operators for the Text-to-SQL evaluation task.

\revise{
Test data generation methods can also be useful for detecting query non-equivalence~\citep{chandra2015data,somwase2024data,zhong2020semantic}.However, when the counterexamples require very specific structures (which is the case for many query pairs that passed EX-test but failed SpotIt as seen in App. D), random fuzzing/testing can become unreliable in refuting equivalence. In contrast, \sys systematically searches over the space of possible differentiating databases, finds minimal counter-examples, and provides a formal guarantee: if two SQL queries are considered equivalent, then no counterexample of size less than a fixed number exists. Computational resources permitted, one could in principle run a portfolio of test-based and verification-based equivalence-checking methods in parallel to more quickly detect non-equivalence. This is an orthogonal but interesting future direction. 
}

\newpage
\section{\revise{Analysis of runtime}}

\revise{Tables~\ref{tab:median_number_of_columns}, \ref{tab:median_number_of_constraints}, \ref{tab:median_number_of_tables}, \ref{tab:median_number_of_subqueries}, and \ref{tab:median_number_of_ast_nodes} show the effect of different parameters on runtime, including the number of columns, integrity constraints, tables in the databases, the number of sub-queries in the gold SQL, and the number of nodes in the abstract syntax tree in the gold SQL. We found that all parameters except for the number of tables are positively correlated with median runtime.}

\begin{table}[h!]
\centering
\begin{tabular}{cc}
\toprule
\revise{\#columns} & \revise{Median runtime (s)} \\
\midrule
\revise{11} & \revise{0.4310} \\
\revise{21} & \revise{0.1701} \\
\revise{31} & \revise{0.1963} \\
\revise{48} & \revise{0.3887} \\
\revise{55} & \revise{0.5576} \\
\revise{64} & \revise{0.3007} \\
\revise{71} & \revise{0.5365} \\
\revise{89} & \revise{0.2672} \\
\revise{94} & \revise{0.3154} \\
\revise{115} & \revise{0.6537} \\
\revise{199} & \revise{0.8006} \\
\bottomrule
\end{tabular}
\caption{\revise{Median runtime by number of columns}}
\label{tab:median_number_of_columns}
\end{table}

\begin{table}[h!]
\centering
\begin{tabular}{cc}
\toprule
\revise{\#constraints} & \revise{Median runtime (s)} \\
\midrule
\revise{5} & \revise{0.2842} \\
\revise{7} & \revise{0.1701} \\
\revise{10} & \revise{0.4324} \\
\revise{16} & \revise{0.5576} \\
\revise{17} & \revise{0.3887} \\
\revise{19} & \revise{0.1963} \\
\revise{21} & \revise{0.5365} \\
\revise{36} & \revise{0.6877} \\
\bottomrule
\end{tabular}
\caption{\revise{Median runtime by number of constraints}}
\label{tab:median_number_of_constraints}
\end{table}

\begin{table}[h!]
\centering
\begin{tabular}{cc}
\toprule
\revise{\#tables} & \revise{Median runtime (s)} \\
\midrule
\revise{3} & \revise{0.2842} \\
\revise{4} & \revise{0.4310} \\
\revise{5} & \revise{0.1701} \\
\revise{6} & \revise{0.6537} \\
\revise{7} & \revise{0.8006} \\
\revise{8} & \revise{0.4249} \\
\revise{10} & \revise{0.1963} \\
\revise{13} & \revise{0.3154} \\
\bottomrule
\end{tabular}
\caption{\revise{Median runtime by number of tables}}
\label{tab:median_number_of_tables}
\end{table}

\begin{table}[h!]
\centering
\begin{tabular}{cc}
\toprule
\revise{\#subqueries} & \revise{Median runtime (s)} \\
\midrule
\revise{0} & \revise{0.3676} \\
\revise{1} & \revise{0.3972} \\
\revise{2} & \revise{1.9128} \\
\bottomrule
\end{tabular}
\caption{\revise{Median runtime by number of subqueries}}
\label{tab:median_number_of_subqueries}
\end{table}

\begin{table}[h!]
\centering
\begin{tabular}{cc}
\toprule
\revise{\#AST nodes} & \revise{Median runtime (s)} \\
\midrule
\revise{0--19} & \revise{0.1812} \\
\revise{20--39} & \revise{0.3408} \\
\revise{40--59} & \revise{0.3426} \\
\revise{60--79} & \revise{0.4216} \\
\revise{80--99} & \revise{0.9683} \\
\revise{100--119} & \revise{1.1865} \\
\revise{120--139} & \revise{0.2053} \\
\revise{140--159} & \revise{1.2395} \\
\bottomrule
\end{tabular}
\caption{\revise{Median runtime by number of AST nodes (buckets of 20)}} 
\label{tab:median_number_of_ast_nodes}
\end{table}

\newpage
\section{Limitations of existing bounded SQL equivalence checker}\label{app:next-steps-verify}

\revise{
While \sys builds on top of and extends \verieql, a state-of-the-art bounded verifier that claims to cover the largest SQL fragment, we find that there are still SQL operators which it either does not support or cannot precisely capture. In this section, we describe the features that appear frequently in failure cases of \sys.
\begin{itemize}


    \item The gold query for question 726 in the BIRD-dev benchmark.
\[
\begin{array}{l}
\sqlselect~ \text{superhero\_name}, \text{height\_cm}, \\
\hspace{15mm} \sqlrank()~ \sqlover~ (\sqlorderby~ \text{height\_cm}~ \sqldesc)~ \sqlas~ \text{HeightRank} \\
\sqlfrom~ \text{superhero}~ \sqlinnerjoin~ \text{publisher} \\
\hspace{15mm} \sqlon~ \text{superhero.publisher\_id} = \text{publisher.id} \\
\sqlwhere~ \text{publisher.publisher\_name = ``Marvel Comics''} \\
\end{array}
\]
    
    \sys does not support the window and analytic functions such as $\sqlrank$ and $\sqllag$.

    \item A SQL query generated by \textsc{OmniSQL}.
\[
\begin{array}{l}
\sqlwithrecursive~ \text{TimeSeries} \sqlas~ ( \\
\hspace{20mm} \sqlselect~ \text{`2016-01-01'}~ \sqlas~ \text{mth} \\
\hspace{20mm} \sqlunionall~ \\
\hspace{20mm} \sqlselect~ \sqldate(\text{mth}, \text{'+1 month'})~ \sqlas~ \text{mth}~ \sqlfrom~ \text{TimeSeries}~ \\
\hspace{20mm} \sqlwhere~ \text{mth} < \text{`2017-12-01'} \\
\hspace{15mm} ), \\
\hspace{15mm} \ldots \\
\sqlselect~ \text{product\_name}~ \sqlfrom~ \text{SalesRatio}~ \ldots~ \sqlorderby~ \text{product\_name}\\
\end{array}
\]
    \sys cannot encode recursive common table expressions above.
    
    \item Imprecisely encoding for ORDER BY and LIMIT:
    
    Since \verieql establishs tables under bag semantics (namely, multi-set), database instances are considered equivalent if they has the same tuples with the same multiplicities. For instance, $T_1$ and $T_2$ in Tables~\ref{app:exm:t1}--\ref{app:exm:t2} are equivalent under bag semantics.
    
    However, when \verieql symbolically execute $\sqlorderby~ \texttt{A}$, \verieql automatically converts semantics from bag to list in which the order of tuples matters. In such a case, database instances is equivalent iff they are tuple-wise the same. For instance, $T_1$ and $T_2$ in Tables~\ref{app:exm:t1}--\ref{app:exm:t2} are not equivalent under list semantics. 
    Furthermore, if $\sqlorderby~ \texttt{A}$ is followed by $\sqllimit~ 1$, then execution results on $T_1$ and $T_2$ are, respectively, $R_1$ and $R_2$.

\begin{table}[h!]
\vspace{-0.2cm}
\begin{minipage}{0.49\linewidth}
\centering
\scriptsize
\caption{\texttt{$T_1$} \label{app:exm:t1}}
\setlength{\tabcolsep}{4pt}
\renewcommand{\arraystretch}{1.15}
\begin{tabular}{l|ll}
& \texttt{A} & \texttt{B} \\
\hline
$R_1$ & \texttt{1} & \texttt{2} \\
$R_2$ & \texttt{1} & \texttt{3} \\
\end{tabular}
\end{minipage}
\hfill
\begin{minipage}{0.48\linewidth}
\centering
\scriptsize
\caption{\texttt{$T_2$} \label{app:exm:t2}}
\setlength{\tabcolsep}{4pt}
\renewcommand{\arraystretch}{1.15}
\begin{tabular}{l|ll}
& \texttt{A} & \texttt{B} \\
\hline
$R_2$ & \texttt{1} & \texttt{3} \\
$R_1$ & \texttt{1} & \texttt{2} \\
\end{tabular}
\end{minipage}
\vspace{-0.3cm}
\end{table}

    More concretely, consider the gold query $Q_1$ of question 653 in the BIRD-dev benchmark, a query $Q_2$ generated by \alphasql, and a counterexample database found by \verieql as follows:

    The gold query $Q_1$:
\[
\begin{array}{l}
\sqlselect~ \text{DisplayName}~ \sqlfrom~ \text{users}~ \sqlwhere~ \text{id} = (\\
\hspace{10mm} \sqlselect~ \text{OwnerUserId}~ \sqlfrom~ \text{posts}~ \sqlorderby~ \text{ViewCount}~ \sqldesc~ \sqllimit~ 1 \\
)\\
\end{array}
\]

    The generated query $Q_2$:
\[
\begin{array}{l}
\sqlselect~ \text{u.displayname}~ \sqlas~ \text{ownerdisplayname} \\
\sqlfrom~ \text{posts}~ \sqlas~ \text{p}~ \sqlinnerjoin~ \text{users}~ \sqlas~ \text{u}~ \sqlon~ \text{p.owneruserid} = \text{u.id} \\
\sqlorderby~ \text{p.viewcount}~ \sqldesc~ \sqllimit~ 1 \\
\end{array}
\]

\begin{table}[h!]
\vspace{-0.2cm}
\begin{minipage}{0.49\linewidth}
\centering
\scriptsize
\caption{\texttt{$posts$} \label{app:exm:posts}}
\setlength{\tabcolsep}{4pt}
\renewcommand{\arraystretch}{1.15}
\begin{tabular}{ll}
\texttt{viewcount} & \texttt{owneruserid} \\
\hline
\texttt{\nullv} & \texttt{1} \\
\texttt{\nullv} & \texttt{0} \\
\end{tabular}
\end{minipage}
\hfill
\begin{minipage}{0.48\linewidth}
\centering
\scriptsize
\caption{\texttt{$users$} \label{app:exm:users}}
\setlength{\tabcolsep}{4pt}
\renewcommand{\arraystretch}{1.15}
\begin{tabular}{ll}
\texttt{id} & \texttt{displayname} \\
\hline
\texttt{0} & \texttt{`A'} \\
\texttt{1} & \texttt{`B'} \\
\end{tabular}
\end{minipage}
\vspace{-0.3cm}
\end{table}

    \verieql's execution results of $Q_1$ and $Q_2$ are shown in Tables~\ref{app:exm:verieql-result-q1}--\ref{app:exm:verieql-result-q2}.

\begin{table}[h!]
\begin{minipage}{0.49\linewidth}
\centering
\scriptsize
\caption{\texttt{\verieql's result of $Q_1$} \label{app:exm:verieql-result-q1}}
\setlength{\tabcolsep}{4pt}
\renewcommand{\arraystretch}{1.15}
\begin{tabular}{l}
\texttt{DisplayName} \\
\hline
\texttt{`A'} \\
\end{tabular}
\end{minipage}
\hfill
\begin{minipage}{0.48\linewidth}
\centering
\scriptsize
\caption{\texttt{\verieql's result of $Q_2$} \label{app:exm:verieql-result-q2}}
\setlength{\tabcolsep}{4pt}
\renewcommand{\arraystretch}{1.15}
\begin{tabular}{l}
\texttt{ownerdisplayname} \\
\hline
\texttt{`B'} \\
\end{tabular}
\end{minipage}
\vspace{-0.3cm}
\end{table}

    SQLite's execution results of $Q_1$ and $Q_2$ are shown in Tables~\ref{app:exm:sqlite-result-q1}--\ref{app:exm:sqlite-result-q2}.

\begin{table}[h!]
\begin{minipage}{0.49\linewidth}
\centering
\scriptsize
\caption{\texttt{SQLite's result of $Q_1$} \label{app:exm:sqlite-result-q1}}
\setlength{\tabcolsep}{4pt}
\renewcommand{\arraystretch}{1.15}
\begin{tabular}{l}
\texttt{DisplayName} \\
\hline
\texttt{`A'} \\
\end{tabular}
\end{minipage}
\hfill
\begin{minipage}{0.48\linewidth}
\centering
\scriptsize
\caption{\texttt{SQLite's result of $Q_2$} \label{app:exm:sqlite-result-q2}}
\setlength{\tabcolsep}{4pt}
\renewcommand{\arraystretch}{1.15}
\begin{tabular}{l}
\texttt{ownerdisplayname} \\
\hline
\texttt{`A'} \\
\end{tabular}
\end{minipage}
\vspace{-0.3cm}
\end{table}

    Naturally, \verieql identifies a spurious counterexample where $Q_1$'s result is \text{`B'} instead of \text{`A'}. This is because the intermediate table from the $\sqlfrom$ clause of $Q_2$ is shown as Table~\ref{app:exm:intermediate} where the values in column ``viewcount'' are all \nullv values. While executing the $\sqlorderby$, \verieql does not reorder the tuples of this intermediate table but SQLite engine will swap these two tuples. Therefore, \verieql failed in this verification task.

\begin{table}[h!]
\centering
\scriptsize
\caption{\verieql's intermediate table of $Q_2$ \label{app:exm:intermediate}}
\setlength{\tabcolsep}{4pt}
\renewcommand{\arraystretch}{1.15}
\begin{tabular}{ll}
\texttt{u.DisplayName} & \texttt{p.viewcount} \\
\hline
\texttt{`A'} & \texttt{\nullv} \\
\texttt{`B'} & \texttt{\nullv} \\
\end{tabular}
\end{table}

\end{itemize}
}

\newpage
\section{Additional inconsistency between predicted and gold SQLs overlooked by \ex \label{app:additional-examples}}

\subsection{Example~\ref{exp:one} (extended) \label{app:exm:one}}
Tables~\ref{app:exm:one:patient}--\ref{app:exm:one:laboratory} show a counterexample database $\db_{\cexv}$ (these are two relevant tables).
The generated SQL $\genqueryshort$ returns no records, since \texttt{laboratory.rnp} is equal to `+-' in the single record that violated \texttt{NOT laboratory.rnp IN ('-', '+-')}.
In contrast, the gold SQL $\goldqueryshort$ returns '1000-01-01', because the condition \texttt{T2.rnp != '-' OR '+-'} is incorrect.

\begin{table}[h!]
\centering
\scriptsize
\caption{\texttt{patient} \label{app:exm:one:patient}}
\setlength{\tabcolsep}{4pt}
\renewcommand{\arraystretch}{1.15}
\begin{tabular}{lllllll}
\texttt{id} & \texttt{sex} & \texttt{birthday} & \texttt{description} & \texttt{first\_date} & \texttt{admission} & \texttt{diagnosis} \\
\hline
\texttt{0} & \texttt{'1'} & \texttt{'1000-01-01'} & \texttt{'1000-01-01'} & \texttt{'1000-01-01'} & \texttt{'1'} & \texttt{'1'} \\
\end{tabular}
\end{table}

\begin{table}[h!]
\centering
\scriptsize
\caption{\texttt{laboratory} (skipped irrelevant columns) \label{app:exm:one:laboratory}}
\setlength{\tabcolsep}{6pt}
\renewcommand{\arraystretch}{1.15}
\begin{tabular}{lllllll}
\texttt{id} & \texttt{date} & \texttt{got} & \texttt{gpt} & \texttt{ldh} & \texttt{RNP}&\ldots \\
\hline
\texttt{0} & \texttt{'1000-01-01'} & \texttt{0} & \texttt{0} & \texttt{0} & \texttt{'+-'}&\ldots \\
\end{tabular}
\end{table}

\subsection{Example~\ref{exp:two} (extended) \label{app:exm:two}}
Tables~\ref{app:exm:two:examination}--\ref{app:exm:two:patient}  show a counterexample database $\db_{\cexv}$ (these are two relevant tables).

The generated SQL $\genqueryshort$ counts two records while the gold SQL $\goldqueryshort$ counts only one record, because the \texttt{DISTINCT} operator is applied before counting.

\begin{table}[h!]
\centering
\scriptsize
\caption{\texttt{examination}   (skipped irrelevant columns) \label{app:exm:two:examination}}
\setlength{\tabcolsep}{4pt}
\renewcommand{\arraystretch}{1.15}
\begin{tabular}{lllllllllllll}
\texttt{id} & \texttt{examination\_date} & \texttt{acl\_igg} & \texttt{acl\_igm} & \texttt{ana} & \texttt{ana\_pattern} & \texttt{acl\_iga} & \texttt{diagnosis} & \texttt{kct} & \texttt{rvvt} & \texttt{lac} &  $\ldots$ \\
\hline
\texttt{1} & \texttt{'1000-01-01'} & \texttt{11} & \texttt{12} & \texttt{0} & \texttt{'1'} & \texttt{0} & \texttt{'1'} & \texttt{'1'} & \texttt{'1'} & \texttt{'1'} &  $\ldots$ \\
\texttt{1} & \texttt{'1000-01-01'} & \texttt{14} & \texttt{15} & \texttt{0} & \texttt{'1'} & \texttt{0} & \texttt{'1'} & \texttt{'1'} & \texttt{'1'} & \texttt{'1'} &  $\ldots$ \\
\end{tabular}
\end{table}

\begin{table}[h!]
\centering
\scriptsize
\caption{\texttt{patient}  \label{app:exm:two:patient}}
\setlength{\tabcolsep}{6pt}
\renewcommand{\arraystretch}{1.15}
\begin{tabular}{lllllll}
\texttt{id} & \texttt{sex} & \texttt{birthday} & \texttt{description} & \texttt{first\_date} & \texttt{admission} & \texttt{diagnosis} \\
\hline
\texttt{0} & \texttt{'1'} & \texttt{'1000-01-01'} & \texttt{'1000-01-01'} & \texttt{'1000-01-01'} & \texttt{'1'} & \texttt{'1'} \\
\texttt{1} & \texttt{'1'} & \texttt{'1000-01-01'} & \texttt{'1000-01-01'} & \texttt{'1000-01-01'} & \texttt{'1'} & \texttt{'1'} \\
\end{tabular}
\end{table}

\newpage
\subsection{Additional examples}


\begin{example}\label{exp:three}
Consider the question $\queryshortnl_{3}$ and the corresponding SQL queries (Figure~\ref{tab:motivationexamples:one}). The differentiating database found by \sys is shown in Tables~\ref{app:exm:three:patient},\ref{app:exm:three:laboratory}, \ref{app:exm:three:examination}. Note that there is a typo in the evidence. According to external medical sources, the normal range of uric acid levels in females should be defined as less than or equal to $6.50$, not greater than. The annotator overlooked this typo, and as a result, the gold SQL is clearly incorrect. \qed
\end{example}

\begin{figure}[h!]
    \centering
    \scriptsize
    \begin{tabular}{@{}l@{}}
      \hline
         \emph{$\queryshortnl_{3}$: "What is the anti Cardiolipin antibody concentration of the female patient} \\
         \emph{with the highest uric acid level in the normal range?"}\\
         \emph{Evidence: "Anti Cardiolipin antibody concentration refers to `aCL IgG`, `aCL IgM`, `aCL IgA`; } \\
         \emph{female patient refers to Sex = F'; highest uric acid level in the normal range refers to MAX(UA $>$ 6.50);"}\\         
      \hline
\begin{lstlisting}[
           language=SQL,
           showspaces=false,
            keywordstyle=\scriptsize\color{blue}\ttfamily,
           basicstyle=\scriptsize\ttfamily,
           commentstyle=\color{gray},
           escapeinside={(*@}{@*)},           
        mathescape=true
        ]
/*Gold SQL $\goldqueryshort$*/: 
SELECT T3.acl_igg, T3.acl_igm, T3.acl_iga
FROM patient AS T1
INNER JOIN laboratory AS T2 ON T1.id = T2.id
INNER JOIN examination AS T3 ON T3.id = T2.id
WHERE T1.sex = 'F' AND (*@\colorbox{pink}{T2.ua > 6.5}@*)  
ORDER BY T2.ua DESC
LIMIT 1
\end{lstlisting} \\
\begin{lstlisting}[
           language=SQL,
           showspaces=false,
            keywordstyle=\scriptsize\color{blue}\ttfamily,
           basicstyle=\scriptsize\ttfamily,
           commentstyle=\color{gray},
           escapeinside={(*@}{@*)},           
        mathescape=true
        ]
/*Generated SQL $\genqueryshort$*/: 
SELECT examination.acl_igg, examination.acl_igm, examination.acl_iga
FROM patient
INNER JOIN laboratory ON patient.id = laboratory.id
INNER JOIN examination ON patient.id = examination.id
WHERE patient.sex = 'F' AND (*@\colorbox{pink}{laboratory.ua <= 6.5}@*)
ORDER BY laboratory.ua DESC
LIMIT 1
\end{lstlisting}\\
      \hline
   \end{tabular}
          \caption{An example of a query with an incorrect gold SQL.}
    \label{tab:motivationexamples:one}
\end{figure}

\begin{table}[h!]
\centering
\scriptsize
\caption{\texttt{patient} (skipped irrelevant columns) }
\label{app:exm:three:patient}
\setlength{\tabcolsep}{4pt}
\renewcommand{\arraystretch}{1.15}
\begin{tabular}{lll}
\texttt{id} & \texttt{sex} &\ldots\\
\hline
\texttt{0} & \texttt{'F'} &\ldots\\
\end{tabular}
\end{table}

\begin{table}[h!]
\centering
\scriptsize
\caption{\texttt{laboratory} (skipped irrelevant columns) }
\label{app:exm:three:laboratory}
\setlength{\tabcolsep}{4pt}
\renewcommand{\arraystretch}{1.15}
\begin{tabular}{lll}
\texttt{id} & \texttt{ua} &\ldots\\
\hline
\texttt{0} & \texttt{6.5} &\ldots\\
\end{tabular}
\end{table}

\begin{table}[h!]
\centering
\scriptsize
\caption{\texttt{examination} (skipped irrelevant columns) }
\label{app:exm:three:examination}
\setlength{\tabcolsep}{4pt}
\renewcommand{\arraystretch}{1.15}
\begin{tabular}{lllll}
\texttt{id} & \texttt{acl\_igg} & \texttt{acl\_igm} & \texttt{acl\_iga} &\ldots\\
\hline
\texttt{0} & \texttt{1} & \texttt{1} & \texttt{1} &\ldots\\
\end{tabular}
\end{table}

\newpage

\begin{example}\label{exp:four}
Consider the question $\queryshortnl_{4}$ and the corresponding SQL queries (Figure~\ref{tab:motivationexamples:two}). The differentiating database found by \sys is shown in Tables~\ref{app:exm:four:transactions_1k},\ref{app:exm:four:gasstations}. The natural langue question asks for transactions after January 1st, 2012, which requires excluding January 1st, 2012. However, the gold SQL uses a greater-than-or-equal-to condition, which includes 2012/01/01, thus being incorrect. \qed
\end{example}

\begin{figure}[h!]
    \centering
    \scriptsize
    \begin{tabular}{@{}l@{}}
      \hline
      \emph{$\queryshortnl_{4}$: "Among the transactions made in gas stations in the Czech Republic, how many took place after 2012/1/1?"} \\
      \emph{Evidence: "Country code for Czech Republic is `CZE'."} \\
      \hline
\begin{lstlisting}[
  language=SQL,
  showspaces=false,
  keywordstyle=\scriptsize\color{blue}\ttfamily,
  basicstyle=\scriptsize\ttfamily,
  commentstyle=\color{gray},
  escapeinside={(*@}{@*)},
  mathescape=true
]
/*Gold SQL $\goldqueryshort$*/:
SELECT COUNT(T1.transactionid)
FROM transactions_1k AS T1
INNER JOIN gasstations AS T2 ON T1.gasstationid = T2.gasstationid
WHERE T2.country = 'CZE' AND (*@\colorbox{pink}{STRFTIME('\%Y', T1.date) $\ge$ '2012'}@*);

\end{lstlisting} \\
\begin{lstlisting}[
  language=SQL,
  showspaces=false,
  keywordstyle=\scriptsize\color{blue}\ttfamily,
  basicstyle=\scriptsize\ttfamily,
  commentstyle=\color{gray},
  escapeinside={(*@}{@*)},
  mathescape=true
]
/*Generated SQL $\genqueryshort$*/:
SELECT COUNT(*)
FROM transactions_1k AS T
INNER JOIN gasstations AS G ON T.gasstationid = G.gasstationid
WHERE G.country = 'CZE' AND (*@\colorbox{pink}{T.date > '2012-01-01'}@*);
\end{lstlisting} \\
      \hline
    \end{tabular}
    \caption{An example of a query with an incorrect gold SQL.}
    \label{tab:motivationexamples:two}
\end{figure}

\begin{table}[h!]
\centering
\scriptsize
\caption{\texttt{transactions\_1k} (skipped irrelevant columns)}
\label{app:exm:four:transactions_1k}
\setlength{\tabcolsep}{4pt}
\renewcommand{\arraystretch}{1.15}
\begin{tabular}{llll}
\texttt{transaction\_id} & \texttt{gasstation\_id} & \texttt{date} &\ldots\\
\hline
\texttt{0} & \texttt{0} & \texttt{'2012-01-01'} &\ldots\\
\end{tabular}
\end{table}

\begin{table}[h!]
\centering
\scriptsize
\caption{\texttt{gasstations} (skipped irrelevant columns)}
\label{app:exm:four:gasstations}
\setlength{\tabcolsep}{4pt}
\renewcommand{\arraystretch}{1.15}
\begin{tabular}{lll}
\texttt{gasstation\_id} & \texttt{country} &\ldots\\
\hline
\texttt{0} & \texttt{'CZE'}&\ldots\\
\end{tabular}
\end{table}

\newpage

\begin{example}\label{exp:five}
Consider the question $\queryshortnl_{5}$ and the corresponding SQL queries (Figure~\ref{tab:motivationexamples:three}). The differentiating database found by \sys is shown in Tables~\ref{app:exm:five:transactions_1k},\ref{app:exm:five:gasstations}. This example demonstrates an incorrect gold SQL, which orders by the latest time (DESC) rather than the earlier time (ASC). This directly contradicts the natural language question. \qed
\end{example}

\begin{figure}[h!]
    \centering
    \scriptsize
    \begin{tabular}{@{}l@{}}
      \hline
      \emph{$\queryshortnl_{5}$: "Which country's gas station had the first paid customer in 2012/8/25?"}\\
      \emph{Evidence: "2012/8/25' can be represented by '2012-08-25'."}\\
      \hline
\begin{lstlisting}[
  language=SQL,
  showspaces=false,
  keywordstyle=\scriptsize\color{blue}\ttfamily,
  basicstyle=\scriptsize\ttfamily,
  commentstyle=\color{gray},
  escapeinside={(*@}{@*)},
  mathescape=true
]
/*Gold SQL $\goldqueryshort$*/:
SELECT T2.country
FROM transactions_1k AS T1
INNER JOIN gasstations AS T2 ON T1.gasstationid = T2.gasstationid
WHERE T1.date = '2012-08-25'
ORDER BY (*@\colorbox{pink}{T1.time DESC}@*)
LIMIT 1;
\end{lstlisting} \\
\begin{lstlisting}[
  language=SQL,
  showspaces=false,
  keywordstyle=\scriptsize\color{blue}\ttfamily,
  basicstyle=\scriptsize\ttfamily,
  commentstyle=\color{gray},
  escapeinside={(*@}{@*)},
  mathescape=true
]
/*Generated SQL $\genqueryshort$*/:
SELECT G.country
FROM gasstations AS G
JOIN (
  SELECT gasstationid
  FROM transactions_1k
  WHERE date = '2012-08-25'
  ORDER BY (*@\colorbox{pink}{time ASC LIMIT 1}@*)
) AS T
ON G.gasstationid = T.gasstationid;
\end{lstlisting} \\
  \hline
    \end{tabular}
    \caption{An example of a query with an incorrect gold SQL.}
    \label{tab:motivationexamples:three}
\end{figure}

\begin{table}[h!]
\centering
\scriptsize
\caption{\texttt{transactions\_1k} (skipped irrelevant columns)}
\label{app:exm:five:transactions_1k}
\setlength{\tabcolsep}{4pt}
\renewcommand{\arraystretch}{1.15}
\begin{tabular}{llll}
\texttt{gasstation\_id} & \texttt{date} & \texttt{time} &\ldots\\
\hline
\texttt{0} & \texttt{'2012-08-25'} & \texttt{1} &\ldots\\
\texttt{0} & \texttt{'2012-08-25'} & \texttt{2} &\ldots\\
\end{tabular}
\end{table}

\begin{table}[h!]
\centering
\scriptsize
\caption{\texttt{gasstations} (skipped irrelevant columns)}
\label{app:exm:five:gasstations}
\setlength{\tabcolsep}{4pt}
\renewcommand{\arraystretch}{1.15}
\begin{tabular}{lll}
\texttt{gasstation\_id} & \texttt{country} &\ldots\\
\hline
\texttt{0} & \texttt{'1'} &\ldots\\
\end{tabular}
\end{table}

\newpage

\begin{example}\label{exp:six}
Consider the question $\queryshortnl_{6}$ and the corresponding SQL queries
(Figure~\ref{tab:motivationexamples:four}). The differentiating database found by \sys is shown in Tables~\ref{app:exm:six:patient}, \ref{app:exm:six:laboratory}. The gold SQL incorrectly encodes the exclusive inequality specified in the natural langue question by using the BETWEEN operator, which leads to inclusive bounds. Thus, the gold SQL is incorrect as it includes values outside of the specified range.\qed
\end{example}

\begin{figure}[h!]
    \centering
    \scriptsize
    \begin{tabular}{@{}l@{}}
      \hline
         \emph{$\queryshortnl_{6}$: "Please list a patient's platelet level if it is within the normal range} \\
         \emph{and if he or she is daignosed with MCTD"}\\
         \emph{Evidence: "PLT $>$ 100 and PLT $<$ 400 means platelet level is within the normal range; } \\
         \emph{PLT $<$ 100 and PLT $>$ 400 means platelet level is not within the normal range; } \\
         \emph{diagnosed with MCTD refers to Diagnosis = 'MCTD'"; } \\       
      \hline
\begin{lstlisting}[
           language=SQL,
           showspaces=false,
            keywordstyle=\scriptsize\color{blue}\ttfamily,
           basicstyle=\scriptsize\ttfamily,
           commentstyle=\color{gray},
           escapeinside={(*@}{@*)},           
        mathescape=true
        ]
/*Gold SQL $\goldqueryshort$*/: 
SELECT T2.plt 
FROM patient AS T1 
INNER JOIN laboratory AS T2 ON T1.id = T2.id
WHERE T1.diagnosis = 'MCTD' AND (*@\colorbox{pink}{T2.plt BETWEEN 100 AND 400}@*)
\end{lstlisting} \\
\begin{lstlisting}[
           language=SQL,
           showspaces=false,
            keywordstyle=\scriptsize\color{blue}\ttfamily,
           basicstyle=\scriptsize\ttfamily,
           commentstyle=\color{gray},
           escapeinside={(*@}{@*)},           
        mathescape=true
        ]
/*Generated SQL $\genqueryshort$*/: 
SELECT L.plt 
FROM LABORATORY L 
INNER JOIN PATIENT P ON L.id = P.id 
WHERE P.diagnosis = 'MCTD' AND (*@\colorbox{pink}{L.plt > 100 AND L.plt < 400}@*)
\end{lstlisting}\\
      \hline
   \end{tabular}
          \caption{An example of a query with an incorrect gold SQL.}
    \label{tab:motivationexamples:four}
\end{figure}

\begin{table}[h!]
\centering
\scriptsize
\caption{\texttt{patient} (skipped irrelevant columns)}
\label{app:exm:six:patient}
\setlength{\tabcolsep}{4pt}
\renewcommand{\arraystretch}{1.15}
\begin{tabular}{lll}
\texttt{id} & \texttt{diagnosis} &\ldots\\
\hline
\texttt{0} & \texttt{'MCTD'} &\ldots\\
\end{tabular}
\end{table}

\begin{table}[h!]
\centering
\scriptsize
\caption{\texttt{laboratory} (skipped irrelevant columns)}
\label{app:exm:six:laboratory}
\setlength{\tabcolsep}{4pt}
\renewcommand{\arraystretch}{1.15}
\begin{tabular}{lll}
\texttt{id} & \texttt{plt} &\ldots\\
\hline
\texttt{0} & \texttt{100} &\ldots\\
\end{tabular}
\end{table}

\newpage

\begin{example}\label{exp:seven}
Consider the question $\queryshortnl_{7}$ and the corresponding SQL queries (Figure~\ref{tab:motivationexamples:five}). The differentiating database found by \sys is shown in 
Tables~\ref{app:exm:seven:member}, \ref{app:exm:seven:major}. In this example, the generated SQL is incorrect as it is clearly missing the link\_to\_major constraint, filtering only by name.\qed
\end{example}

\begin{figure}[h!]
    \centering
    \scriptsize
       \begin{tabular}{@{}l@{}}
      \hline
        \emph{$\queryshortnl_{7}$: "Please indicate the college of the person whose first name is Katy}\\
        \emph{with the link to the major 'rec1N0upiVLy5esTO' "}\\        
      \hline
      \begin{lstlisting}[
           language=SQL,
           showspaces=false,
            keywordstyle=\scriptsize\color{blue}\ttfamily,
           basicstyle=\scriptsize\ttfamily,
           commentstyle=\color{gray},
           escapeinside={(*@}{@*)},           
        mathescape=true
        ]
/*Gold SQL $\goldqueryshort$*/: 
SELECT T2.college 
FROM member AS T1 
INNER JOIN major AS T2 ON T2.major_id = T1.link_to_major 
WHERE  (*@\colorbox{pink}{T1.link\_to\_major = 'rec1N0upiVLy5esTO' AND T1.first\_name = 'Katy'}@*) 
\end{lstlisting} \\
\begin{lstlisting}[
           language=SQL,
           showspaces=false,
            keywordstyle=\scriptsize\color{blue}\ttfamily,
           basicstyle=\scriptsize\ttfamily,
           commentstyle=\color{gray},
           escapeinside={(*@}{@*)},           
        mathescape=true
        ]
/*Generated SQL $\genqueryshort$*/: 
SELECT major.college 
FROM member 
INNER JOIN MAJOR ON member.link_to_major = major.major_id 
WHERE (*@\colorbox{pink}{member.first\_name = 'Katy'}@*)
\end{lstlisting} \\
  \hline
    \end{tabular}
    \caption{An example of a query with an incorrect generated SQL.}
    \label{tab:motivationexamples:five}
\end{figure}

\begin{table}[h!]
\centering
\scriptsize
\caption{\texttt{member} (skipped irrelevant columns) }
\label{app:exm:seven:member}
\setlength{\tabcolsep}{4pt}
\renewcommand{\arraystretch}{1.15}
\begin{tabular}{lll}
\texttt{link\_to\_major} & \texttt{first\_name} &\ldots\\
\hline
\texttt{'1'} & \texttt{'Katy'} &\ldots\\
\end{tabular}
\end{table}

\begin{table}[h!]
\centering
\scriptsize
\caption{\texttt{major} (skipped irrelevant columns)}
\label{app:exm:seven:major}
\setlength{\tabcolsep}{4pt}
\renewcommand{\arraystretch}{1.15}
\begin{tabular}{lll}
\texttt{major\_id} & \texttt{college} &\ldots\\
\hline
\texttt{1} & \texttt{'0'} &\ldots\\
\end{tabular}
\end{table}

\newpage

\begin{example}\label{exp:eight}
Consider the question $\queryshortnl_{8}$ and the corresponding SQL queries (Figure~\ref{tab:motivationexamples:six}). The differentiating database found by \sys is shown in Tables~\ref{app:exm:eight:patient}, \ref{app:exm:eight:examination}. In this example, the generated SQL only checks whether the patient was diagnosed with SLE on January 1st, 1997. However, the natural language question also asks for the patient's original diagnose at their first hospital visit. Since the generated SQL doesn't include this condition, it's incorrect as it could return a diagnoses from a later visit rather than the patient's first one.\qed
\end{example}

\begin{figure}[h!]
  \centering
  \scriptsize
  \begin{tabular}{@{}p{0.95\linewidth}@{}} 
  \hline
  \emph{$\queryshortnl_{8}$: "For the patient who was diagnosed SLE on 1997/1/27, what was his/her original diagnose when he/she came to the hospital for the first time?"}\\
  \emph{Evidence: "'SLE' and original diagnose refers to Diagnosis; 1997/1/27 refers to `Examination Date` = '1997-01-27'; first came to the hospital refers to patient.'First Date'."}\\
  \hline

  \begin{minipage}{\linewidth}
\begin{lstlisting}[
  language=SQL,
  showspaces=false,
  keywordstyle=\scriptsize\color{blue}\ttfamily,
  basicstyle=\scriptsize\ttfamily,
  commentstyle=\color{gray},
  escapeinside={(*@}{@*)},
  mathescape=true
]
/*Gold SQL (*@$\goldqueryshort$@*)*/:
SELECT T1.diagnosis
FROM patient AS T1
INNER JOIN examination AS T2 ON T1.id = T2.id
WHERE T1.id = (
  SELECT id
  FROM examination
  WHERE examination_date = '1997-01-27' AND diagnosis = 'SLE'
) AND (*@\colorbox{pink}{\texttt{T2.examination\_date = T1.first\_date;}}@*)
\end{lstlisting}
  \end{minipage}\\[0.5ex]

  \begin{minipage}{\linewidth}
\begin{lstlisting}[
  language=SQL,
  showspaces=false,
  keywordstyle=\scriptsize\color{blue}\ttfamily,
  basicstyle=\scriptsize\ttfamily,
  commentstyle=\color{gray},
  escapeinside={(*@}{@*)},
  mathescape=true
]
/*Generated SQL (*@$\genqueryshort$@*)*/:
SELECT T2.diagnosis
FROM examination AS T1
INNER JOIN patient AS T2 ON T1.id = T2.id
WHERE T1.diagnosis = 'SLE' AND (*@\colorbox{pink}{\texttt{T1.examination\_date = '1997-01-27';}}@*)
\end{lstlisting}
  \end{minipage}\\
  \hline
  \end{tabular}
  \caption{An example of a query with an incorrect generated SQL.}
  \label{tab:motivationexamples:six}
\end{figure}

\begin{table}[h!]
\centering
\scriptsize
\caption{\texttt{patient} (skipped irrelevant columns)}
\label{app:exm:eight:patient}
\setlength{\tabcolsep}{4pt}
\renewcommand{\arraystretch}{1.15}
\begin{tabular}{llll}
\texttt{id} & \texttt{diagnosis} & \texttt{first\_date} &\ldots\\
\hline
\texttt{0} & \texttt{'1'} & \texttt{'1997-01-26'} &\ldots\\
\end{tabular}
\end{table}

\begin{table}[h!]
\centering
\scriptsize
\caption{\texttt{examination} (skipped irrelevant columns)}
\label{app:exm:eight:examination}
\setlength{\tabcolsep}{4pt}
\renewcommand{\arraystretch}{1.15}
\begin{tabular}{llll}
\texttt{id} & \texttt{examination\_date} & \texttt{diagnosis} &\ldots\\
\hline
\texttt{0} & \texttt{'1997-01-27'} & \texttt{'SLE'} &\ldots\\
\end{tabular}
\end{table}

\newpage

\begin{example}\label{exp:nine}
Consider the question $\queryshortnl_{9}$ and the corresponding SQL queries (Figure~\ref{tab:motivationexamples:seven}). The differentiating database found by \sys is shown in Tables~\ref{app:exm:nine:major},\ref{app:exm:nine:member}. This is an example of an ambiguous question. The term 'members' can be interpreted in at least two ways: any student who is a part of the club, or more specifically, students in the club with the recorded position of 'member'. While the gold SQL takes the second interpretation, filtering on T2.position = 'Member', it's just as reasonable to assume that all students in the club are members, and leave out a secondary filter. Coupled with the lack of evidence, the resulting difference in queries is most likely due to the ambiguity of the natural language question. Hence, it's been marked as an ambiguous question.\qed
\end{example}

\begin{figure}[h!]
    \centering
    \scriptsize
    \begin{tabular}{@{}l@{}}
      \hline
         \emph{$\queryshortnl_{9}$: "List the last name of members with a major in environmental engineering } \\
         \emph{and include its department and college name.} \\
         \emph{Evidence: "Environmental Engineering’ is the major name" } \\      
      \hline
\begin{lstlisting}[
           language=SQL,
           showspaces=false,
            keywordstyle=\scriptsize\color{blue}\ttfamily,
           basicstyle=\scriptsize\ttfamily,
           commentstyle=\color{gray},
           escapeinside={(*@}{@*)},           
        mathescape=true
        ]
/*Gold SQL $\goldqueryshort$*/: 
SELECT T2.last_name, T1.department, T1.college 
FROM major AS T1 
INNER JOIN member AS T2 ON T1.major_id = T2.link_to_major 
WHERE (*@\colorbox{pink}{T2.position = 'Member' AND T1.major\_name = 'Enviormental Engineering'}@*)
\end{lstlisting} \\
\begin{lstlisting}[
           language=SQL,
           showspaces=false,
            keywordstyle=\scriptsize\color{blue}\ttfamily,
           basicstyle=\scriptsize\ttfamily,
           commentstyle=\color{gray},
           escapeinside={(*@}{@*)},           
        mathescape=true
        ]
/*Generated SQL $\genqueryshort$*/: 
SELECT T1.last_name, T2.department, T2.college 
FROM member AS T1 
INNER JOIN major AS T2 ON T1.link_to_major = T2.major_id 
WHERE (*@\colorbox{pink}{T2.major\_name = 'Enviormental Engineering'}@*)

\end{lstlisting}\\
      \hline
   \end{tabular}
          \caption{An example of an ambiguous question.}
    \label{tab:motivationexamples:seven}
\end{figure}

\begin{table}[h!]
\centering
\scriptsize
\caption{\texttt{major} (skipped irrelevant columns)}
\label{app:exm:nine:major}
\setlength{\tabcolsep}{4pt}
\renewcommand{\arraystretch}{1.15}
\begin{tabular}{lllll}
\texttt{major\_id} & \texttt{major\_name} & \texttt{department} & \texttt{college} &\ldots\\
\hline
\texttt{0} & \texttt{'Environmental Engineering'} & \texttt{'1'} & \texttt{'1'} &\ldots\\
\end{tabular}
\end{table}

\begin{table}[h!]
\centering
\scriptsize
\caption{\texttt{member} (skipped irrelevant columns)}
\label{app:exm:nine:member}
\setlength{\tabcolsep}{4pt}
\renewcommand{\arraystretch}{1.15}
\begin{tabular}{llll}
\texttt{last\_name} & \texttt{link\_to\_major} & \texttt{position} &\ldots\\
\hline
\texttt{'1'} & \texttt{0} & \texttt{'1'} &\ldots\\
\end{tabular}
\end{table}

\newpage

\begin{example}\label{exp:ten}
Consider the question $\queryshortnl_{10}$ and the corresponding SQL queries (Figure~\ref{tab:motivationexamples:eight}). The differentiating database found by \sys is shown in Tables~\ref{app:exm:ten:cards}, \ref{app:exm:ten:legalities}. This example is marked as ambiguous because the natural language question is underspecified. If the intent is to return the legal status of every valid artifact card, which is a reasonable interpretation, than the generated SQL would be correct. However, if the intent is to return the set of unique legal statuses across valid artifact cards, than the gold SQL is correct. \qed
\end{example}

\begin{figure}[h!]
  \centering
  \scriptsize
  \begin{tabular}{@{}p{0.95\linewidth}@{}} 
  \hline
  \emph{$\queryshortnl_{10}$: "For artifact type of cards that do not have multiple faces on the same card, state its legalities status for vintage play format."}\\
  \emph{Evidence: "Artifact type of cards refers to types = 'Artifact'; card does not have multiple faces on the same card refers to side is NULL'; vintage play format refers to format = 'vintage';"}\\
  \hline

  \begin{minipage}{\linewidth}
\begin{lstlisting}[
  language=SQL,
  showspaces=false,
  keywordstyle=\scriptsize\color{blue}\ttfamily,
  basicstyle=\scriptsize\ttfamily,
  commentstyle=\color{gray},
  escapeinside={(*@}{@*)},
  mathescape=true
]
/*Gold SQL (*@$\goldqueryshort$@*)*/:
SELECT (*@\colorbox{pink}{DISTINCT T2.status}@*)
FROM cards AS T1 
INNER JOIN legalities AS T2 ON T1.uuid = T2.uuid 
WHERE T1.type = 'Artifact' AND T2.format = 'vintage' AND T1.side IS NULL;
\end{lstlisting}
  \end{minipage}\\[0.5ex]

  \begin{minipage}{\linewidth}
\begin{lstlisting}[
  language=SQL,
  showspaces=false,
  keywordstyle=\scriptsize\color{blue}\ttfamily,
  basicstyle=\scriptsize\ttfamily,
  commentstyle=\color{gray},
  escapeinside={(*@}{@*)},
  mathescape=true
]
/*Generated SQL (*@$\genqueryshort$@*)*/:
SELECT (*@\colorbox{pink}{T2.status}@*) 
FROM cards AS T1
JOIN legalities AS T2 ON T1.uuid = T2.uuid 
WHERE T1.type = 'Artifact' AND T1.side IS NULL AND T2.format = 'vintage';
\end{lstlisting}
  \end{minipage}\\
  \hline
  \end{tabular}
  \caption{An example of an ambiguous question.}
  \label{tab:motivationexamples:eight}
\end{figure}

\begin{table}[h!]
\centering
\scriptsize
\caption{\texttt{cards} (skipped irrelevant columns)}
\label{app:exm:ten:cards}
\setlength{\tabcolsep}{4pt}
\renewcommand{\arraystretch}{1.15}
\begin{tabular}{llll}
\texttt{uuid} & \texttt{type} & \texttt{side} &\ldots\\
\hline
\texttt{'0'} & \texttt{'Artifact'} & \texttt{NULL} &\ldots\\
\end{tabular}
\end{table}

\begin{table}[h!]
\centering
\scriptsize
\caption{\texttt{legalities} (skipped irrelevant columns)}
\label{app:exm:ten:legalities}
\setlength{\tabcolsep}{4pt}
\renewcommand{\arraystretch}{1.15}
\begin{tabular}{llll}
\texttt{uuid} & \texttt{format} & \texttt{status} &\ldots\\
\hline
\texttt{'0'} & \texttt{'vintage'} & \texttt{'1'} &\ldots\\
\texttt{'0'} & \texttt{'vintage'} & \texttt{'1'} &\ldots\\
\end{tabular}
\end{table}

\newpage

\begin{example}\label{exp:11}
Consider the question $\queryshortnl_{11}$ and the corresponding SQL queries (Figure~\ref{tab:motivationexamples:nine}). The differentiating database found by \sys is shown in Tables~\ref{app:exm:eleven:comments}, \ref{app:exm:eleven:posts}. This example is considered ambiguous because the natural language question and evidence do not specify a tie-breaking rule. In the case that there are two comments on valid posts with a tied high score, a query with LIMIT 1 may return either comment. This is why the generated and gold SQL return different results. Since the difference arises solely from a lack of specificity, this example is marked as ambiguous. \qed
\end{example}

\begin{figure}[h!]
  \centering
  \scriptsize
  \begin{tabular}{@{}p{0.95\linewidth}@{}} 
  \hline
  \emph{$\queryshortnl_{11}$: "Among the posts with views ranging from 100 to 150, what is the comment with the highest score?"}\\
  \emph{Evidence: "Views ranging from 100 to 150 refers to ViewCount BETWEEN 100 and 150; comment with the highest score refers to Text where MAX(Score);"}\\
  \hline

  \begin{minipage}{\linewidth}
\begin{lstlisting}[
  language=SQL,
  showspaces=false,
  keywordstyle=\scriptsize\color{blue}\ttfamily,
  basicstyle=\scriptsize\ttfamily,
  commentstyle=\color{gray},
  escapeinside={(*@}{@*)},
  mathescape=true
]
/*Gold SQL (*@$\goldqueryshort$@*)*/:
SELECT text 
FROM comments 
WHERE postId IN ( 
  SELECT id 
  FROM posts 
  WHERE viewCount BETWEEN 100 AND 150 
) ORDER BY score DESC 
(*@\colorbox{pink}{LIMIT 1}@*)
\end{lstlisting}
  \end{minipage}\\[0.5ex]

  \begin{minipage}{\linewidth}
\begin{lstlisting}[
  language=SQL,
  showspaces=false,
  keywordstyle=\scriptsize\color{blue}\ttfamily,
  basicstyle=\scriptsize\ttfamily,
  commentstyle=\color{gray},
  escapeinside={(*@}{@*)},
  mathescape=true
]
/*Generated SQL (*@$\genqueryshort$@*)*/:
SELECT T2.text 
FROM posts AS T1 
INNER JOIN comments AS T2 ON T1.id = T2.postId 
WHERE T1.viewCount BETWEEN 100 AND 150 
ORDER BY T2.score DESC 
(*@\colorbox{pink}{LIMIT 1}@*)
\end{lstlisting}
  \end{minipage}\\
  \hline
  \end{tabular}
  \caption{An example of an ambiguous question.}
  \label{tab:motivationexamples:nine}
\end{figure}

\begin{table}[h!]
\centering
\scriptsize
\caption{\texttt{comments} (skipped irrelevant columns)}
\label{app:exm:eleven:comments}
\setlength{\tabcolsep}{4pt}
\renewcommand{\arraystretch}{1.15}
\begin{tabular}{llll}
\texttt{postId} & \texttt{score} & \texttt{text} &\ldots\\
\hline
\texttt{0} & \texttt{1} & \texttt{'1'} &\ldots\\
\texttt{1} & \texttt{1} & \texttt{'2'} &\ldots\\
\end{tabular}
\end{table}

\begin{table}[h!]
\centering
\scriptsize
\caption{\texttt{posts} (skipped irrelevant columns)}
\label{app:exm:eleven:posts}
\setlength{\tabcolsep}{4pt}
\renewcommand{\arraystretch}{1.15}
\begin{tabular}{lll}
\texttt{id} & \texttt{viewCount} &\ldots\\
\hline
\texttt{0} & \texttt{100} &\ldots\\
\texttt{1} & \texttt{100} &\ldots\\
\end{tabular}
\end{table}


\newpage

\section{Semantics} \label{app:semantics}

\begin{figure}[!h]
\footnotesize

\begin{mdframed}
\[
\denot{E} :: \text{Database } \db \to \text{Relation} \to \text{Value}
\]
\end{mdframed}
\vspace{-5pt}

\[
\begin{array}{lcl}

\denot{\stoint(E)}_{\db, xs} & = & \text{let}~ v=\denot{E}_{\db, xs} ~\text{in}~ \\
& & \quad \site(v=\nullv \lor \sisint(v), v, \\
& & \hspace{10mm} \site(\sisstr(v), \denot{\sstrtoint(v)}_{\db, xs}, \denot{\sdatetoint(v)}_{\db, xs})) \\

\denot{\stodate(E)}_{\db, xs} & = & \text{let}~ v=\denot{E}_{\db, xs} ~\text{in}~ \\
& & \quad \site(v=\nullv \lor \sisdate(v), v, \\
& & \hspace{10mm} \site(\sisint(v), \denot{\sinttodate(v)}_{\db, xs}, \denot{\sstrtodate(v)}_{\db, xs})) \\ 

\denot{\stostr(E)}_{\db, xs} & = & \text{let}~ v=\denot{E}_{\db, xs} ~\text{in}~ \\
& & \quad \site(v=\nullv \lor \sisstr(v), v, \\
& & \hspace{10mm} \site(\sisint(v), \denot{\sinttostr(v)}_{\db, xs}, \denot{\sdatetostr(v)}_{\db, xs})) \\

\denot{\sdatetoint(vs)}_{\db, xs} & = & \site(vs=\nullv, \nullv, vs[0]*10^4 + vs[1]*10^2 + vs[2]) \\

\denot{\sstrtoint(s)}_{\db, xs} & = & \text{let} \\
& & \quad v=\site(\text{IsDigits}(s), \text{StrToInt}(s), \\
& & \hspace{15mm} \site(s[0]=\text{``-''} \land \text{IsDigits}(s[1\!:]), -\text{StrToInt}(s), 0)) \\
& & \text{in}~ \site(s=\nullv, \nullv, v)\\

\denot{\sinttostr(v)}_{\db, xs} & = & \site(v=\nullv,\nullv, \text{IntToStr}(v)) \\ 

\denot{\sdatetostr(vs)}_{\db, xs} & = & \text{let}~ \\
& & \quad y=\text{IntToStr}(vs[0]), \\
& & \quad m=\site(vs[1] \leq 9, \text{``0''}+\text{IntToStr}(vs[1]), \\
& & \hspace{60mm} \text{IntToStr}(vs[1])), \\
& & \quad d=\site(vs[2] \leq 9, \text{``0''}+\text{IntToStr}(vs[2]), \\
& & \hspace{60mm} \text{IntToStr}(vs[2])) \\
& & \text{in}~ \site(vs=\nullv,\nullv, y+\text{``-''}+m+\text{``-''}+d) \\

\denot{\sinttodate(v)}_{\db, xs} & = & \text{let}~ v_1=\lfloor v / 10^4 \rfloor, v_2=\lfloor (v\%10^4)/10^2 \rfloor, v_3=v\%10^2 ~\text{in}~ \\
& & \quad \site(v=\nullv \lor \text{IsValidDate}(v),\nullv,[v_1, v_2, v_3])  \\ 
\denot{\sstrtodate(s)}_{\db, xs} & = & \text{let}~ v=\denot{\sstrtoint(s)}_{\db, xs} ~\text{in}~ \\
& & \quad \site(s=\nullv, \nullv, \denot{\sinttodate(v)}_{\db, xs}) \\

\denot{E_1 \allarith E_2}_{\db, xs} & = & \text{let}~ \\
& & \quad v_1 = \denot{\stoint(E_1)}_{\db, xs} ~\text{and}~ v_2 = \denot{\stoint(E_2)}_{\db, xs}, \\
& & \text{in}~ \site(v_1 = \nullv \lor v_2 = \nullv, \nullv, v_1 \allarith v_2) \\

\denot{\ssubstr(E_1, E_2, E_3)}_{\db, xs} & = & ~\text{let} \\
& & \quad e_i = \denot{E_i}_{\db, xs}, e_1' = \denot{\stostr(e_1)}_{\db, xs}, l=\text{len}(e_1'), \\
& & \quad e_2' = \denot{\stoint(e_2)}_{\db, xs}, e_3' = \denot{\stoint(e_3)}_{\db, xs}, \\
& & \quad v = \site(-l \leq e_2' <0, e_2 + l, \site(0 < e_2' \leq l, e_2'-1, l+1)), \\
& & \quad s = \site(v=0 \lor v < -l \lor v > l \lor e_3' \leq 0, \emptystr, \\
& & \hspace{20mm}  \site(e_3' \geq l - v, e_1'[v\!:\!l], e_1'[v\!:\!2v+e_3'])) \\
& & \text{in}~ \site(e_1 = \nullv \lor \text{IsStr}(e_2) \lor \text{IsStr}(e_3), \nullv, s) \\

\revise{\denot{\sconcat(E_1, E_2)}_{\db, xs}} & \revise{=} & \revise{\text{let}~ v_1=\denot{\stostr(E_1)}_{\db, xs} ~\text{and}~ v_2=\denot{\stostr(E_2)}_{\db, xs} ~\text{in}~} \\
& & \quad \revise{\site(v_1 = \nullv \lor v_2 = \nullv, \nullv, \text{Concat}(v_1, v_2))} \\

\denot{\sstrftime(\kappa, E)}_{\db, xs} & = & \text{let}~ v=\denot{\stodate(E)}_{\db, xs} ~\text{in}~ \\
& & \quad \site(\kappa = \text{``\%Y''}, v[0], \site(\kappa = \text{``\%M''}, v[1], v[2])) \\

\denot{\sjulianday(E)}_{\db, xs} & = & \text{let}~ v=\denot{E}_{\db, xs} ~\text{in}~ \text{ToJulianDay}(v),~ \text{if}~ \text{IsDate}(v) \\

\revise{\denot{\sdateshift(E, i, \delta)}_{\db, xs}} & \revise{=} & \revise{\text{let}~ v=\denot{E}_{\db, xs} ~\text{in}~ \text{DateAdd}(v, i, \delta),~ \text{if}~ \text{IsDate}(v)} \\

\end{array}
\]

\begin{mdframed}
\[
\denot{\phi} :: \text{Database } \db \to \text{Relation} \to \text{Bool} \cup \nullv
\]
\end{mdframed}
\[
\begin{array}{lcl}

\denot{\sprefixof(s, E)}_{\db, xs} & = & \text{let}~ v=\denot{\stostr(E)}_{\db, xs} ~\text{in}~ \site(v = \nullv, \nullv, \text{PrefixOf}(s, v)) \\

\denot{\ssuffixof(s, E)}_{\db, xs} & = & \text{let}~ v=\denot{\stostr(E)}_{\db, xs} ~\text{in}~ \site(v = \nullv, \nullv, \text{SuffixOf}(s, v)) \\

\denot{\slike(s, E)}_{\db, xs} & = & \text{let}~ v=\denot{\stostr(E)}_{\db, xs} ~\text{in}~ \site(v = \nullv, \nullv, \text{RegexMatch}(s, v)) \\

\revise{\denot{\scontain(s, E)}_{\db, xs}} & \revise{=} & \revise{\text{let}~ s' = \text{Concat}(\text{``\%''}, s, \text{``\%''}) ~\text{in}~ \denot{\slike(s', E)}_{\db, xs}} \\

\denot{E_1 \alllogic E_2}_{\db, xs} & = & \text{let}~ v_1 = \denot{E_1}_{\db, xs} ~\text{and}~ v_2 = \denot{E_2}_{\db, xs} ~\text{in}~ \\
& & \quad \site(v_1 = \nullv \lor v_2 = \nullv, \bot, v_1 \alllogic v_2), ~\text{if}~ \text{Type}(v_1) = \text{Type}(v_2) \\
\end{array}
\]
\vspace{-10pt}
\caption{Formal semantics for extended expressions and predicates. The \text{IsValidDate} function checks whether a string represent a date within the supported date range of a database engine. The \text{ToJulianDay} function converts a date to a Julian day \revise{and the \text{DateAdd} function move a date-value by modifier arguments $i$ and $\delta$}. The definition of these two functions are shown in Appendix~\ref{app:proof}.}
\label{fig:semantics}
\end{figure}

\newpage

\section{Encoding} \label{app:encoding}

\begin{figure}[!h]
\footnotesize
\[
\begin{array}{lcl}

\denot{\stoint(E)}_{\schema, \context, \tuples} & = & \text{let}~ v=\denot{E}_{\schema, \context, \tuples} ~\text{in}~ \\
& & \quad \site(v=\nullv \lor \sisint(v), v, \\
& & \hspace{8mm} \site(\sisstr(v), \denot{\sstrtoint(v)}_{\schema, \context, \tuples}, \denot{\sdatetoint(v)}_{\schema, \context, \tuples})) \\

\denot{\stodate(E)}_{\schema, \context, \tuples} & = & \text{let}~ v=\denot{E}_{\schema, \context, \tuples} ~\text{in}~ \\
& & \quad \site(v=\nullv \lor \sisdate(v), v, \\
& & \hspace{6mm} \site(\sisint(v), \denot{\sinttodate(v)}_{\schema, \context, \tuples}, \denot{\sstrtodate(v)}_{\schema, \context, \tuples})) \\ 

\denot{\stostr(E)}_{\schema, \context, \tuples} & = & \text{let}~ v=\denot{E}_{\schema, \context, \tuples} ~\text{in}~ \\
& & \quad \site(v=\nullv \lor \sisstr(v), v, \\
& & \hspace{7mm} \site(\sisint(v), \denot{\sinttostr(v)}_{\schema, \context, \tuples}, \denot{\sdatetostr(v)}_{\schema, \context, \tuples})) \\

\denot{\sdatetoint(vs)}_{\schema, \context, \tuples} & = & \site(vs=\nullv, \nullv, vs[0]*10^4 + vs[1]*10^2 + vs[2]) \\ 

\denot{\sstrtoint(s)}_{\schema, \context, \tuples} & = & \text{let} \\
& & \quad s_1 = s[1\!:\!\zlength(s)], v_1 = \zstrtoint(s_1), \\
& & \quad v=\site(s[0] = \text{``-''}, -v_1,  \zstrtoint(s)), \\
& & \quad \Phi = \site(v<0, \zinttostr(-v)=v_1, \zinttostr(v)=s), \\
& & \text{in}~ \site(s=\nullv, \nullv, \site(\Phi, v, 0)) \\


\denot{\sinttostr(v)}_{\schema, \context, \tuples} & = & \site(v=\nullv,\nullv, \zinttostr(v)) \\ 

\denot{\sdatetostr(vs)}_{\schema, \context, \tuples} & = & \text{let}~  y=\zinttostr(vs[0]), \\
& & \quad m=\site(vs[1] \leq 9, \text{``0''}+\zinttostr(vs[1]), \\
& & \hspace{60mm} \zinttostr(vs[1])), \\
& & \quad d=\site(vs[2] \leq 9, \text{``0''}+\zinttostr(vs[2]), \\
& & \hspace{60mm} \zinttostr(vs[2])) \\
& & \text{in}~ \site(vs=\nullv,\nullv, y+\text{``-''}+m+\text{``-''}+d) \\ 

\denot{\sinttodate(v)}_{\schema, \context, \tuples} & = & \text{let}~ y=\text{fdiv}(v, 10^4), m=\text{fdiv}(v\%10^4, 10^2), d=v\%10^2, \\
& & \quad \Phi_0 = y\%4=0 \land (y\%100 \neq 0 \lor y\%400=0) \\
& & \quad \Phi_1 = \text{MIN\_YEAR}~ \leq y \leq ~\text{MAX\_YEAR}, \\
& & \quad \Phi_2 = 1 \leq m \leq 12, \\
& & \quad \Phi_3 = 1 \leq d \land (\lor_{c \in \{1,3,5,7,8,10,12\}} m=c \rightarrow d \leq 31) \\
& & \hspace{21mm} \land (m=2 \rightarrow d \leq 28 + \site(\Phi_0, 1, 0)) \\
& & \hspace{21mm} \land (\lor_{c \in \{4,6,9,11\}} m=c \rightarrow d \leq 30) \\
& & \text{in}~ \site(v=\nullv \lor \neg (\Phi_1 \land \Phi_2 \land \Phi_3),\nullv,[y,m,d]) \\ 

\denot{\sstrtodate(s)}_{\schema, \context, \tuples} & = & \text{let}~ v=\denot{\sstrtoint(s)}_{\schema, \context, \tuples} ~\text{in}~ \\
& & \quad \site(s=\nullv, \nullv, \denot{\sinttodate(v)}_{\schema, \context, \tuples}) \\

\denot{E_1 \allarith E_2}_{\schema, \context, \tuples} & = & \text{let}~  v_1 = \denot{\stoint(E_1)}_{\schema, \context, \tuples} ~\text{and}~ v_2 = \denot{\stoint(E_2)}_{\schema, \context, \tuples}, \\
& & \text{in}~ \site(v_1 = \nullv \lor v_2 = \nullv, \nullv, v_1 \allarith v_2) \\

\denot{\ssubstr(E_1, E_2, E_3)}_{\schema, \context, \tuples} & = & ~\text{let} \\
& & \quad e_i = \denot{E_i}_{\schema, \context, \tuples}, e_1' = \denot{\stostr(e_1)}_{\schema, \context, \tuples}, l=\zlength(e_1'), \\
& & \quad e_2' = \denot{\stoint(e_2)}_{\schema, \context, \tuples}, e_3' = \denot{\stoint(e_3)}_{\schema, \context, \tuples}, \\
& & \quad v = \site(-l \leq e_2' <0, e_2 + l, \site(0 < e_2' \leq l, e_2'-1, l+1)), \\
& & \quad s = \site(v=0 \lor v < -l \lor v > l \lor e_3' \leq 0, \emptystr, \\
& & \hspace{20mm}  \site(e_3' \geq l - v, e_1'[v\!:\!l], e_1'[v\!:\!2v+e_3'])) \\
& & \text{in}~ \site(e_1 = \nullv \lor \text{IsStr}(e_2) \lor \text{IsStr}(e_3), \nullv, s) \\


\revise{\denot{\sconcat(E_1, E_2)}_{\schema, \context, \tuples}} & \revise{=} & \revise{\text{let}~ v_1=\denot{\stostr(E_1)}_{\schema, \context, \tuples} ~\text{and}~ v_2=\denot{\stostr(E_2)}_{\schema, \context, \tuples} ~\text{in}~ } \\
& & \quad \revise{\site(v_1=\nullv \lor v_2=\nullv, \nullv, \zconcat(v_1, v_2))} \\

\denot{\sstrftime(\kappa, E)}_{\schema, \context, \tuples} & = & \text{let}~ v=\denot{\stodate(E)}_{\schema, \context, \tuples} ~\text{in}~ \\
& & \site(v=\nullv, \nullv, \\
& & \hspace{20mm} \site(\kappa = \text{``\%Y''}, v[0], \site(\kappa = \text{``\%M''}, v[1], v[2]))) \\ 

\denot{\sjulianday(E)}_{\schema, \context, \tuples} & = & \text{let}~ v=\denot{E}_{\schema, \context, \tuples}, y = \site(v[1]\leq 2, v[0]-1, v[0]), \\
& & \quad m = \site(v[1]\leq 2, v[1]+12, v[1]), d =v[2], \\
& & \quad c=2-\text{fdiv}(y,100) + \text{fdiv}(y,400), \\
& & \quad a_1=\text{fdiv}(36525*(y+4716),10^2), \\
& & \quad a_2=\text{fdiv}(306001*(m+1),10^4), \\
& & \text{in}~ a_1 + a_2 + d + c - 1524.5, ~\text{if IsDate}(v) \\


\revise{\denot{\sdateshift(E, i, \delta)}_{\schema, \context, \tuples}} & \revise{=} & \revise{\text{let}~ v=\denot{E}_{\schema, \context, \tuples} ~\text{in}~} \\
& & \quad \revise{\site(\delta = \text{``Year''}, \text{DateShiftByYears}(v, i),} \\
& & \quad \revise{\site(\delta = \text{``Month''}, \text{DateShiftByMonths}(v, i),} \\
& & \hspace{50mm} \revise{\text{DateShiftByDays}(v, i)))} \\

\end{array}
\]
\vspace{-15pt}
\caption{
Symbolic encoding for extended expressions. The floor division function is defined as $\text{fdiv}(x, y) = \site(x\%y=0, x/y, (x-x\%y)/y)$. For clarity, we overload $\text{IsInt}$, $\text{IsStr}$ and $\text{IsDate}$ to check whether formulas represent integers, strings and dates, respectively. Type conversions and string manipulations are handled using the built-in functions of \zthree.
}
\label{fig:encodings-expressions}
\end{figure}

\begin{figure}[!h]
\[
\begin{array}{lcl}


\denot{\sprefixof(s, E)}_{\schema, \context, \tuples} & = & \text{let}~ v=\denot{\stostr(E)}_{\schema, \context, \tuples} ~\text{in}~ \site(v=\nullv, \nullv, \zprefixof(s, v)) \\

\denot{\ssuffixof(s, E)}_{\schema, \context, \tuples} & = & \text{let}~ v=\denot{\stostr(E)}_{\schema, \context, \tuples} ~\text{in}~ \site(v=\nullv, \nullv, \zsuffixof(s, v)) \\

\denot{\slike(s, E)}_{\schema, \context, \tuples} & = & \text{let}~ v=\denot{\stostr(E)}_{\schema, \context, \tuples} ~\text{in}~ \\
& & \quad \site(v=\nullv, \nullv, \zregex(s, v)) \\


\revise{\denot{\scontain(s, E)}_{\schema, \context, \tuples}} & \revise{=} & \revise{\text{let}~ v=\denot{\stostr(E)}_{\schema, \context, \tuples} ~\text{and}~ s' = \text{Concat}(\text{``.*''}, s, \text{``.*''}) ~\text{in}~} \\
& & \quad \revise{\site(v=\nullv, \nullv, \zregex(s', v))} \\

\denot{E_1 \alllogic E_2}_{\schema, \context, \tuples} & = & \text{let}~ v_1 = \denot{E_1}_{\schema, \context, \tuples} ~\text{and}~ v_2 = \denot{E_2}_{\schema, \context, \tuples} ~\text{in}~ \\
& & \quad \site(v_1 = \nullv \lor v_2 = \nullv, \bot, v_1 \alllogic v_2), ~\text{if}~ \text{Type}(v_1) = \text{Type}(v_2) \\

\end{array}
\]
\caption{Symbolic encoding for extended predicates.}
\label{fig:encodings-predicates}
\end{figure}

\newpage

\section{Proof} \label{app:proof}
In this section, we provide the proof of theorems in the main paper.


\correctexpression*

\begin{lemma}\label{lem:proof_expression}
Suppose $\denot{E}_{\db, xs} = v$, then $\interpretation(\context) = \db \land \interpretation(\tuples) = xs \Rightarrow \denot{E}_{\interpretation(\context), \interpretation(\tuples)} = \interpretation(\denot{E}_{\schema, \context, \tuples})$ is true iff $\denot{E}_{\interpretation(\context), \interpretation(\tuples)} = v$ and $\interpretation(\denot{E}_{\schema, \context, \tuples}) = v$.
\end{lemma}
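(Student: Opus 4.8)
The plan is to treat Lemma~\ref{lem:proof_expression} as a purely logical reformulation that splits the single equality asserted in Theorem~\ref{thm:expression} into two independent equalities against the fixed concrete value $v = \denot{E}_{\db, xs}$, and to discharge it by direct substitution rather than by induction. First I would fix the satisfying interpretation $\interpretation$ as in the theorem statement, so that $\interpretation(\context) = \db$ and $\interpretation(\tuples) = xs$ both hold; thus the antecedent of the implication is satisfied, and the implication is true exactly when its consequent $\denot{E}_{\interpretation(\context), \interpretation(\tuples)} = \interpretation(\denot{E}_{\schema, \context, \tuples})$ is true.

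The key observation is that the concrete semantics $\denot{\cdot}$ depends on its database and tuple-list arguments only through their values, so substituting $\interpretation(\context) = \db$ and $\interpretation(\tuples) = xs$ yields $\denot{E}_{\interpretation(\context), \interpretation(\tuples)} = \denot{E}_{\db, xs} = v$. Hence the first conjunct $\denot{E}_{\interpretation(\context), \interpretation(\tuples)} = v$ holds automatically. With the left-hand side of the consequent thereby pinned to $v$, the consequent collapses to $\interpretation(\denot{E}_{\schema, \context, \tuples}) = v$, which is precisely the second conjunct. The forward direction then follows by reading both conjuncts off the consequent (using that the antecedent holds), and the backward direction follows because the two conjuncts force both sides of the consequent to equal $v$, so the equality holds; together these give the stated iff.

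Because the lemma is essentially definitional, I do not expect a genuine obstacle in proving it; its real role is to reduce Theorem~\ref{thm:expression} to the single nontrivial equality $\interpretation(\denot{E}_{\schema, \context, \tuples}) = v$, namely that interpreting the symbolic encoding recovers the concrete value. That reduced claim is where the difficulty lies, and I would establish it by structural induction on $E$ following the grammar of Fig.~\ref{fig:syntax-sql}, matching each clause of the concrete semantics in Fig.~\ref{fig:semantics} against the corresponding symbolic encoding in Fig.~\ref{fig:encodings-expressions}. The hard cases will be the implicit type conversions and the string/date manipulations such as $\ssubstr$, $\sstrftime$, and $\sjulianday$, whose encodings branch through nested $\site$ guards on null-ness and on type tags; for each I would argue that, under $\interpretation$, the branch selected symbolically coincides with the branch taken by the concrete semantics, invoking the induction hypothesis on the subexpressions and the faithfulness of the built-in \zthree string and integer operations to close every branch.
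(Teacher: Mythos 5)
Your proposal is correct, but it takes a genuinely different route from the paper's own proof of this lemma. The paper does not treat Lemma~\ref{lem:proof_expression} as definitional: its proof of the lemma \emph{is} the structural induction on $E$, working case by case through $\stoint$, $\stodate$, $\sstrtoint$, $\ssubstr$, $\sstrftime$, $\sjulianday$, etc., unfolding the encodings of Fig.~\ref{fig:encodings-expressions} against the semantics of Fig.~\ref{fig:semantics}, invoking the inductive hypothesis and the faithfulness of the \zthree builtins, with base cases deferred to the \verieql paper; the logical bookkeeping relating the equality $\interpretation(\denot{E}_{\schema, \context, \tuples}) = \denot{E}_{\interpretation(\context), \interpretation(\tuples)}$ to the stated iff is left entirely implicit. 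You do the opposite: you discharge the lemma itself by pure substitution, correctly observing that once $\interpretation(\context)=\db$ and $\interpretation(\tuples)=xs$ hold, the first conjunct $\denot{E}_{\interpretation(\context),\interpretation(\tuples)}=v$ is automatic and the implication collapses to the second conjunct, and you push the structural induction into the ``reduced claim.'' Your reading also makes explicit something the paper glosses over: the iff is only true when $\interpretation$ satisfies the antecedent (otherwise the implication is vacuously true while the right-hand conjuncts can fail), so the lemma must be read as quantifying over satisfying interpretations, exactly as you do. What your decomposition buys is a clean separation of the trivial logic from the mathematical substance, and an honest identification of where the difficulty actually lies; what the paper's organization buys is that the induction---which carries all the content and which you only sketch---is actually executed. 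Since your sketch of that induction (branch-by-branch coincidence of the $\site$ guards, induction hypothesis on subexpressions, faithfulness of the \zthree string/integer operations) matches the paper's case analysis, nothing in your plan would fail; it is simply left unexecuted at the point where the paper does the work.
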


\begin{proof}[Proof.] 
Theorem~\ref{thm:expression} is proved by proving Lemma~\ref{lem:proof_expression}.
By structural induction on $E$. 
\begin{enumerate}

\item Base cases and some inductive cases are proved in \cite{verieql}.

\item Inductive case: $E = \stoint(E)$

$\denot{\stoint(E)}_{\schema, \context, \tuples} = \site(v=\nullv \lor \sisint(v), v, \site(\sisstr(v), \denot{\sstrtoint(v)}_{\schema, \context, \tuples}, $\\$\denot{\sdatetoint(v)}_{\schema, \context, \tuples}))$ where $v = \denot{E}_{\schema, \context, \tuples}$ by Figure~\ref{fig:encodings-expressions}.
$\denot{\stoint(E)}_{\interpretation(\context), \interpretation(\tuples)} = \site(v'=\nullv \lor \sisint(v'), v', \site(\sisstr(v'), \denot{\sstrtoint(v')}_{\interpretation(\context), \interpretation(\tuples)}, \denot{\sdatetoint(v')}_{\interpretation(\context), \interpretation(\tuples)}))$ where $v' = \denot{E}_{\interpretation(\context), \interpretation(\tuples)}$ by Figure~\ref{fig:semantics}.
By inductive hypothesis, we have $\interpretation(v) = \interpretation(\denot{E}_{\schema, \context, \tuples}) = \denot{E}_{\interpretation(\context), \interpretation(\tuples)} =v'$. Therefore,
\[
\begin{array}{rcl}
\interpretation(\denot{\stoint(E)}_{\schema, \context, \tuples})
&=& \interpretation(\site(v=\nullv \lor \sisint(v), v, \site(\sisstr(v), \denot{\sstrtoint(v)}_{\schema, \context, \tuples}, \\
&& \hspace{10mm} \denot{\sdatetoint(v)}_{\schema, \context, \tuples}))) \\

&=& \site(\interpretation(v)=\nullv \lor \interpretation(\sisint(v)), \interpretation(v), \site(\interpretation(\sisstr(v)), \\
&& \hspace{10mm} \interpretation(\denot{\sstrtoint(v)}_{\schema, \context, \tuples}), \interpretation(\denot{\sdatetoint(v)}_{\schema, \context, \tuples}))) \\

&=& \site(\interpretation(v)=\nullv \lor \sisint(\interpretation(v)), \interpretation(v), \site(\sisstr(\interpretation(v)), \\
&& \hspace{5mm} \denot{\sstrtoint(\interpretation(v))}_{\interpretation(\context), \interpretation(\tuples)}, \denot{\sdatetoint(\interpretation(v))}_{\interpretation(\context), \interpretation(\tuples)})) \\

&=& \site(v'=\nullv \lor \sisint(v'), v', \site(\sisstr(v'), \\
&& \hspace{10mm} \denot{\sstrtoint(v')}_{\interpretation(\context), \interpretation(\tuples)}, \denot{\sdatetoint(v')}_{\interpretation(\context), \interpretation(\tuples)})) \\

&=& \denot{\stoint(E)}_{\interpretation(\context), \interpretation(\tuples)}
\end{array}
\]

\item Inductive case: $E = \stodate(E)$

$\denot{\stodate(E)}_{\schema, \context, \tuples} = \site(v=\nullv \lor \sisdate(v), v, \site(\sisint(v), \denot{\sinttodate(v)}_{\schema, \context, \tuples}, $\\$\denot{\sstrtodate(v)}_{\schema, \context, \tuples}))$ where $v = \denot{E}_{\schema, \context, \tuples}$ by Figure~\ref{fig:encodings-expressions}.
$\denot{\stodate(E)}_{\interpretation(\context), \interpretation(\tuples)} = \site(v'=\nullv \lor \sisdate(v'), v', \site(\sisint(v'), \denot{\sinttodate(v')}_{\interpretation(\context), \interpretation(\tuples)}, \denot{\sstrtodate(v')}_{\interpretation(\context), \interpretation(\tuples)}))$ where $v' = \denot{E}_{\interpretation(\context), \interpretation(\tuples)}$ by Figure~\ref{fig:semantics}.
By inductive hypothesis, we have $\interpretation(v) = \interpretation(\denot{E}_{\schema, \context, \tuples}) = \denot{E}_{\interpretation(\context), \interpretation(\tuples)} =v'$. Therefore,
\[
\begin{array}{rcl}
\interpretation(\denot{\stodate(E)}_{\schema, \context, \tuples})
&=& \interpretation(\site(v=\nullv \lor \sisdate(v), v, \site(\sisint(v),  \\
&& \hspace{9mm} \denot{\sinttodate(v)}_{\schema, \context, \tuples}, \denot{\sstrtodate(v)}_{\schema, \context, \tuples}))) \\

&=& \site(\interpretation(v)=\nullv \lor \interpretation(\sisdate(v)), \interpretation(v), \site(\interpretation(\sisint(v)), \\
&& \hspace{9mm} \interpretation(\denot{\sinttodate(v)}_{\schema, \context, \tuples}), \interpretation(\denot{\sstrtodate(v)}_{\schema, \context, \tuples}))) \\

&=& \site(\interpretation(v)=\nullv \lor \sisdate(\interpretation(v)), \interpretation(v), \site(\sisint(\interpretation(v)), \\
&& \hspace{9mm} \interpretation(\denot{\sinttodate(v)}_{\schema, \context, \tuples}), \interpretation(\denot{\sstrtodate(v)}_{\schema, \context, \tuples}))) \\

&=& \site(v'=\nullv \lor \sisdate(v'), v', \site(\sisint(v'), \\
&& \hspace{10mm} \denot{\sinttodate(v')}_{\interpretation(\context), \interpretation(\tuples)}, \denot{\sstrtodate(v')}_{\interpretation(\context), \interpretation(\tuples)})) \\

&=& \denot{\stodate(E)}_{\interpretation(\context), \interpretation(\tuples)}
\end{array}
\]

\item Inductive case: $E = \stostr(E)$

$\denot{\stostr(E)}_{\schema, \context, \tuples} = \site(v=\nullv \lor \sisstr(v), v, \site(\sisint(v), \denot{\sinttostr(v)}_{\schema, \context, \tuples}, $\\$\denot{\sdatetostr(v)}_{\schema, \context, \tuples}))$ where $v = \denot{E}_{\schema, \context, \tuples}$ by Figure~\ref{fig:encodings-expressions}.
$\denot{\stostr(E)}_{\interpretation(\context), \interpretation(\tuples)} = \site(v'=\nullv \lor \sisstr(v'), v', \site(\sisint(v'), \denot{\sinttostr(v')}_{\interpretation(\context), \interpretation(\tuples)}, \denot{\sdatetostr(v')}_{\interpretation(\context), \interpretation(\tuples)}))$ where $v' = \denot{E}_{\interpretation(\context), \interpretation(\tuples)}$ by Figure~\ref{fig:semantics}.
By inductive hypothesis, we have $\interpretation(v) = \interpretation(\denot{E}_{\schema, \context, \tuples}) = \denot{E}_{\interpretation(\context), \interpretation(\tuples)} =v'$. Therefore,
\[
\begin{array}{rcl}
\interpretation(\denot{\stostr(E)}_{\schema, \context, \tuples})
&=& \interpretation(\site(v=\nullv \lor \sisstr(v), v, \site(\sisint(v),  \\
&& \hspace{10mm} \denot{\sinttostr(v)}_{\schema, \context, \tuples}, \denot{\sdatetostr(v)}_{\schema, \context, \tuples}))) \\

&=& \site(\interpretation(v)=\nullv \lor \interpretation(\sisstr(v)), \interpretation(v), \site(\interpretation(\sisint(v)), \\
&& \hspace{10mm} \interpretation(\denot{\sinttostr(v)}_{\schema, \context, \tuples}), \interpretation(\denot{\sdatetostr(v)}_{\schema, \context, \tuples}))) \\

&=& \site(\interpretation(v)=\nullv \lor \sisstr(\interpretation(v)), \interpretation(v), \site(\sisint(\interpretation(v)), \\
&& \hspace{10mm} \interpretation(\denot{\sinttostr(v)}_{\schema, \context, \tuples}), \interpretation(\denot{\sdatetostr(v)}_{\schema, \context, \tuples}))) \\

&=& \site(v'=\nullv \lor \sisstr(v'), v', \site(\sisint(v'), \\
&& \hspace{10mm} \denot{\sinttostr(v')}_{\interpretation(\context), \interpretation(\tuples)}, \denot{\sdatetostr(v')}_{\interpretation(\context), \interpretation(\tuples)})) \\

&=& \denot{\stostr(E)}_{\interpretation(\context), \interpretation(\tuples)}
\end{array}
\]

\item Inductive case: $E = \sdatetoint(vs)$

$\denot{\sdatetoint(vs)}_{\schema, \context, \tuples} = \site(vs=\nullv, \nullv, vs[0]*10^4 + vs[1]*10^2 + vs[2])$ by Figure~\ref{fig:encodings-expressions}.
$\denot{\sdatetoint(vs)}_{\interpretation(\context), \interpretation(\tuples)} = \site(vs=\nullv, \nullv, vs[0]*10^4 + vs[1]*10^2 + vs[2])$ by Figure~\ref{fig:semantics}.
Therefore, $\interpretation(\denot{\sdatetoint(vs)}_{\schema, \context, \tuples}) = \site(vs=\nullv, \nullv, vs[0]*10^4 + vs[1]*10^2 + vs[2]) = \denot{\sdatetoint(vs)}_{\interpretation(\context), \interpretation(\tuples)}$.

\item Inductive case: $E = \sstrtoint(s)$

$\denot{\sstrtoint(s)}_{\schema, \context, \tuples} = \site(s=\nullv, \nullv, \site(\Phi, v, 0))$ where $s_1 = s[1\!:\!\zlength(s)]$, $v_1 = \zstrtoint(s_1)$, $v = \site(s[0] = \text{``-''}, -v_1, \zstrtoint(s))$, and $\Phi = \site(v<0, \zinttostr(-v) =v_1, \zinttostr(v) = s)$ by Figure~\ref{fig:encodings-expressions}. 
$\denot{\sstrtoint(s)}_{\interpretation(\context), \interpretation(\tuples)} = \site(v'=\nullv, \nullv, v')$ where $v'=\site(\text{IsDigits}(s), \text{StrToInt}(s), \site(s[0]=\text{``-''} \land \text{IsDigits}(s[1\!:]), -\text{StrToInt}(s), 0))$ by Figure~\ref{fig:semantics}.

On the one hand, the \zthree builtin function $\zstrtoint(s) = \text{StrToInt}(s)$ if $\text{StrToInt}(s) \geq 0$; otherwise, $\zstrtoint(s) = -1$. To show our encoding precisely capture semantics of \textit{SQL's type conversion from strings to integers}, let us discuss it in three cases:
\begin{enumerate}
\item \label{strtoint_case1}
If $\text{StrToInt}(s) \geq 0$, then $v = \zstrtoint(s) = \text{StrToInt}(s)$ and $\Phi$ holds. Thus, $\site(\Phi, v, 0) = v$.
\item \label{strtoint_case2}
If $\text{StrToInt}(s) < 0$, then $v = -v_1$ and $\Phi = \top$ where $v_1 = \text{StrToInt}(s[1\!:])$. $\site(\Phi, v, 0) = v = -v_1$.
\item \label{strtoint_case2}
If $s$ contains more than digits (e.g., ``abc'' and ``-abc''), \mysql evaluates non-numerical strings to 0 by default. By the semantics of \zstrtoint, $\Phi$ never holds which leads $\site(\Phi, v, 0) = 0$.
\end{enumerate}

By \ref{strtoint_case1}, \ref{strtoint_case2} and \ref{strtoint_case2}, we known $\site(\Phi, s, 0)$ precisely captures the semantics of \textit{SQL's type conversion from strings to integers}.

On the other hand, let us discuss the rule in three cases:
\begin{enumerate}
\item \label{strtoint_case5}
If $\text{StrToInt}(s) \geq 0$, then $v' = \text{StrToInt}(s)$.
\item \label{strtoint_case5}
If $\text{StrToInt}(s) < 0$, then $v' = -\text{StrToInt}(s[1\!:])$.
\item \label{strtoint_case6}
If $s$ contains more than digits (e.g., ``abc'' and ``-abc''), \mysql evaluates non-numerical strings to 0 by default. By the semantics of this rule, $v' = 0$.
\end{enumerate}

By \ref{strtoint_case1}, \ref{strtoint_case2} and \ref{strtoint_case2}, we known $v'$ precisely captures the semantics of \textit{SQL's type conversion from strings to integers}.

Therefore, $\interpretation(\site(\Phi, s, 0)) = v'$ and
\[
\begin{array}{rcl}
\interpretation(\denot{\sstrtoint(s)}_{\schema, \context, \tuples})
&=& \interpretation(\site(s=\nullv, \nullv, \site(\Phi, v, 0))) \\
&=& \site(s=\nullv, \nullv, \interpretation(\site(\Phi, v, 0))) \\
&=& \site(s=\nullv, \nullv, v') \\
&=& \denot{\sstrtoint(s)}_{\interpretation(\context), \interpretation(\tuples)} \\
\end{array}
\]

\item Inductive case: $E = \sinttostr(v)$

$\denot{\sinttostr(v)}_{\schema, \context, \tuples} = \site(v=\nullv, \nullv, \zinttostr(v))$ by Figure~\ref{fig:encodings-expressions}. 
$\denot{\sinttostr(v)}_{\interpretation(\context), \interpretation(\tuples)} = \site(v=\nullv, \nullv, \text{IntToStr}(v))$ by Figure~\ref{fig:semantics}. 
Note that since the \zthree builtin function \zinttostr precisely capture the semantics of \text{IntToStr}, $\interpretation(\zinttostr(v)) = \text{IntToStr}(v)$.
Therefore,
\[
\begin{array}{rcl}
\interpretation(\denot{\sinttostr(v)}_{\schema, \context, \tuples})
&=& \interpretation(\site(v=\nullv, \nullv, \zinttostr(v))) \\
&=& \site(v=\nullv, \nullv, \interpretation(\zinttostr(v))) \\
&=& \site(v=\nullv, \nullv, \text{IntToStr}(v)) \\
&=& \denot{\sinttostr(v)}_{\interpretation(\context), \interpretation(\tuples)} \\
\end{array}
\]

\item Inductive case: $E = \sdatetostr(vs)$

$\denot{\sdatetostr(vs)}_{\schema, \context, \tuples} = \site(vs=\nullv, \nullv, y+\text{``-''}+m+\text{``-''}+d)$ where $y = \zinttostr(vs[0])$, $m = \site(vs[1] \leq 9, \text{``0''}+\zinttostr(vs[1]), \zinttostr(vs[1]))$, and $d = \site(vs[2] \leq 9, \text{``0''}+\zinttostr(vs[2]), \zinttostr(vs[2]))$ by Figure~\ref{fig:encodings-expressions}. 
$\denot{\sdatetostr(vs)}_{\interpretation(\context), \interpretation(\tuples)} = \site(vs=\nullv, \nullv, y'+\text{``-''}+m'+\text{``-''}+d')$ where $y' = \text{IntToStr}(vs[0])$, $m' = \site(vs[1] \leq 9, \text{``0''}+\text{IntToStr}(vs[1]), \text{IntToStr}(vs[1]))$, and $d' = \site(vs[2] \leq 9, \text{``0''}+\text{IntToStr}(vs[2]), \text{IntToStr}(vs[2]))$ by Figure~\ref{fig:semantics}. 
Note that since the \zthree builtin function \zinttostr precisely capture the semantics of \text{IntToStr}, $\interpretation(y) = y'$, $\interpretation(m) = m'$, and $\interpretation(d) = d'$.
Therefore,
\[
\begin{array}{rcl}
\interpretation(\denot{\sdatetostr(vs)}_{\schema, \context, \tuples})
&=& \interpretation(\site(vs=\nullv, \nullv, y+\text{``-''}+m+\text{``-''}+d)) \\
&=& \site(vs=\nullv, \nullv, \interpretation(y)+\text{``-''}+\interpretation(m)+\text{``-''}+\interpretation(d)) \\
&=& \site(vs=\nullv, \nullv, y'+\text{``-''}+m'+\text{``-''}+d') \\
&=& \denot{\sdatetostr(v)}_{\interpretation(\context), \interpretation(\tuples)} \\
\end{array}
\]

\item Inductive case: $E = \sinttodate(v)$

$\denot{\sinttodate(v)}_{\schema, \context, \tuples} = \site(v=\nullv \lor \neg (\Phi_1 \land \Phi_2 \land \Phi_3), \nullv, [y,m,d])$ where $\text{fdiv}(x,y) = \site(x\%y=0, x/y, (x-x\%y)/y)$, $y=\text{fdiv}(v,10^4)$, $m=\text{fdiv}(v\%10^4,10^2)$, $d=v\%10^2$, $\Phi_0=y\%4=0 \land (y\%100\neq0 \lor y\%400=0)$, $\Phi_1 = \text{MIN\_YEAR} \leq y \leq \text{MAX\_YEAR}$, $\Phi_2 = 1\leq m \leq 12$, $\Phi_3 = 1 \leq d \land (\lor_{c \in \{1,3,5,7,8,10,12\}} m=c \rightarrow d\leq31) \land (m=2 \rightarrow d\leq 28 + \site(\Phi_0, 1,0)) \land (\lor_{c\in \{4,6,9,11\}} m=c \rightarrow d\leq 30)$ by Figure~\ref{fig:encodings-expressions}. 
$\denot{\sinttodate(v)}_{\interpretation(\context), \interpretation(\tuples)} = \site(v'=\nullv \lor \text{IsValidDate}(v), \nullv, [v_1', v_2', v_3'])$ where $v_1'=\lfloor v / 10^4 \rfloor$, $v_2'=\lfloor (v\%10^4)/10^2 \rfloor$, $v_3'=v\%10^2$ by Figure~\ref{fig:semantics}. 
By semantics of \text{fdiv}, we know $y=v_1'$, $m=v_2'$ and $d=v_3'$.
Note that the function \text{IsValidDate} precisely capture the semantics of $\neg (\Phi_1 \land \Phi_2 \land \Phi_3)$, checking whether a date is valid in \mysql.
Therefore, $\interpretation(\neg (\Phi_1 \land \Phi_2 \land \Phi_3)) = \text{IsValidDate}(v')$ and 
\[
\begin{array}{rcl}
\interpretation(\denot{\sinttodate(v)}_{\schema, \context, \tuples})
&=& \interpretation(\site(v=\nullv \lor \neg (\Phi_1 \land \Phi_2 \land \Phi_3), \nullv, [y,m,d])) \\
&=& \site(v=\nullv \lor \interpretation(\neg (\Phi_1 \land \Phi_2 \land \Phi_3)), \nullv, \interpretation([y,m,d])) \\
&=& \site(v=\nullv \lor \text{IsValidDate}(v'), \nullv, [v_1', v_2', v_3']) \\
&=& \denot{\sinttodate(v)}_{\interpretation(\context), \interpretation(\tuples)} \\
\end{array}
\]

\item Inductive case: $E = \sstrtodate(s)$

$\denot{\sstrtodate(s)}_{\schema, \context, \tuples} = \site(s=\nullv, \nullv, \denot{\sinttodate(v)}_{\schema, \context, \tuples})$ where $v = \denot{\sstrtoint(s)}_{\schema, \context, \tuples}$ by Figure~\ref{fig:encodings-expressions}. 
$\denot{\sstrtodate(s)}_{\interpretation(\context), \interpretation(\tuples)} = \site(s=\nullv, \nullv, \denot{\sinttodate(v')}_{\interpretation(\context), \interpretation(\tuples)})$ where $v' = \denot{\sstrtoint(s)}_{\schema, \context, \tuples}$ by Figure~\ref{fig:semantics}.
By inductive hypothesis, we have $\interpretation(\denot{\sinttodate(v)}_{\schema, \context, \tuples}) = \denot{\interpretation(\sinttodate(v))}_{\interpretation(\context), \interpretation(\tuples)} = \denot{\sinttodate(\interpretation(v))}_{\interpretation(\context), \interpretation(\tuples)} = \denot{\sinttodate(v')}_{\interpretation(\context), \interpretation(\tuples)}$.
Therefore, 
\[
\begin{array}{rcl}
\interpretation(\denot{\sstrtodate(s)}_{\schema, \context, \tuples})
&=& \interpretation(\site(s=\nullv, \nullv, \denot{\sinttodate(v)}_{\schema, \context, \tuples})) \\
&=& \site(s=\nullv, \nullv, \interpretation(\denot{\sinttodate(v)}_{\schema, \context, \tuples})) \\
&=& \site(s=\nullv, \nullv, \denot{\sinttodate(v')}_{\interpretation(\context), \interpretation(\tuples)}) \\
&=& \denot{\sstrtodate(s)}_{\interpretation(\context), \interpretation(\tuples)} \\
\end{array}
\]

\item Inductive case: $E = E_1 \allarith E_2$. 

Since our extended grammar considers \nullv , integers, dates and strings, as shown in Figure~\ref{fig:syntax-sql}, the proof for this inductive case is overloaded. \\

$\denot{E_1 \allarith E_2}_{\schema, \context, \tuples} = \site(v_1 =\nullv \lor v_2 =\nullv, \nullv, v_1 \allarith v_2)$ where $v_1 = \denot{\stoint(E_1)}_{\schema, \context, \tuples}$ and $v_2 = \denot{\stoint(E_2)}_{\schema, \context, \tuples}$ by Figure~\ref{fig:encodings-expressions}.
$\denot{E_1 \allarith E_2}_{\interpretation(\context), \interpretation(\tuples)} = \site(v_1' =\nullv \lor v_2' =\nullv, \nullv, v_1' \allarith v_2')$ where $v_1' = \denot{\stoint(E_1)}_{\interpretation(\context), \interpretation(\tuples)}$ and $v_2' = \denot{\stoint(E_2)}_{\interpretation(\context), \interpretation(\tuples)}$ by Figure~\ref{fig:semantics}.
By inductive hypothesis,  we have $\interpretation(v_1) = \interpretation(\denot{\stoint(E_1)}_{\schema, \context, \tuples}) = \denot{\stoint(E_1)}_{\interpretation(\context), \interpretation(\tuples)} = v_1'$ and $\interpretation(v_2) = \interpretation(\denot{\stoint(E_2)}_{\schema, \context, \tuples}) = \denot{\stoint(E_2)}_{\interpretation(\context), \interpretation(\tuples)} = v_2'$.
Therefore, 
\[
\begin{array}{rcl}
\interpretation(\denot{E_1 \allarith E_2}_{\schema, \context, \tuples})
&=& \interpretation(\site(v_1 =\nullv \lor v_2 =\nullv, \nullv, v_1 \allarith v_2)) \\
&=& \site(\interpretation((v_1) =\nullv \lor \interpretation((v_2) =\nullv, \nullv, \interpretation((v_1) \allarith \interpretation((v_2)) \\
&=& \site(v_1' =\nullv \lor v_2' =\nullv, \nullv, v_1' \allarith v_2') \\
&=& \denot{E_1 \allarith E_2}_{\interpretation(\context), \interpretation(\tuples)} \\
\end{array}
\]

\item Inductive case: $E = \ssubstr(E_1, E_2, E_3)$. 

$\denot{\ssubstr(E_1, E_2, E_3)}_{\schema, \context, \tuples} = \site(e_1 = \nullv \lor \text{IsStr}(e_2) \lor \text{IsStr}(e_3), \nullv, s)$ where $e_i = \denot{E_i}_{\schema, \context, \tuples}$ for $1 \leq i \leq 3$, $e_1' = \denot{\stostr(e_1)}_{\schema, \context, \tuples}$, $l = \zlength(e_1')$, $e_2' = \denot{\stoint(e_2)}_{\schema, \context, \tuples}$, $e_3' = \denot{\stoint(e_3)}_{\schema, \context, \tuples}$, $v=\site(-l \leq e_2 <0, \site(0 < e_2' \leq l, e_2'-1, l+1))$, $s=\site(v=0 \lor v< -l \lor v>l \lor e_3'\leq 0, \emptystr, \site(e_3'\geq l-v, e_1'[v:l], e_1'[v:2v+e_3']))$ by Figure~\ref{fig:encodings-expressions}. \\
$\denot{\ssubstr(E_1, E_2, E_3)}_{\interpretation(\context), \interpretation(\tuples)} = \site(e_4 = \nullv \lor \text{IsStr}(e_5) \lor \text{IsStr}(e_6), \nullv, s)$ where $e_{i+3} = \denot{E_i}_{\interpretation(\context), \interpretation(\tuples)}$ for $1 \leq i \leq 3$, $e_4' = \denot{\stostr(e_4)}_{\interpretation(\context), \interpretation(\tuples)}$, $l' = \zlength(e_4')$, $e_5' = \denot{\stoint(e_5)}_{\interpretation(\context), \interpretation(\tuples)}$, $e_6' = \denot{\stoint(e_6)}_{\interpretation(\context), \interpretation(\tuples)}$, $v'=\site(-l \leq e_5 <0, \site(0 < e_5' \leq l, e_5'-1, l+1))$, $s'=\site(v=0 \lor v< -l \lor v>l \lor e_6'\leq 0, \emptystr, \site(e_6'\geq l-v, e_4'[v:l], e_1'[v:2v+e_6']))$ by Figure~\ref{fig:semantics}. \\
By inductive hypothesis, we have $\interpretation(e_i) = \interpretation(\denot{E_i}_{\interpretation(\context), \interpretation(\tuples)})= \denot{E_i}_{\schema, \context, \tuples} = e_{i+3}$ for $1 \leq i \leq 3$. 
Then, $\interpretation(e_1') = \interpretation(\denot{\stostr(e_1)}_{\schema, \context, \tuples}) = \denot{\stostr(e_4)}_{\interpretation(\context), \interpretation(\tuples)} = e_4'$, $\interpretation(e_2') = \interpretation(\denot{\stoint(e_2)}_{\schema, \context, \tuples}) = \denot{\stoint(e_5)}_{\interpretation(\context), \interpretation(\tuples)} = e_5'$, and $\interpretation(e_3') = \interpretation(\denot{\stoint(e_3)}_{\schema, \context, \tuples}) = \denot{\stoint(e_6)}_{\interpretation(\context), \interpretation(\tuples)} = e_6'$, $\interpretation(v) = v'$, and $\interpretation(s) = s'$.
Furthermore, since the \zthree builtin function $\zlength$ precisely captures the semantics of $\text{len}$, we have $\interpretation(l) = \interpretation(\zlength(e_1')) = \text{len}(e_4') = l'$. Therefore, 
\[
\begin{array}{rcl}
\interpretation(\denot{\ssubstr(E_1, E_2, E_3)}_{\schema, \context, \tuples}) &=& \interpretation(\site(e_1 = \nullv \lor \text{IsStr}(e_2) \lor \text{IsStr}(e_3), \nullv, s)) \\
&=& \site(\interpretation(e_1) = \nullv \lor \interpretation(\text{IsStr}(e_2)) \lor \interpretation(\text{IsStr}(e_3)), \\
&& \hspace{10mm} \nullv, \interpretation(s)) \\
&=& \site(e_4 = \nullv \lor \text{IsStr}(e_5) \lor \text{IsStr}(e_6)), s') \\
&=& \denot{\ssubstr(E_1, E_2, E_3)}_{\interpretation(\context), \interpretation(\tuples)} \\
\end{array}
\]

\revise{

\item Inductive case: $\phi = \sconcat(E_1, E_2)$.

$\denot{\sconcat(E_1, E_2)}_{\schema, \context, \tuples} = \site(v_1 = \nullv \lor v_2 = \nullv, \bot, \zconcat(v_1, v_2))$ where $v_1 = \denot{\stostr(E_1)}_{\schema, \context, \tuples}$ and $v_2 = \denot{\stostr(E_2)}_{\schema, \context, \tuples}$ by Figure~\ref{fig:encodings-expressions}.
$\denot{\sconcat(E_1, E_2)}_{\interpretation(\context), \interpretation(\tuples)} = \site(v_1' = \nullv \lor v_2' = \nullv, \bot, \zconcat(v_1', v_2'))$ where $v_1' = \denot{\stostr(E_1)}_{\interpretation(\context), \interpretation(\tuples)}$ and $v_2' = \denot{\stostr(E_2)}_{\interpretation(\context), \interpretation(\tuples)}$ by Figure~\ref{fig:semantics}.
By inductive hypothesis, we have $\interpretation(v_1) = \interpretation(\denot{E_1}_{\schema, \context, \tuples}) = \denot{E_1}_{\interpretation(\context), \interpretation(\tuples)} = v_1'$ and $\interpretation(v_2) = \interpretation(\denot{E_2}_{\schema, \context, \tuples}) = \denot{E_2}_{\interpretation(\context), \interpretation(\tuples)} = v_2'$. 
Furthermore,  by the semantics of $\zconcat$, $\interpretation(\zconcat) = \text{Concate}$.
Therefore,
\[
\begin{array}{rcl}
\interpretation(\denot{\sconcat(E_1, E_2)}_{\schema, \context, \tuples})
&=& \interpretation(\site(v_1 = \nullv \lor v_2 = \nullv, \bot, \zconcat(v_1, v_2))) \\
&=& \site(\interpretation(v_1) = \nullv \lor \interpretation(v_2) = \nullv, \bot, \\
&& \hspace{10mm} \interpretation(\zconcat)(\interpretation(v_1), \interpretation(v_2))) \\
&=& \site(v_1' = \nullv \lor v_2' = \nullv, \bot, \text{Concat}(v_1', v_2')) \\
&=& \denot{\sconcat(E_1, E_2)}_{\interpretation(\context), \interpretation(\tuples)} \\
\end{array}
\]

}

\item Inductive case: $E = \sstrftime(\kappa, E)$. 

$\denot{\sstrftime(\kappa, E)}_{\schema, \context, \tuples} = \site(v = \nullv, \nullv, \site(\kappa = \text{``\%Y''}, v[0], \site(\kappa = \text{``\%M''}, v[1], v[2])))$ where $v = \denot{E}_{\schema, \context, \tuples}$ by Figure~\ref{fig:encodings-expressions}.
$\denot{\sstrftime(\kappa, E)}_{\interpretation(\context), \interpretation(\tuples)} = \site(v = \nullv, \nullv, \site(\kappa = \text{``\%Y''}, v[0], \site(\kappa = \text{``\%M''}, v[1], v[2])))$ where $v = \denot{E}_{\interpretation(\context), \interpretation(\tuples)}$ by Figure~\ref{fig:semantics}.
By inductive hypothesis, we have $\interpretation(v) = \interpretation(\denot{E}_{\schema, \context, \tuples}) = \denot{E}_{\interpretation(\context), \interpretation(\tuples)} = v'$. Therefore, 
\[
\begin{array}{rcl}
\interpretation(\denot{\sstrftime(\kappa, E)}_{\schema, \context, \tuples})
&=& \interpretation(\site(v = \nullv, \nullv, \site(\kappa = \text{``\%Y''}, v[0], \\
&& \hspace{10mm} \site(\kappa = \text{``\%M''}, v[1], v[2])))) \\

&=& \site(\interpretation(v) = \nullv, \nullv, \site(\kappa = \text{``\%Y''}, \interpretation(v)[0], \\
&& \hspace{10mm} \site(\kappa = \text{``\%M''}, \interpretation(v)[1], \interpretation(v)[2]))) \\

&=& \site(v' = \nullv, \nullv, \site(\kappa = \text{``\%Y''}, v'[0], \\
&& \hspace{10mm} \site(\kappa = \text{``\%M''}, v'[1], v'[2]))) \\

&=& \denot{\sstrftime(\kappa, E)}_{\interpretation(\context), \interpretation(\tuples)} \\

\end{array}
\]

\item Inductive case: $E = \sjulianday(E)$. 

$\denot{\sjulianday(E)}_{\schema, \context, \tuples} = \text{ToJulianDay}(v)$ where $v = \denot{E}_{\schema, \context, \tuples}$ if $v$ is evaluated to be a date by Figure~\ref{fig:encodings-expressions}.
Also, $\text{ToJulianDay}(v) = \lfloor 365.25 * (y+4716) \rfloor + \lfloor 30.6001 * (m+4716) \rfloor + d + c - 1524.5$ where $y = v[1] \leq 2? v[0]-1 \!:\! v[0]$, $m = v[1] \leq 2? v[1]+12 \!:\! v[1]$, $d = v[2]$, and $c = 2- \lfloor y/100 \rfloor + \lfloor y/400 \rfloor$.
$\denot{\sjulianday(E)}_{\interpretation(\context), \interpretation(\tuples)} = a_1 + a_2 + d' + c' - 1524.5$ where $v' = \denot{E}_{\interpretation(\context), \interpretation(\tuples)}$, $y' = \site(v'[1]\leq 2, v'[0]-1, v'[0])$, $m' = \site(v'[1]\leq 2, v'[1]+12, v'[1])$, $d' =v'[2]$, $c'=2-\text{fdiv}(y',100) + \text{fdiv}(y',400)$, $a_1=\text{fdiv}(36525*(y'+4716),10^2)$, and $a_2=\text{fdiv}(306001*(m'+1),10^4)$ if $v_1$ is evaluated to be a date by Figure~\ref{fig:semantics}.
By inductive hypothesis, we have $\interpretation(v)= \interpretation(\denot{E}_{\schema, \context, \tuples}) = \denot{E}_{\interpretation(\context), \interpretation(\tuples)} = v'$. Furthermore, by the semantics of \text{fdiv}, $\interpretation(\lfloor 365.25 * (y+4716) \rfloor) = a_1$ and $\interpretation(\lfloor 30.6001 * (m+4716) \rfloor) = a_2$.
Therefore,
\[
\begin{array}{rcl}
\interpretation(\denot{\sjulianday(E)}_{\schema, \context, \tuples})
&=& \interpretation(\text{ToJulianDay}(v)) \\
&=& \interpretation(\lfloor 365.25 * (y+4716) \rfloor + \lfloor 30.6001 * (m+4716) \rfloor \\
&& \hspace{10mm} + d + c - 1524.5) \\
&=& a_1 + a_2 + d' + c' - 1524.5 \\

&=& \denot{\sjulianday(E)}_{\interpretation(\context), \interpretation(\tuples)} \\

\end{array}
\]

\revise{
\item Inductive case: $E = \sdateshift(E, i, \delta)$. 

$\denot{\sdateshift(E, i, \delta)}_{\schema, \context, \tuples} = \text{DateAdd}(v, i, \delta)$ where $v = \denot{\stodate(E)}_{\schema, \context, \tuples}$ if $v$ is evaluated to be a date by Figure~\ref{fig:encodings-expressions}.
Also, $\text{DateAdd}(v, i, \delta)$ is defined as follows:
\begin{enumerate}
\item \label{sdateshift_case1}
If $\delta = \text{``Year''}$, then $\text{DateAdd}(v, i, \delta) = \site(v'[0] < \text{MIN\_YEAR} \lor \text{MAX\_YEAR} < v'[0], \nullv, \nullv, v')$ where $v'=[v[0]+i, v[1], v[2]]$ as $i$ can be negative and dates falling outside the valid date range are regarded as $\nullv$.

\item \label{sdateshift_case2}
If $\delta = \text{``Month''}$, then $\text{DateAdd}(v, i, \delta) = \site(v'[0] < \text{MIN\_YEAR} \lor \text{MAX\_YEAR} < v'[0], \nullv, \nullv, v')$ where $v'=[v[0]+\text{fdiv}(v[1] + i, 12), (v[1] + i) \% 12, v[2]]$.

\item \label{sdateshift_case3}
If $\delta = \text{``Day''}$,  then $\text{DateAdd}(v, i, \delta) = \site(v' < \text{MIN\_DATE} \land v' > \text{MAX\_DATE}, \nullv, v')$ where $v'$ is a new data variable s.t. $\text{SinceBegin}(v') - \text{SinceBegin}(v) = i$. In addition, the \text{SinceBegin} function counts the ordinal number of a date from a certain day, e.g., ``0000-01-01'', which can be defined as $\text{SinceBegin}(y, m, d)  =  \text{year2day}(y) + \text{month2day}(m) + d$ where 
\[
\footnotesize
\begin{array}{rcl}
\text{year2day}(y) &=& 365 \times y-1 + \text{fdiv}(y-1, 4) - \text{fdiv}(y-1, 100) + \text{fdiv}(y-1, 400) \\
\text{month2day}(m) &=& \sum_{i =1}^{m-1} \site(i \in \{1,3,5,7,8,10,12\}, 31, \\
&& \hspace{20mm} \site(i = 2, 28 + \site(\text{leap}(y), 1, 0), 30))\\
\end{array}
\].
Thus, $\text{SinceBegin}(v') - \text{SinceBegin}(v) = i$ ensure the date $v'$ is $i$ days away from the date $v$.

\end{enumerate}

$\denot{\sdateshift(E)}_{\interpretation(\context), \interpretation(\tuples)} = \site(\delta = \text{``Year''}, \text{DateShiftByYears}(v, i), \site(\delta = \text{``Month''}, \text{DateShiftByMonths}(v, i), \text{DateShiftByDays}(v, i)))$ where $v_1 = \denot{E}_{\interpretation(\context), \interpretation(\tuples)}$ if $v_1$ is evaluated to be a date by Figure~\ref{fig:semantics}.
By the semantics of the \text{DateShiftByYears}, \text{DateShiftByMonths} and \text{DateShiftByDays} functions, they corresponding to the case \ref{sdateshift_case1}, \ref{sdateshift_case2} and \ref{sdateshift_case3}.
Therefore, 
\[
\begin{array}{rcl}
\interpretation(\denot{\sdateshift(E, i, \delta)}_{\schema, \context, \tuples})
&=& \interpretation(\site(\delta = \text{``Year''}, \text{DateShiftByYears}(v, i),\\
&& \hspace{5mm} \site(\delta = \text{``Month''}, \text{DateShiftByMonths}(v, i),\\
&& \hspace{10mm} \text{DateShiftByDays}(v, i)))) \\
&=& \denot{\sdateshift(E, i, \delta)}_{\interpretation(\context), \interpretation(\tuples)} \\
\end{array}
\]
}

\end{enumerate}
\end{proof}


\begin{restatable}[Correctness of predicate encoding]{theorem}{correctpredicate}
\label{thm:predicate}
Let $\db$ be a database over schema $\schema$, $xs$ be a tuple list, and $\phi$ be a predicate.
Consider a symbolic database $\context$ over $\schema$, a list of symbolic tuples $\tuples$, and $\phi$'s symbolic encoding $\denot{\phi}_{\schema, \context, \tuples}$.
For any satisfying interpretation $\interpretation$ with $\interpretation(\context) = \db \land \interpretation(\tuples) = xs$, evaluating $\phi$ over the database $\db$ and the tuple list $xs$ yields the interpretation of $\phi$'s symbolic encoding $\interpretation(\denot{\phi}_{\schema, \context, \tuples})$, i.e., 
\[
\interpretation(\context) = \db \land \interpretation(\tuples) = xs \Rightarrow \denot{\phi}_{\db, xs} = \interpretation(\denot{\phi}_{\schema, \context, \tuples})
\]
\end{restatable}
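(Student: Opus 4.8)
The plan is to prove Theorem~\ref{thm:predicate} by structural induction on the predicate $\phi$, mirroring the argument used for Theorem~\ref{thm:expression} via Lemma~\ref{lem:proof_expression}. First I would state the analogous auxiliary lemma: assuming $\denot{\phi}_{\db, xs} = b$ (a concrete Boolean or $\nullv$, since a predicate evaluates into $\text{Bool} \cup \nullv$), the goal reduces to showing simultaneously that $\denot{\phi}_{\interpretation(\context), \interpretation(\tuples)} = b$ and $\interpretation(\denot{\phi}_{\schema, \context, \tuples}) = b$. The single observation that drives every case is that the interpretation map $\interpretation$ commutes with the symbolic if-then-else constructor $\site$, with the logical connectives, and with the \zthree built-in functions, so that $\interpretation$ can be pushed inward toward the leaves where the inductive hypothesis (or Theorem~\ref{thm:expression}) applies.

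For the base cases and the connective cases ($b$, $\nullv$, $\text{IsNull}(E)$, the membership predicates $\vec{E} \in \vec{v}$ and $\vec{E} \in Q$, and $\phi \land \phi$, $\phi \lor \phi$, $\neg \phi$) I would reuse the proofs already established for \verieql, since these predicates are unchanged by our extension. The new work concerns the four extended forms in Figure~\ref{fig:encodings-predicates}: the string predicates $\sprefixof(s,E)$, $\ssuffixof(s,E)$, $\slike(s,E)$, and the comparison $E_1 \alllogic E_2$. Each of these is a composite whose sub-components are expressions, so I would invoke Theorem~\ref{thm:expression} as part of the inductive hypothesis to obtain $\interpretation(\denot{E}_{\schema,\context,\tuples}) = \denot{E}_{\interpretation(\context),\interpretation(\tuples)}$ for every nested expression $E$, including coerced values such as $\denot{\stostr(E)}_{\schema,\context,\tuples}$. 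For a string predicate like $\sprefixof(s,E)$, I would let $v = \denot{\stostr(E)}_{\schema,\context,\tuples}$, push $\interpretation$ through the guard $\site(v=\nullv,\nullv,\zprefixof(s,v))$, use the hypothesis to rewrite $\interpretation(v)$ as its concrete counterpart $v'$, and match the result to the semantics in Figure~\ref{fig:semantics}. The comparison case additionally requires threading the side-condition $\text{Type}(v_1)=\text{Type}(v_2)$ and checking that the $\nullv$-guard producing $\bot$ agrees on both sides.

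The main obstacle is justifying that the \zthree built-in predicates faithfully capture their concrete counterparts under interpretation: specifically $\interpretation(\zprefixof(s,v)) = \text{PrefixOf}(s,\interpretation(v))$, $\interpretation(\zsuffixof(s,v)) = \text{SuffixOf}(s,\interpretation(v))$, and most delicately $\interpretation(\zregex(s,v)) = \text{RegexMatch}(s,\interpretation(v))$ for the $\slike$ case, where the SQL \texttt{LIKE} pattern $s$ must be translated into exactly the regular expression whose semantics \zthree implements. As in the earlier proof, I would discharge this by appealing to the correctness of the \zthree string and regex theory, treating these as primitives that precisely encode the corresponding concrete operations, and confirming that the NULL-propagation wrapper $\site(v=\nullv,\nullv,\cdot)$ reproduces three-valued logic identically in the encoding and the semantics. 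Once these primitive-level equalities are in hand, each inductive step is a routine commutation of $\interpretation$ with $\site$, completing the induction.
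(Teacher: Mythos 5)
Your proposal is correct and follows essentially the same route as the paper: structural induction on $\phi$ via an auxiliary lemma mirroring Lemma~\ref{lem:proof_expression}, reuse of the \verieql proofs for the unchanged predicate forms, appeal to the expression-correctness theorem for nested (coerced) expressions, and discharge of the primitive equalities by treating the \zthree string/regex built-ins as precisely capturing $\text{PrefixOf}$, $\text{SuffixOf}$, and $\text{RegexMatch}$, with $\interpretation$ commuting through $\site$ and the $\nullv$-guards. The paper's proof in Appendix~\ref{app:proof} carries out exactly these four inductive cases ($\sprefixof$, $\ssuffixof$, $\slike$, and $E_1 \alllogic E_2$, including the $\text{Type}(v_1)=\text{Type}(v_2)$ side-condition) in the same way.
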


\begin{lemma}\label{lem:proof_predicate}
Suppose $\denot{\phi}_{\db, xs}$ is valid, then $\interpretation(\context) = \db \land \interpretation(\tuples) = xs \Rightarrow \denot{\phi}_{\interpretation(\context), \interpretation(\tuples)} = \interpretation(\denot{\phi}_{\schema, \context, \tuples})$ holds.
\end{lemma}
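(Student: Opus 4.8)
The plan is to prove Theorem~\ref{thm:predicate} through Lemma~\ref{lem:proof_predicate}, mirroring how Theorem~\ref{thm:expression} was obtained from Lemma~\ref{lem:proof_expression}. Since $\interpretation(\context)=\db$ and $\interpretation(\tuples)=xs$, the left-hand side $\denot{\phi}_{\db,xs}$ coincides with $\denot{\phi}_{\interpretation(\context),\interpretation(\tuples)}$, so it suffices to establish $\denot{\phi}_{\interpretation(\context),\interpretation(\tuples)} = \interpretation(\denot{\phi}_{\schema,\context,\tuples})$, which I would prove by structural induction on $\phi$. The boolean base cases ($b$, $\nullv$), the null test $\text{IsNull}(E)$, the membership predicates $\vec{E}\in\vec{v}$ and $\vec{E}\in Q$, and the connectives $\phi\land\phi$, $\phi\lor\phi$, $\neg\phi$ are already handled in \verieql, so I would cite those and concentrate on the newly added constructs: the string predicates $\sprefixof(s,E)$, $\ssuffixof(s,E)$, $\slike(s,E)$, and the comparison $E_1 \alllogic E_2$ over the extended value domain.

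For each string predicate the argument is uniform and follows the template of the expression cases. Take $\sprefixof(s,E)$: its encoding (Fig.~\ref{fig:encodings-predicates}) is $\site(v=\nullv,\nullv,\zprefixof(s,v))$ with $v=\denot{\stostr(E)}_{\schema,\context,\tuples}$, while its semantics (Fig.~\ref{fig:semantics}) is $\site(v'=\nullv,\nullv,\text{PrefixOf}(s,v'))$ with $v'=\denot{\stostr(E)}_{\interpretation(\context),\interpretation(\tuples)}$. The key step is to invoke Theorem~\ref{thm:expression} on the subexpression $\stostr(E)$ to obtain $\interpretation(v)=v'$; I then push $\interpretation$ through the conditional, using that $\interpretation$ distributes over $\site$ and that $\interpretation(v)=\nullv \iff v'=\nullv$, and finally use that the \zthree built-in $\zprefixof$ denotes exactly the concrete function $\text{PrefixOf}$, so $\interpretation(\zprefixof(s,v))=\text{PrefixOf}(s,v')$. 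The cases $\ssuffixof$ and $\slike$ are identical, replacing $\zprefixof/\text{PrefixOf}$ with $\zsuffixof/\text{SuffixOf}$ and $\zregex/\text{RegexMatch}$ respectively.

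For the comparison $E_1 \alllogic E_2$, both encoding and semantics have the shape $\site(v_1=\nullv \lor v_2=\nullv, \bot, v_1 \alllogic v_2)$ guarded by the type-compatibility side condition $\text{Type}(v_1)=\text{Type}(v_2)$, with $v_i=\denot{E_i}$. Here I apply Theorem~\ref{thm:expression} to each $E_i$ to get $\interpretation(v_i)=v_i'$, then push $\interpretation$ through the conditional as before; the null-propagation clause and the comparison clause both commute with $\interpretation$ because the relational operators in $\alllogic$ are interpreted identically in the symbolic and concrete semantics, and the type guard is preserved since $\text{Type}$ is determined by the (now matching) values.

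I anticipate two main obstacles. First, every predicate case is parasitic on the expression theorem: each string predicate and each comparison reduces through the coercions $\stostr$ and $\stoint$, whose correctness is Theorem~\ref{thm:expression}, so the induction is effectively nested, and I must be careful to verify that its hypotheses (the same $\interpretation$, $\context$, $\tuples$) are met rather than re-deriving it. Second, and most delicate, is the semantic-fidelity claim that the \zthree built-ins faithfully model the concrete functions — in particular that $\zregex$ encodes SQL's \texttt{LIKE} matching ($\text{RegexMatch}$), including the translation of the \texttt{\%} and \texttt{\_} wildcards into the regular expression and the handling of the $\nullv$ argument. This step appeals to an external specification of the solver's theory of strings rather than to structural bookkeeping, and it is where the proof must be argued most carefully.
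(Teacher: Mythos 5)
Your proposal is correct and follows essentially the same route as the paper's proof: structural induction on $\phi$, deferring the previously supported constructs to \verieql, and handling $\sprefixof(s,E)$, $\ssuffixof(s,E)$, $\slike(s,E)$, and $E_1 \alllogic E_2$ by first obtaining $\interpretation(v)=v'$ for the coerced subexpressions, then pushing $\interpretation$ through the conditionals, and finally appealing to the fidelity of the \zthree string built-ins ($\zprefixof$, $\zsuffixof$, $\zregex$) and the preserved type guard. The only presentational difference is that you cite Theorem~\ref{thm:expression} explicitly where the paper writes ``by inductive hypothesis''; since the subterms there are expressions rather than predicates, your phrasing is arguably the more precise account of the very same step.
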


\begin{proof}[Proof.]
Theorem~\ref{thm:predicate} is proved by proving Lemma~\ref{lem:proof_predicate}.
By structural induction on $\phi$. 
\begin{enumerate}

\item Base cases and some inductive cases are proved in \cite{verieql}.

\item Inductive case: $\phi = \sprefixof(s, E)$.

$\denot{\sprefixof(s, E)}_{\schema, \context, \tuples} = \site(v = \nullv, \nullv, \zprefixof(s, v))$ where $v = \denot{\stostr(E)}_{\schema, \context, \tuples}$ by Figure~\ref{fig:encodings-predicates}. 
$\denot{\sprefixof(s, E)}_{\interpretation(\context), \interpretation(\tuples)} = \site(v' = \nullv, \nullv, \text{PrefixOf}(s, v'))$ where $v' = \denot{\stostr(E)}_{\interpretation(\context), \interpretation(\tuples)}$ by Figure~\ref{fig:semantics}.
By inductive hypothesis, we have $\interpretation(v) = \interpretation(\denot{\stostr(E)}_{\interpretation(\context), \interpretation(\tuples)}) = \denot{\stostr(E)}_{\schema, \context, \tuples} = v'$. Furthermore, since the $\zthree$ builtin function $\zprefixof$ precisely captures the semantics of $\text{PrefixOf}$, we have $\interpretation(\zprefixof(s, v)) = \interpretation(\zprefixof(s, \denot{\stostr(E)}_{\schema, \context, \tuples})) = \text{PrefixOf}(s, \interpretation(\denot{\stostr(E)}_{\schema, \context, \tuples})) = \text{PrefixOf}(s, \denot{\stostr(E)}_{\interpretation(\context), \interpretation(\tuples)}) = \text{PrefixOf}(s, v')$ . Therefore, 
\[
\begin{array}{rcl}
\interpretation(\denot{\sprefixof(s, E)}_{\schema, \context, \tuples})
&=& \interpretation(\site(v = \nullv, \nullv, \zprefixof(s, v))) \\
&=& \site(\interpretation(v) = \nullv, \nullv, \interpretation(\zprefixof(s, v))) \\
&=& \site(v' = \nullv, \nullv, \text{PrefixOf}(s, v')) \\
&=& \denot{\sprefixof(s, E)}_{\interpretation(\context), \interpretation(\tuples)} \\
\end{array}
\]

\item Inductive case: $\phi = \ssuffixof(s, E)$.

$\denot{\ssuffixof(s, E)}_{\schema, \context, \tuples} = \site(v = \nullv, \nullv, \zsuffixof(s, v))$ where $v = \denot{\stostr(E)}_{\schema, \context, \tuples}$ by Figure~\ref{fig:encodings-predicates}. 
$\denot{\ssuffixof(s, E)}_{\interpretation(\context), \interpretation(\tuples)} = \site(v' = \nullv, \nullv, \text{SuffixOf}(s, v'))$ where $v' = \denot{\stostr(E)}_{\interpretation(\context), \interpretation(\tuples)}$ by Figure~\ref{fig:semantics}.
By inductive hypothesis, we have $\interpretation(v) = \interpretation(\denot{\stostr(E)}_{\interpretation(\context), \interpretation(\tuples)}) = \denot{\stostr(E)}_{\schema, \context, \tuples} = v'$. Furthermore, since the $\zthree$ builtin function $\zsuffixof$ precisely captures the semantics of $\text{SuffixOf}$, we have $\interpretation(\zsuffixof(s, v)) = \interpretation(\zsuffixof(s, \denot{\stostr(E)}_{\schema, \context, \tuples})) = \text{SuffixOf}(s, \interpretation(\denot{\stostr(E)}_{\schema, \context, \tuples})) = \text{SuffixOf}(s, \denot{\stostr(E)}_{\interpretation(\context), \interpretation(\tuples)}) = \text{SuffixOf}(s, v')$ . Therefore, 
\[
\begin{array}{rcl}
\interpretation(\denot{\ssuffixof(s, E)}_{\schema, \context, \tuples})
&=& \interpretation(\site(v = \nullv, \nullv, \zsuffixof(s, v))) \\
&=& \site(\interpretation(v) = \nullv, \nullv, \interpretation(\zsuffixof(s, v))) \\
&=& \site(v' = \nullv, \nullv, \text{SuffixOf}(s, v')) \\
&=& \denot{\ssuffixof(s, E)}_{\interpretation(\context), \interpretation(\tuples)}\\
\end{array}
\]

\item Inductive case: $\phi = \slike(s, E)$.

$\denot{\slike(s, E)}_{\schema, \context, \tuples} = \site(v = \nullv, \bot, \zregex(s))$ where $v = \denot{\stostr(E)}_{\schema, \context, \tuples}$ by Figure~\ref{fig:encodings-predicates}.
$\denot{\slike(s, E)}_{\interpretation(\context), \interpretation(\tuples)} = \site(v' = \nullv, \nullv, \text{RegexMatch}(s, v'))$ where $v' = \denot{\stostr(E)}_{\interpretation(\context), \interpretation(\tuples)}$ by Figure~\ref{fig:semantics}.
By inductive hypothesis, we have $\interpretation(v) = \interpretation(\denot{\stostr(E)}_{\schema, \context, \tuples}) = \denot{\stostr(E)}_{\interpretation(\context), \interpretation(\tuples)} = v'$.
Furthermore, since \zthree precisely support regular expressions, we have $\interpretation(\zregex(s, v)) = \interpretation(\zregex(s, \denot{\stostr(E)}_{\schema, \context, \tuples})) = \text{RegexMatch}(s, \interpretation(\denot{\stostr(E)}_{\schema, \context, \tuples})) = \text{RegexMatch}(s, \denot{\stostr(E)}_{\interpretation(\context), \interpretation(\tuples)}) = \text{RegexMatch}(s, v')$.
Therefore,
\[
\begin{array}{rcl}
\interpretation(\denot{\slike(s, E)}_{\schema, \context, \tuples})
&=& \interpretation(\site(v = \nullv, \bot, \zregex(s, v))) \\
&=& \site(\interpretation(v) = \nullv, \bot, \interpretation(\zregex(s, v))) \\
&=& \site(v' = \nullv, \bot, \text{RegexMatch}(s, v')) \\
&=& \denot{\slike(s, E)}_{\interpretation(\context), \interpretation(\tuples)} \\
\end{array}
\]

\revise{
\item Inductive case: $\phi = \scontain(s, E)$.

$\denot{\scontain(s, E)}_{\schema, \context, \tuples}  = \site(v = \nullv, \bot, \zregex(s', v))$ where $s' = \text{Concate}(\text{``.*''}, s, \text{``.*''})$ and $v = \denot{\stostr(E)}_{\schema, \context, \tuples}$ by Figure~\ref{fig:encodings-predicates}.
$\denot{\scontain(s, E)}_{\interpretation(\context), \interpretation(\tuples)}  = \denot{\slike(s'', E)}_{\interpretation(\context), \interpretation(\tuples)} = \site(v' =\nullv, \nullv, \text{RegexMatch}(s'', v'))$ where $s'' = \text{Concate}(\text{``\%''}, s, \text{``\%''})$ and $v' = \denot{\stostr(E)}_{\interpretation(\context), \interpretation(\tuples)}$ by Figure~\ref{fig:semantics}.
Futhermore, by the semantics of \zregex and \text{RegexMatch}, we know $s'$ and $s''$ represent the same regular expression, and $\interpretation(\zregex(s', x')) = \text{RegexMatch}(s'', x'')$ iff $\interpretation(x') = x''$.
Therefore, 
\[
\begin{array}{rcl}
\interpretation(\denot{\scontain(s, E)}_{\schema, \context, \tuples})
&=& \interpretation(\site(v = \nullv, \bot, \zregex(s', v))) \\
&=& \site(\interpretation(v) = \nullv, \bot, \interpretation(\zregex(s', v))) \\
&=& \site(v' = \nullv, \bot, \text{RegexMatch}(s'', v')) \\
&=& \denot{\slike(s'', E)}_{\interpretation(\context), \interpretation(\tuples)} \\
&=& \denot{\scontain(s, E)}_{\interpretation(\context), \interpretation(\tuples)} \\
\end{array}
\]
}

\item Inductive case: $\phi = E_1 \alllogic E_2$.

$\denot{E_1 \alllogic E_2}_{\schema, \context, \tuples} = \site(v_1 = \nullv \lor v_2 = \nullv, \bot, v_1 \alllogic v_2)$ where $v_1 = \denot{E_1}_{\schema, \context, \tuples}$ and $v_2 = \denot{E_2}_{\schema, \context, \tuples}$ if $v_1$ and $v_2$ share the same type, i.e., $\text{Type}(v_1) = \text{Type}(v_2)$ by Figure~\ref{fig:encodings-predicates}.
$\denot{E_1 \alllogic E_2}_{\interpretation(\context), \interpretation(\tuples)} = \site(v_1' = \nullv \lor v_2' = \nullv, \bot, v_1' \alllogic v_2')$ where $v_1 = \denot{E_1}_{\interpretation(\context), \interpretation(\tuples)}$ and $v_2' = \denot{E_2}_{\interpretation(\context), \interpretation(\tuples)}$ if $v_1'$ and $v_2'$ share the same type, i.e., $\text{Type}(v_1') = \text{Type}(v_2')$ by Figure~\ref{fig:semantics}.
Note that this operation only works for $E_1$ and $E_2$ sharing the same type which is consistent with \mysql.
By inductive hypothesis, we have $\interpretation(v_1) = \interpretation(\denot{E_1}_{\interpretation(\context), \interpretation(\tuples)}) = \denot{E_1}_{\schema, \context, \tuples} = v_1'$ and $\interpretation(v_2) = \interpretation(\denot{E_2}_{\interpretation(\context), \interpretation(\tuples)}) = \denot{E_2}_{\schema, \context, \tuples} = v_2'$. Therefore, when $E_1$ and $E_2$ have the same type, we have
\[
\begin{array}{rcl}
\interpretation(\denot{E_1 \alllogic E_2}_{\schema, \context, \tuples})
&=& \interpretation(\site(v_1 = \nullv \lor v_2 = \nullv, \bot, v_1 \alllogic v_2)) \\
&=& \site(\interpretation(v_1) = \nullv \lor \interpretation(v_2) = \nullv, \bot, \interpretation(v_1) \alllogic \interpretation(v_2)) \\
&=& \site(v_1' = \nullv \lor v_2' = \nullv, \bot, v_1' \alllogic v_2') \\
&=& \denot{E_1 \alllogic E_2}_{\interpretation(\context), \interpretation(\tuples)}\\
\end{array}
\]


\end{enumerate}
\end{proof}

\equivset*
\begin{proof}[Proof.] 

Let $F_1$ be the first conjunct of formula~(\ref{eq:set}), i.e., $\bigwedge_{i=1}^{n} (\neg \del (t_i) \rightarrow \lor_{j=1}^m(\neg \del(r_j) \land t_i = r_j)$, 
and let $F_2$ be the second conjunct of formula~(\ref{eq:set}), i.e., $\bigwedge_{j=1}^{m} (\neg \del (r_j) \rightarrow \lor_{i=1}^n(\neg \del(t_i) \land r_j = t_i)$.
Since formula~(\ref{eq:set}) is valid, both $F_1$ and $F_2$ are valid.
Now consider $F_1$. It specifies for any tuple $t_i \in R_1$, if $t_i$ is not deleted, then there exists a tuple $r_j$ that is not deleted and $t_i = r_j$. By the definition of $\subseteq$, $R_1 \subseteq R_2$.
Similarly, $F_2$ specifies $R_2 \subseteq R_1$.
Therefore, $R_1 = R_2$.


\end{proof}

\end{document}